\theoremstyle{plain}
\newtheorem{theorem}{Theorem}[section]
\newtheorem{proposition}[theorem]{Proposition}
\newtheorem{lemma}[theorem]{Lemma}
\newtheorem{example}[theorem]{Example}
\newtheorem{corollary}[theorem]{Corollary}
\theoremstyle{definition}
\newtheorem{definition}[theorem]{Definition}
\newtheorem{assumption}[theorem]{Assumption}
\theoremstyle{remark}
\newtheorem{remark}[theorem]{Remark}
\newcommand{\R}{\mathbb R}
\newcommand{\EE}{\mathbb{E}}
\newcommand{\PP}{\mathbb{P}}
\DeclareMathOperator*{\argmin}{argmin}
\DeclareMathOperator*{\argmax}{argmax}
\newcommand{\bs}[1]{\ensuremath{\boldsymbol{#1}}}
\newcommand{\mc}{\mathcal}
\newcommand{\opt}{^\star}
\newcommand{\Belief}{\mathcal{B}}
\newcommand{\BR}{\Phi^{\rm BR}}
\newcommand{\veps}{\varepsilon}
\newcommand{\diff}{\textnormal{d}}
\definecolor{myblue}{rgb}{.8, .8, 1}
\definecolor{mathblue}{rgb}{0.2472, 0.24, 0.6} % mathematica's Color[1, 1--3]
\definecolor{mathred}{rgb}{0.6, 0.24, 0.442893}
\definecolor{mathyellow}{rgb}{0.6, 0.547014, 0.24}
\definecolor{bblye}{RGB}{60, 88 148}
\newcommand{\kibitz}[2]{\ifnum\Comments=1{\textcolor{#1}{\textsf{\footnotesize #2}}}\fi}
\definecolor{portlandorange}{rgb}{1.0, 0.35, 0.21}
\definecolor{aurometalsaurus}{rgb}{0.43, 0.5, 0.5}
\definecolor{blue(munsell)}{rgb}{0.0, 0.5, 0.69}
\title{Statistical Equilibrium of Optimistic Beliefs} 
\author[Gui]{Yu Gui}\thanks{University of Pennsylvania, The Wharton School Department of Statistics and Data Science, \url{yugui@wharton.upenn.edu}}
\author[Taşkesen]{Bahar Taşkesen}\thanks{University of Chicago, Booth School of Business, \url{bahar.taskesen@chicagobooth.edu}}
\begin{document}

\addtocontents{toc}{\protect\setcounter{tocdepth}{-1}}

\begin{abstract}
We study finite normal-form games in which payoffs are subject to random perturbations and players face uncertainty about how these shocks co-move across actions, an ambiguity that naturally arises when only realized (not counterfactual) payoffs are observed. We introduce the Statistical Equilibrium of Optimistic Beliefs (SE-OB), inspired by discrete choice theory.
We model players as \textit{optimistic better responders}: they face ambiguity about the dependence structure (copula) of payoff perturbations across actions and resolve this ambiguity by selecting, from a belief set, the joint distribution that maximizes the expected value of the best perturbed payoff. Given this optimistic belief, players choose actions according to the induced random-utility choice rule. We define SE-OB as a fixed point of this two-step response mapping.
SE-OB generalizes the Nash equilibrium and the structural quantal response equilibrium. We establish existence under standard regularity conditions on belief sets. For the economically important class of marginal belief sets, that is, the set of all joint distributions with fixed action-wise marginals, optimistic belief selection reduces to an optimal coupling problem, and SE-OB admits a characterization via Nash equilibrium of a smooth regularized game, yielding tractability and enabling computation.
We characterize the relationship between SE-OB and existing equilibrium notions and illustrate its empirical relevance in simulations, where it captures systematic violations of independence of irrelevant alternatives that standard logit-based models fail to explain.
\end{abstract}
\maketitle

\section{Introduction}

We study equilibrium in finite normal-form games when players face
\textit{dependence ambiguity} about payoff perturbations and resolve that ambiguity in an
\textit{optimistic} way. In many strategic environments, agents can form reasonably accurate
\textit{baseline} payoff predictions, yet the mapping from actions to realized payoffs remains noisy
and only partially understood. For example, while the marginal distribution of payoff noise for
each action may be estimable from past data, the \textit{joint} dependence of payoff shocks across
actions is typically not identified because counterfactual payoffs are not observed. We model this
as ambiguity over the joint distribution (copula) of payoff perturbations, holding the marginals
fixed.

Nash equilibrium provides a canonical benchmark for strategic interaction, but two considerations
motivate departures from the Nash paradigm. First, computing Nash equilibria in general finite
games is PPAD-complete \citep{daskalakis2009complexity, chen2009settling}. Second, in laboratory
and field settings, observed behavior often departs systematically from Nash predictions, consistent
with bounded rationality and/or misspecified beliefs about the payoff environment
\citep{lieberman1960human,brayer1964experimental,o1987nonmetric,goeree2001ten}.

A leading behavioral extension of Nash equilibrium is Quantal Response Equilibrium (QRE) \citep{mckelvey1995quantal},
which embeds a random-utility (payoff-perturbation) model into strategic interaction and defines
equilibrium as a fixed point of the resulting stochastic choice rules. QRE induces smooth
best-response mappings and is often empirically successful, but common \textit{logit} specifications
(e.g.,  i.i.d.\ extreme value shocks) impose strong restrictions on the joint distribution of payoff shocks, including Luce's independence of irrelevant alternatives (IIA), and can be sensitive to
misspecification.

We introduce a new equilibrium concept, the \textit{Statistical Equilibrium of Optimistic Beliefs
(SE-OB)}. In SE-OB, each player is endowed with a \textit{belief set} of plausible joint
distributions over payoff perturbations. Given the opponents' mixed strategies, the player selects
from this set the distribution that maximizes the expected \textit{maximum} (\textit{i.e.}, ``best'')
perturbed payoff across actions (an optimistic resolution of dependence ambiguity). The player
then plays the induced random-utility choice rule: each action is chosen with the probability that
it is optimal under the selected distribution. An SE-OB is a strategy profile in which each player's
mixed strategy is consistent with this two-step procedure, given the opponents' strategies. SE-OB
nests Nash equilibrium and standard structural QRE as special cases when belief sets collapse to
singletons.

A key feature of SE-OB is tractability for economically natural ambiguity sets. For the important
class of \textit{marginal belief sets} that fix the action-wise marginals
but leave dependence unrestricted, SE-OB induces a tractable regularized best-response mapping on
the simplex, yielding a smooth fixed-point problem amenable to computation and comparative
statics.

% \delbt{\subsection{A motivating example: pricing under forecasting ambiguity}
% Consider two competing retailers who repeatedly choose prices from a finite menu. Each firm can
% estimate the marginal distribution of profit forecast errors at each candidate price using historical
% residuals. However, it is difficult to identify how forecast errors \textit{co-move} across prices
% because the firm observes profit only for the price it posts; profits at unchosen prices are
% counterfactual in the same period. A natural belief set therefore fixes the marginals of forecast
% errors while leaving the dependence structure (the copula) unrestricted. In SE-OB, each firm
% resolves this dependence ambiguity optimistically, that is, it acts as if the unknown dependence is the
% one that makes the \textit{best} price option look most attractive in expectation, and then
% randomizes according to the probability each price is optimal under that selected dependence.
% Section~\ref{sec:SE-OB} formalizes this mechanism and shows that, for marginal belief sets, it
% induces a tractable regularized best-response mapping on the simplex.}

Consider the example of two competing retailers who repeatedly choose a price from a finite menu.
In each period, a retailer forms baseline profit forecasts for all candidate prices using historical data,
but it observes realized profit only for the price it actually posts. As a result, the retailer can learn the uncertainty associated with each individual price in isolation, that is, the marginal distribution of forecasting errors at each price. However, it cannot learn how these errors co-move across prices within the same market state, because profits at unchosen prices are counterfactual.
This distinction matters because dependence across prices encodes economically relevant substitution patterns. Some market states may cause one price point to perform exceptionally well while others underperform, and prices may be correlated through shared demand or cost shocks. Classical Nash equilibrium has no explicit role for such payoff perturbations, while standard logit-based quantal response models typically impose a restrictive dependence structure, such as independent shocks and independence of irrelevant alternatives. These assumptions can obscure precisely the cross-price correlations that are central to many pricing environments.

SE-OB, when specified through marginal belief sets, is designed precisely for this environment. It allows the retailer to retain what can be credibly inferred from data, namely the action-wise marginal distributions of payoff uncertainty, while remaining agnostic about what cannot be learned, namely the dependence structure across prices. The defining behavioral ingredient of SE-OB is optimism in resolving this dependence ambiguity.

Intuitively, when joint co-movement is unknown, the retailer acts as if uncertainty aligns in a way that
makes it likely that some price in the menu will look especially attractive. This form of optimism is not about inflating average payoffs, but about anticipating the possibility of a favorable realization among available alternatives. As Keynes wrote in \citep[Chapter 12]{Keynes:1936},
\begin{samepage}
\begin{quote}\it
...a large proportion of our positive activities depend on spontaneous optimism rather than on a
mathematical expectation, whether moral or hedonistic or economic.
\end{quote}
\end{samepage}
Under SE-OB, this optimism translates into a smooth randomized pricing rule: prices that are more
likely to emerge as the best option under favorable co-movement receive higher choice probabilities.

Our main contributions are as follows:
\begin{enumerate}[label=(\roman*), leftmargin=*]
\item We introduce the {Statistical Equilibrium of Optimistic Beliefs (SE-OB)} for finite normal-form games
with payoff perturbations whose {dependence structure} across actions is ambiguous.
Each player is endowed with a belief set over joint distributions of payoff shocks and
selects, given opponents' mixed strategies, the joint distribution that maximizes the expected value of the
{best} perturbed payoff across actions (optimistic belief selection). The player then plays the induced
random-utility (quantal) choice rule. SE-OB is defined as a fixed point of this two-step response mapping
and nests Nash equilibrium and structural QRE as special cases when belief sets collapse to singletons.
\item We establish the existence of SE-OB under convexity, compactness and moment conditions on belief
sets. Our formulation makes the induced stochastic choice well-defined for arbitrary joint laws of payoff
perturbations, thereby
providing a unified foundation encompassing the usual absolutely-continuous QRE specifications.
\item For the economically relevant class of {marginal belief sets} that fix the
action-wise marginals while leaving dependence unrestricted, we show that optimistic belief selection
reduces to an optimal coupling problem and that SE-OB coincides with Nash equilibrium of an associated
\textit{smooth regularized game} whose regularizer is pinned down by marginal quantile functions.
This characterization yields a tractable smooth fixed-point problem, enables computation and comparative
statics, and recovers familiar response maps (\textit{e.g.}, softmax/logit) as well as sparse choice rules (\textit{e.g.},
sparsemax) as special cases.
\item 
Beyond tractability, marginal belief sets arise as identified sets under realized-payoff feedback.
In a bandit environment, counterfactual payoffs are unobserved, so the copula of payoff perturbations is not
identified from the induced statistical experiment, even with arbitrary exploration policies, while the
one-dimensional marginals are point identified. Hence the sharp nonparametric identified set of joint laws is
the Fr\'echet class of all couplings consistent with the identified marginals, i.e., the marginal belief set.
Since the inclusive value $\EE[\max_i(u_i+\xi_i)]$ is sensitive to dependence, it is only partially identified over
this class; SE-OB corresponds to evaluating it at the upper envelope (the optimistic endpoint).

\item We analyze the empirical content of SE-OB. With free belief set specification sets, SE-OB inherits the classical
non-falsifiability of unrestricted QRE; in contrast, restricting attention to marginal belief sets restores falsifiability and yields testable restrictions on behavior.
We further illustrate empirically that SE-OB can capture systematic violations of Luce's IIA, including
``clone'' effects and sparse support patterns, which standard logit-based models fail to explain. 
\end{enumerate}

\noindent\textbf{Paper organization.}
The rest of the paper unfolds as follows.
Section~\ref{sec:lit-review} discusses related literature.
Section~\ref{sec:problem-setup} introduces the game-theoretic setup and recalls Nash equilibrium and QRE.
Section~\ref{sec:SE-OB} defines SE-OB, establishes existence, and derives the smooth-game characterization
for marginal belief sets.
Section~\ref{sec:mbs-foundations} provides informational foundations for marginal belief
sets via an identification analysis under realized-payoff feedback. 
Section~\ref{sec:falsifiability} studies empirical content and falsifiability of SE-OB.
Section~\ref{sec:empirical} presents empirical illustrations in simulations.
Section~\ref{sec:discuss} concludes and discusses limitations and extensions.
Technical proofs and additional computational details including a convergent numerical routine for SE-OB with marginal-belief-set case are deferred to the appendix.

\section{Literature Review}
\label{sec:lit-review}

Extending the classical Nash equilibrium framework, a large literature incorporates economic and psychological insights to account for systematic deviations from perfect rationality observed in laboratory and field behavior. In the Nash paradigm, players best respond to correct beliefs about others’ actions; empirically, however, play often departs from these predictions in stable and interpretable ways.

In two-player zero-sum games, for example, a series of studies \citep{lieberman1960human,brayer1964experimental,o1987nonmetric,brown1990testing,rapoport1992mixed} tests minimax theory. As noted by \citep{o1987nonmetric}, while aggregate frequencies can track minimax predictions, substantial heterogeneity across subjects suggests that many individuals deviate from minimax play-plausibly reflecting limits in attention and information processing. Relatedly, \citep{nagel1995unraveling} shows that players’ reasoning can be shaped by observed histories, and \citep{goeree2001ten} documents that changes in payoff magnitudes and structures can generate systematic departures from standard equilibrium predictions. These patterns motivated the development of behavioral game theory models that enrich the payoff–choice mapping using, among other ingredients, social preferences, bounded strategic thinking, and subjective beliefs \citep{camerer1997progress,gintis2005behavioral}.

A central stochastic-choice approach is the quantal response equilibrium (QRE) of \citep{mckelvey1995quantal}, which defines equilibrium as a fixed point of noisy (quantal) best responses. QRE has been applied successfully across many environments, but its empirical performance depends on correct statistical specification of how players deviate from perfect optimization, and can be sensitive to misspecification. Moreover, standard implementations often impose homogeneity in noise or responsiveness, motivating extensions that incorporate heterogeneity in choice behavior and beliefs \citep{stahl1994experimental,camerer2004cognitive,friedman2005random}. Recognizing the restrictiveness of perfect payoff expectations and optimization, \citep{goeree2004model} introduced noisy introspection, and \citep{rogers2009heterogeneous} extended QRE by allowing individual differences in payoff responsiveness.

A complementary line of work seeks foundations that place weaker structure on the payoff–choice relationship. \citep{goeree2021m} proposes minimal behavioral assumptions, including only a monotonic relationship between expected payoffs and choice frequencies, and an unbiasedness condition linking observed choices to underlying choice distributions. Building on these assumptions, $M$-equilibrium \citep{goeree2021m} extends Nash equilibrium by allowing set-valued predictions of choices and beliefs. This parameter-free, detail-free approach unifies equilibria derived from strategy perturbation \citep{selten1988reexamination} and payoff perturbation \citep{harsanyi1973games} and encompasses all quantal response equilibria. At the same time, the resulting profile sets may be overly permissive in some applications \citep{goeree2023s}, motivating equilibrium concepts that retain robustness while delivering more informative behavioral predictions.

Another research avenue extends Nash equilibrium through a Bayesian lens to settings with incomplete information about payoffs. Harsanyi’s framework for Bayesian games \citep{harsanyi1967games} assumes a well-specified payoff distribution. Subsequent work introduced ex post equilibrium \citep{cremer1985optimal,holmstrom1983efficient}, a distribution-free concept whose existence is not always guaranteed, and robust Nash equilibrium \citep{aghassi2006robust}, which evaluates payoffs in a worst-case manner over an uncertainty set. More recently, Bayesian games have been studied under distributional ambiguity via distributionally robust equilibria \citep{liu2024bayesian,liu2018distributionally,loizou2016distributionally,loizou2015distributionally}.

SE-OB connects these strands by endogenizing a player’s belief over payoff perturbations within an ambiguity set and allowing that ambiguity to be resolved optimistically. In particular, the principle that players select the perturbation distribution that maximizes their best-case payoff places SE-OB toward the optimistic end of the Hurwicz criterion \citep{hurwicz1951generalized}. Unlike a pure ``maximax'' rule that simply selects the best conceivable outcome, SE-OB selects a belief over a distributional ambiguity set, so the degree of optimism is governed by the chosen distribution. This perspective aligns with models of optimally biased beliefs \citep{brunnermeier2005optimal}, affective decision making \citep{bracha2012affective}, and the broader literature on motivated beliefs and overconfidence \citep{camerer1997progress}, which emphasize that agents may systematically overweight favorable states when forming beliefs.

\paragraph{Notation}We write $ \mathbb N_+ = \{1,2,3,\ldots\} $ for the set of positive integers. For any $ N \in \mathbb N_+ $, let $ [N] = \{1,\ldots,N\} $. For any $ M \in \mathbb N_+ $, let $ \boldsymbol y_i \in \mathbb R^N $ for all $ i \in [M] $, and denote their tuple by $
\boldsymbol Y = (\boldsymbol y_1,\ldots,\boldsymbol y_M).$
For $ i \in [M] $, let $ \boldsymbol Y_{-i} $ denote the tuple $
(\boldsymbol y_1,\ldots,\boldsymbol y_{i-1},\boldsymbol y_{i+1},\ldots,\boldsymbol y_M).$
Given $ \boldsymbol y_i' \in \mathbb R^N $, the tuple $ (\boldsymbol y_i',\boldsymbol Y_{-i}) $ denotes $
(\boldsymbol y_1,\ldots,\boldsymbol y_{i-1},\boldsymbol y_i',\boldsymbol y_{i+1},\ldots,\boldsymbol y_M).$
For any $ N \in \mathbb N_+ $, define the probability simplex $
\Delta_N = \{ x \in \mathbb R_+^N : \sum_{j=1}^N x_j = 1 \}.$
For each $ i \in [N] $, let $ \boldsymbol e_i \in \Delta_N $ denote the $ i $th standard basis vector. 
For vectors $\bs x \in \R^N$, the norm $\|\bs x\|$ denotes the Euclidean norm, and $\|\bs x \|_\infty = \max_{i \in [N]} x_i$ denotes the infinity norm.
For matrices $\bs A \in \R^{N\times N}$, $\|\bs A\|_{\mathrm{op}}$ denotes the operator norm.

\section{Problem Setup}
\label{sec:problem-setup}
Throughout this paper, we study finite normal form games with $M \in \mathbb N_+$ players and $N \in \mathbb N_+$ actions. Each player $j \in [M]$ chooses an action $a_j \in [N]$. Given an action profile $\bs a = (a_1,\ldots,a_M) \in [N]^M$, player $j$ receives payoff $u_j^a(\bs a)$, where $
u_j^a : [N]^M \to [0,1]$ 
is the payoff function of player $j$. The game is specified by the tuple of payoff functions $(u_1^a,\ldots,u_M^a)$~\citep{nash1950equilibrium}.
A mixed strategy $\bs p_j \in \Delta_N$ of player $j$ is a probability distribution over $[N]$. A mixed strategy profile is $
\bs P = (\bs p_1,\ldots,\bs p_M) \in (\Delta_N)^M.$
Under $\bs P$, actions are drawn independently as $a_j \sim \bs p_j$ for $j \in [M]$, and the expected payoff of player $j$ is $
u_j(\bs P)= \EE_{a_1 \sim \bs p_1,\ldots,a_M \sim \bs p_M}[u_j^a(a_1,\ldots,a_M)].$
For a given profile $\bs P$ and an alternative mixed strategy $\bs p_j' \in \Delta_N$, we write $
u_j(\bs p_j'; \bs P_{-j}) = \EE_{a_1 \sim \bs p_1,\ldots,a_j \sim \bs p_j',\ldots,a_M \sim \bs p_M}
[u_j^a(a_1,\ldots,a_M)].$
We denote the mixed extension of the finite game by $
\mathcal{G}([M], \Delta_N, (u_j)_{j \in [M]}).$

\subsection{Equilibria of Fixed Beliefs}\label{sec:qre}
\label{sec:equilibrium-of-fixed-beliefs}
The central concept of game theory is the Nash equilibrium~(NE), and we present its formal definition below for an $M$-player, $N$-action normal-form game with expected payoffs given by~$(u_j)_{j\in[M]}$.
\begin{definition}[Nash equilibrium]
A strategy profile $\bs P^* = (\bs p_1^*, \ldots, \bs p_M^*) \in (\Delta_N)^M$ is a Nash equilibrium of the game~$\mc G([M], \Delta_N, (u_j)_{j\in[M]})$ if, for each $j \in [M]$,
\begin{align}
\label{eq:NE}
    \bs p_j^* \in \argmax\limits_{\bs p \in \Delta_N } u_{j}(\bs p;\bs P^*_{-j})
    % \leq u_j(\bs p_j^*; \bs P^*_{-j}).
    \tag{\text{NE}}
\end{align}
\label{def:ne}
\end{definition}
A Nash equilibrium in non-cooperative games is grounded in three fundamental principles that collectively eliminate errors in the players' decisions and beliefs: 
\begin{itemize}[leftmargin=5mm]
\item[\ding{101}] \textit{Beliefs}: Each player's behavior is shaped by their expectations about how other players will act. 
\item[\ding{101}] \textit{Optimal Responses}: Given these expectations, every player's choice is the best possible response.
\item[\ding{101}] \textit{Perfect Foresight}: Players' expectations of others' behaviors are accurate in a probabilistic sense.
\end{itemize}

When these three conditions are met, they create an internally consistent model of behavior. This consistency leads to a stable equilibrium distribution of action profiles, where no player has an incentive to deviate unilaterally. However, human decision-making is far from flawless. Unlike the precise predictions of traditional game theory, real-world behavior is often messy, unpredictable, and prone to error. Indeed, the foundational assumptions of standard game theory frequently fail to align with data from laboratory experiments, revealing systematic and significant deviations that challenge its core principles \citep{lieberman1960human,brayer1964experimental,o1987nonmetric,brown1990testing,rapoport1992mixed}.
In response to these empirical discrepancies, the Quantal Response Equilibrium (QRE) framework was developed~\citep{mckelvey1995quantal} using standard statistical models of quantal choice in a game-theoretic setting. 

QRE relaxes the core assumption of traditional NE that individual choice behavior is always optimal by allowing players to make mistakes in their decisions. In QRE, the expected payoffs derived from players' expectations may be influenced by ``noise'', which can result in suboptimal decisions. In this framework, even if the first and third principles of the game still hold, players must account for the fact that others do not always optimize perfectly. This introduction of noise fundamentally alters the dynamics of behavior within the game, and thus its equilibrium. 
In (structural) QRE inspired by the additive random payoff models~\citep{ref:mcfadden1976quantal}, each player's expected payoff for each action is perturbed with some random noise $\bs \xi = (\xi_i)_{i \in [N]} \sim \mu$. Specifically, the disturbed expected payoff of player~$j$, according to some distribution $\mu_j \in \mc P(\R^N)$, takes the form $u_j(\bs e_i; \bs P_{-j}) + \xi_i$.
The assumed choice behavior is that each player chooses action~$i$, when $u_j(\bs e_i; \bs P_{-j}) + \xi_i \geq u_j(\bs e_{i'}; \bs P_{-j}) + \xi_{i'}$ for all $i' \in [N]$.{
This behavior of choice induces a distribution over each player's possible actions. 
}
{When the maximizer is not unique, the player may randomize among the maximizing actions. To formalize this, for $\bs u \in \R^N$ and $\bs \xi \in \R^N$, define the set  
\[A(\bs u, \bs \xi) = \argmax\limits_{i \in [N]} \;\left\{u_i + \xi_i \right\}\subseteq [N].\]
Note that $A(\bs u, \bs \xi)$ is non-empty, since the maximum of a finite set is attained.
Next, we define the associated set of optimal mixed strategies as $\Delta(A(\bs u, \bs \xi)) = \left\{\bs p \in \Delta_N~\vert~\mathrm{supp}(\bs p) \subseteq A(\bs u , \bs \xi)\right\}.$
Then, for $(\mu, \bs u)\in \mc P(\R^N) \times \R^N$, define the response correspondence 
\begin{equation}\label{eq:Tau-def}
T(\mu;\bs u)
=\left\{\int_{\mathbb R^N}\alpha(\bs \xi) \mu(\diff \bs \xi)~\Big\vert~ 
\alpha:\mathbb R^N\to\Delta_N\ \text{Borel measurable},
\alpha(\bs \xi)\in\Delta(A(\bs u,\bs \xi))\ \mu\text{-a.s.}\right\}.
\end{equation}
When ties occur with $\mu$-probability zero (\textit{e.g.,} under absolutely continuous perturbation models), the set $A(\bs u, \bs \xi)$ is a singleton $\mu$-almost surely, hence $T(\mu; \bs u)$ is a singleton as well. Elements of $T(\mu;\bs u)$ are the induced choice-probability vectors: a mixed strategy
$\bs p\in T(\mu;\bs u)$ assigns probability $p_i$ to action $i$, where $\bs p$ is obtained by averaging tie-breaking rules over realizations of $\bs\xi$.
}

QRE is a stable state where each player optimizes their decisions as best as possible, given their behavioral limitations, fully acknowledging that other players are doing the same within their own constraints. 
Consequently, QRE models players as \textit{better responders} rather than just best responders. 
Indeed, while quantal responders are more likely to choose strategies with higher expected payoffs, they do not always select the strategy with the absolute highest expected payoff.
\begin{definition}[Quantal Response Equilibrium]
\label{def:aqre}
For a given $\mu_j \in \mc P(\R^N)$, $j\in [M]$, a strategy profile $\bs P^* = (\bs p^*_1, \ldots, \bs p^*_M)$ is a QRE of the game $\mathcal G([M], \Delta_N, (u_j)_{j\in [M]})$ if $\bs p_j^* \in T(\mu_j; (u_j(\bs e_i; \bs P_{-j}^*))_{i \in [N]})$ for all $j \in [M]$.
\end{definition}
{QRE is thus a fixed point of the quantal-response correspondences, analogous to how NE is a fixed-point of the best-response correspondences. }
Therefore, QRE serves as a statistical generalization of NE, with NE representing the limiting case where no decision errors occur.
Among the various formulations of QRE, logit-QRE is the most widely used.

\begin{example}[Logit-QRE~\citep{mckelvey1995quantal}]
    If~$\mu_j = \nu$ for all $j \in [M]$, $\nu=\otimes_{i=1}^N \nu_i$ and if~$\nu_i\in\mathcal{P}(\R)$ is a Gumbel distribution with zero mean and variance $\pi^2\eta^2 / 6$ for some $\eta>0$, then the quantal response functions are available in closed form and are equivalent to the choice probabilities in the celebrated multinomial logit model in discrete choice theory (see \cite[Theorem~5.2]{mcfadden1981econometric}), that is, $T(\nu; \bs u) = \{({\exp({u_i}/{\eta})} /\sum_{k=1}^N \exp({u_k}/{\eta}))_{i \in [N]}\}$.
    \label{ex:logit-QRE}
\end{example}
\begin{remark}Many formulations of QRE in the literature implicitly assume that the payoff perturbations admit a joint density with respect to Lebesgue measure, so that ties occur with probability zero. Under this assumption, the induced choice rule is almost surely single-valued and can be identified with a quantal response function. While convenient, this restriction is substantive: even when all marginal distributions admit densities, a joint distribution may be singular, and the event of payoff ties may occur with positive probability. In such cases, the usual pointwise definition of choice probabilities is no longer uniquely defined without a tie-breaking convention.
Our formulation makes the underlying measure-theoretic structure explicit. By defining the response correspondence~$T(\mu;\bs u)$ directly in terms of measurable selections from the set of maximizers, we allow for arbitrary dependence structures and regularity properties of the perturbation law $\mu$, including singular distributions. This ensures that the induced mixed strategy is always well defined as a Bochner integral, without requiring absolute continuity or genericity assumptions. Classical QRE models, including logit-QRE \citep{mckelvey1995quantal}, are recovered as special cases in which the correspondence $T(\mu;\bs u)$ collapses to a singleton almost surely.
\end{remark}

Recent studies have demonstrated linear-time convergence to logit-QRE using entropy-regularized best-response mappings, notably in two-player zero-sum and multi-player potential games \citep{cen2021fast,cen2023faster}, and general normal-form games via heavily regularized natural policy gradient \citep{sun2024linear}. However, these results specifically exploit entropy regularization, leaving convergence under alternative QRE models open (see Appendix~\ref{sec:computation}).
From the modeling perspective, logit-QRE relies on Luce's independence from irrelevant alternative (IIA) axiom, according to which the odds of choosing an action $i$ over another $i'$ depend on their own payoffs and remain unaffected when additional actions are introduced or removed. When IIA fails, this constraint distorts the predicted choice probabilities.
As demonstrated by \citep{ref:debreu1960review}, adding a duplicate alternative can drastically alter market shares: an effect that logit-based models fail to capture. Consequently, the assumption of identically and independently distributed (i.i.d.) additive payoff perturbations underlying standard QRE models may produce equilibrium predictions that deviate significantly from observed behavior; see Section~\ref{sec:experiments-bus-train} for empirical evidence.
Motivated by these theoretical and empirical limitations of both NE and logit-QRE, we introduce a generalized equilibrium concept that relaxes the restrictive assumption of i.i.d. additive perturbations. 
Crucially, our formulation demonstrates that the connection between equilibrium solutions and regularized best-response mappings defined over the probability simplex, previously attributed exclusively to logit-QRE, extends naturally to a significantly broader class of perturbation models.
Thus, we retain the beneficial convergence and learning dynamics typically associated with logit-QRE, while simultaneously accommodating richer and more empirically realistic behavioral phenomena (see Section~\ref{sec:empirical}).
% Appendix~\ref{sec:computation} provides a convergent algorithm for general and establishes its convergence under the
% conditions stated there.

\section{Statistical Equilibrium of Optimistic Beliefs}
\label{sec:SE-OB}
We introduce SE-OB as a generalization of both NE and QRE. Unlike conventional QRE, where each player’s response function is imposed a priori and typically derived under the assumption of i.i.d. payoff perturbations, SE‑OB permits each player to choose a joint belief about their payoff disturbances from a prescribed belief set. Consequently, SE‑OB relaxes the specification of noise distributions, allowing for arbitrary dependence structures and heterogeneity among the perturbations, and thereby recovers NE and all structural QREs as special cases when the belief sets collapse to a singleton.
More precisely, given an expected payoff vector~$\bs{u} \in \R^N$ and a prescribed belief set $\mc{B} \subseteq \mc{P}(\R^N)$, each player forms beliefs that they will receive the following expected payoff:
\begin{align}
 \label{eq:opt-exp-payoff}
 \tilde u(\bs u ; \mc B) = \sup\limits_{\mu \in \mc B} \EE_{\bs \xi \sim \mu}\left[\max\limits_{i \in [N]} u_i+ \xi_i\right],    
\end{align}
where $\bs \xi \in \R^N$ represents a random vector of perturbations governed by a Borel probability measure $\mu $ from within the belief set $\mc B$. Subsequently, each player responds stochastically according to their optimally chosen belief. When all players adopt this strategy, their collective actions culminate in a statistical equilibrium state, which we refer to as SE-OB, formally defined as follows.

\begin{definition}[Statistical Equilibrium of Optimistic Beliefs]
    For given belief sets $\mc B_j \subseteq \mc P(\R^N)$ for each player $j \in [M]$, a strategy profile $\bs P^* = (\bs p_1^*, \ldots, \bs p_M^*)$ forms an SE-OB of the game $\mc G([M], \Delta_N, (u_j)_{j \in [M]})$ if for every $j\in[M]$ there exists a belief $\mu_j^*\in \mathcal B_j$ such that
    \begin{equation}
      \EE_{\bs \xi \sim \mu}\left[ \max\limits_{i \in [N]} u_j(\bs e_i; \bs P_{-j}^*) + \xi_i \right] \leq  \EE_{\bs \xi \sim \mu^*_j}\left[ \max\limits_{i \in [N]} u_j(\bs e_i; \bs P_{-j}^*) + \xi_i \right] \quad \forall \mu \in \mc B_j,
        \label{eq:b-done}
        \tag{SE-OB}
    \end{equation}
    and $\bs p_j^* \in T(\mu_j\opt; (u_j(\bs e_i, \bs P^*_{-j}))_{i \in [N]}).$
    \label{def:se-ob}
\end{definition}

In QRE, the quantal response functions are specified exogenously, yielding a parametric formulation of statistical equilibrium. Drawing inspiration from the semi-parametric extension of random payoff models~\citep{natarajan2009persistency}, SE-OB emerges as a semi-parametric generalization of QRE. 
By appropriately selecting the players’ belief sets, SE-OB seamlessly reduces to either NE or QRE, thereby unifying and extending these established concepts. In essence, the structure of the belief sets not only drives the dynamics of the game but also determines the eventual equilibrium outcome.

A useful way to interpret \eqref{eq:opt-exp-payoff}-(SE-OB) is as a model of optimism about an unobserved dependence structure. In many environments, a player can form reasonably credible
assessments of each action in isolation (e.g., from historical data or experience), yet cannot identify
how payoff shocks co-move across actions in the same underlying state of the world, because
counterfactual payoffs are not observed. In SE-OB, this lack of identification is captured by a
\textit{belief set} $\mc B_j$ of plausible \textit{joint} distributions for the payoff shocks. The player then resolves
this dependence ambiguity optimistically by selecting the joint model $\mu \in \mc B_j$ that maximizes
$\mathbb{E}_{\xi \sim \mu}[\max_{i \in [N]} u_i+\xi_i]$. 
A particularly important class of belief sets, studied in detail below, fixes the action-wise marginal distributions while leaving their dependence unrestricted. In that case, optimism operates entirely through the choice of dependence structure. The following example illustrates the economic intuition behind this class of belief sets; the formal construction is deferred to Section~\ref{sec:mbs}.

\begin{example}[A one-player illustration: career choice under dependence ambiguity]
To make the above interpretation concrete, consider a degenerate one-player ``game'' in which a single
decision maker chooses between two actions $i\in\{1,2\}$, interpreted as two career paths. Let $u_i$
denote the baseline expected value of path $i$ (salary, growth, amenities, etc.). In addition, there is an
idiosyncratic fit component $\xi_i$ capturing latent enjoyment/aptitude that will only be learned after committing
to the path. Thus, the realized payoff from choosing $i$ is $u_i+\xi_i$.

It is natural to assume that the agent can form informative assessments of each fit component in isolation,
but cannot identify their joint dependence. Concretely, suppose the agent knows the {marginal}
distributions $\xi_i \sim F_i$, $i\in\{1,2\},$
for example, from self-assessment, internships, or observed outcomes of similar individuals.
However, the agent does not know how $\xi_1$ and $\xi_2$ co-vary, because she cannot simultaneously experience
both careers long enough to observe counterfactual fit under the same underlying state. This leads to
a {marginal belief set} of all joint distributions consistent with the given marginals:
\[
\mc B
=
\Big\{
\mu\in\mc P(\R^2):
\mu(\xi_1\le s)=F_1(s)\ \forall s\in\R,
\mu(\xi_2\le s)=F_2(s)\ \forall s\in\R
\Big\}.
\]
In this interpretation, the ambiguity lies entirely in the dependence (copula) linking $\xi_1$ and $\xi_2$.

\noindent \underline{Optimistic belief selection:} An optimistic decision maker resolves this dependence ambiguity in the direction that maximizes the expected
value of the best realized (perturbed) option. Specializing \eqref{eq:opt-exp-payoff} to $N=2$, the decision‑maker selects
\[
\mu^\star \in \argmax_{\mu\in\mc B}
\EE_{\bs \xi\sim\mu}\Big[\max\{u_1+\xi_1,u_2+\xi_2\}\Big].
\]
Importantly, this is {not} optimism about the marginal quality of either path-the marginals $F_1$ and $F_2$
are taken as given. Rather, it is optimism about the {joint structure}: the agent behaves as if the
unknown dependence is the one that is most favorable to ``finding a winner'' (informally, ``at least one of
these will click''). This captures a familiar heuristic in high-uncertainty choices (e.g.,  entrepreneurship):
the agent may not believe that either option has a high {average} payoff, but she may act as if the
uncertainty aligns in a way that makes one option realize substantial upside.

\noindent \underline{Induced stochastic choice:} Given the selected belief $\mu^\star$, the induced random-utility rule chooses the action with the higher
realized payoff $u_i+\xi_i$. Thus, the agent chooses action $i$ with probability\footnote{Under absolutely continuous beliefs, ties occur with probability zero and the above reduces to the probability
that $u_i+\xi_i$ strictly exceeds the alternative. A fully rigorous treatment is provided in Section~\ref{sec:existence}.}
\[
p_i
=
\Pr_{\bs \xi\sim\mu^\star}\left(
i\in\argmax_{b\in\{1,2\}}\{u_b+\xi_b\}
\right),
\qquad i\in\{1,2\}.
\]

\noindent \underline{Why is the resulting mixing exploration-like?}
Let $\Delta=u_1-u_2$. When ties occur with probability zero, we can write
\[
p_1
=
\Pr\big(u_1+\xi_1 \ge u_2+\xi_2\big)
=
\Pr\big(\Delta + (\xi_1-\xi_2)\ge 0\big),
\qquad p_2=1-p_1.
\]
Hence, whenever $\xi_1-\xi_2$ has nontrivial dispersion, the map $\Delta\mapsto p_1$ is smooth and interior:
if $u_1\approx u_2$ (small $\Delta$), then both actions are chosen with positive probability.
Moreover, the {dependence} between $\xi_1$ and $\xi_2$ directly controls this dispersion. For instance,
if $\xi_1=\xi_2$ almost surely (perfect positive dependence), then $\xi_1-\xi_2\equiv 0$ and choice is essentially
deterministic; by contrast, dependence patterns that make $\xi_1-\xi_2$ more variable lead to more mixing
when baseline payoffs are close. In that sense, stochastic choice here has a natural ``exploration'' interpretation:
either (i) across repeated short trials (projects, internships) the agent allocates trials across options according
to $(p_1,p_2)$, or (ii) in cross section, $(p_1,p_2)$ represents the fraction of otherwise similar agents who select
each path due to heterogeneous realized fit shocks.
\end{example}

\subsection{Existence of SE-OB}
\label{sec:existence}
We now establish the existence of an SE-OB under suitable regularity conditions on the belief sets $\mathcal B_j$.
To simplify the expressions, we first introduce some notation: for each player $j \in [M]$ and $i \in [N]$, we define $U_{j,i} = u_j(\bs e_i; \bs P_{-j})$, $U_j(\bs P_{-j}) = (U_{j,i}(\bs P_{-j}))_{i=1}^N \in \R^N.$
For each $j\in[M]$, we define the optimistic value functional as
\begin{align}\label{def:phij}
\Phi_j(\mu, \bs u) =\int_{\mathbb R^N}\left(\max_{i\in[N]} u_i+\xi_i\right) \mu(\diff \bs \xi),~\quad (\mu,\bs  u)\in\mathcal P(\mathbb R^N)\times\mathbb R^N,
\end{align}
and the optimistic-belief correspondence as $
\Belief_j^{\mathrm{opt}}(\bs u) =\argmax_{\mu\in\mathcal B_j}\Phi_j(\mu,\bs u)$.
To establish the existence of an equilibrium, we require the belief sets to satisfy the following regularity conditions.
\begin{assumption} 
\label{ass:topo-belief-sets}
For each $j \in [M]$, the belief set $\mc B_j \subseteq \mc P(\R^N)$ is \textit{nonempty}, \textit{convex} and \textit{weakly compact}. Moreover, there exists $\veps_j > 0$
 such that 
 \begin{equation}
     \sup\limits_{\mu \in \mc B_j} \int_{\R^N} \|\bs \xi \|_\infty^{1+\veps_j} \mu(\diff \bs \xi) < \infty\quad \forall j \in [M].
     \label{eq:moment-bound-belief-sets}
 \end{equation}
\end{assumption}
Next, for each $j\in[M]$, we define the best-response correspondence as
\[
\BR_j(\bs P_{-j})
=\bigcup_{\mu\in\Belief_j^{\mathrm{opt}}(U_j(\bs P_{-j}))} T(\mu;U_j(\bs P_{-j}))
\ \ \subseteq \Delta_N,
\]
and the joint correspondence $
\BR(\bs P) =  \BR_1(\bs P_{-1})\times\cdots\times \BR_M(\bs P_{-M})~ \subseteq (\Delta_N)^M.$
We are now ready to state the existence result.
\begin{theorem}[Existence of SE-OB]\label{thm:exist-endog}
Under Assumption~\ref{ass:topo-belief-sets},
there exists $\bs P^\star\in (\Delta_N)^M$ such that $\bs P^\star\in\BR(\bs P^\star)$.
Equivalently, the game admits an SE-OB.
\end{theorem}

\subsection{Equilibrium of Fixed Beliefs as an Example}
When the belief set is a singleton, then SE-OB corresponds to the equilibrium of fixed beliefs discussed in Section~\ref{sec:equilibrium-of-fixed-beliefs}.
In particular, if each player’s belief set contains a
degenerate distribution, optimistic belief selection becomes vacuous and the equilibrium notion
collapses to Nash equilibrium.
\begin{proposition}
    For some $b\in \R$, if $\mc B_j = \{\otimes_{i=1}^N \delta_{b}\}$ for all $j\in[M]$, then $\bs P \in (\Delta_N)^M$, forms an SE-OB of the game~$\mc G([M], \Delta_N, (u_j)_{j \in [M]})$ if and only if it satisfies~\eqref{eq:NE} of the same game.
    \label{prop:se-ob-ne}
\end{proposition}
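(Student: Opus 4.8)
The plan is to collapse both sides of the asserted equivalence to the same elementary per-player condition, namely that for each $j \in [M]$ the support of $\bs p_j$ is contained in the set of pure best responses $\argmax_{i\in[N]} u_j(\bs e_i;\bs P_{-j})$.

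First I would compute the quantal response function associated with the singleton belief. Since $\mc B_j$ contains only the Dirac law $\otimes_{i=1}^N\delta_{b\cdot\bs{1}}$, the perturbation is the deterministic constant vector $\bs\xi = b\cdot\bs{1}$, and adding the same scalar $b$ to every coordinate does not alter the set of maximizers:
\[
\argmax_{k\in[N]}\ \big(u_j(\bs e_k;\bs P_{-j}) + \xi_k\big)\Big|_{\bs\xi = b\cdot\bs{1}} \;=\; \argmax_{k\in[N]}\ u_j(\bs e_k;\bs P_{-j}).
\]
Hence $\bs\tau\big(\otimes_{i=1}^N\delta_{b\cdot\bs{1}};\,(u_j(\bs e_i;\bs P_{-j}))_{i\in[N]}\big)$ is a probability vector supported on this best-response set, independently of $b$ (and, when the $\argmax$ is not a singleton, it is any such vector — the maximizer ``distribution'' of an atomic law being read as the collection of probability vectors supported on the $\argmax$, in the spirit of the persistency model of \citep{natarajan2009persistency}). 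I would flag this tie-breaking convention explicitly, since it is the only point in the argument that is not pure unwinding of definitions, and it is exactly what lets mixed profiles appear on both sides.

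Second, I would observe that the optimality clause \eqref{eq:b-done} is vacuous here: the quantifier ``$\forall\,\mu\in\mc B_j$'' ranges over the lone element $\mu_j^\star = \otimes_{i=1}^N\delta_{b\cdot\bs{1}}$, so the inequality holds with equality trivially, and the supremum in \eqref{eq:opt-exp-payoff} is likewise attained automatically. Therefore $\bs P$ is an SE-OB if and only if, for every $j$, $\bs p_j$ is a feasible quantal response to the constant perturbation, i.e. $\operatorname{supp}(\bs p_j)\subseteq \argmax_i u_j(\bs e_i;\bs P_{-j})$.

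Third, I would characterize \eqref{eq:NE} in the same terms. Because $u_j(\cdot;\bs P_{-j})$ is linear on $\Delta^N$, we have $\max_{\bs p\in\Delta^N} u_j(\bs p;\bs P_{-j}) = \max_{i\in[N]} u_j(\bs e_i;\bs P_{-j})$, and $u_j(\bs p_j;\bs P_{-j}) = \sum_i p_{j,i}\,u_j(\bs e_i;\bs P_{-j})$ meets this maximum precisely when $\bs p_j$ places all mass on maximizing coordinates, i.e. $\operatorname{supp}(\bs p_j)\subseteq\argmax_i u_j(\bs e_i;\bs P_{-j})$. Comparing with the previous paragraph, \eqref{eq:b-done} and \eqref{eq:NE} reduce to the identical condition for each $j$, which yields the claimed ``if and only if.'' The only real obstacle is the bookkeeping around ties in the $\argmax$ under an atomic belief; once the convention for $\bs\tau$ is pinned down, both implications are immediate.
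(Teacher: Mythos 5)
Your argument is correct. The paper states Proposition~\ref{prop:se-ob-ne} without an explicit proof, so there is no official argument to compare against; your reduction of both \eqref{eq:b-done} and \eqref{eq:NE} to the single per-player condition $\operatorname{supp}(\bs p_j)\subseteq\argmax_{i\in[N]} u_j(\bs e_i;\bs P_{-j})$ is exactly the definition-unwinding the authors evidently intend: the optimality clause in \eqref{eq:b-done} is vacuous when the belief set is a singleton, adding the constant $b$ to every coordinate leaves the argmax unchanged, and linearity of $u_j(\cdot\,;\bs P_{-j})$ on $\Delta^N$ gives the identical support characterization of \eqref{eq:NE}. You are also right to single out the tie-breaking issue as the only substantive point: as literally defined, $\tau_i(\mu;\bs u)=\Pr_{\bs\xi\sim\mu}\bigl[i\in\argmax_{k\in[N]}(u_k+\xi_k)\bigr]$ fails to be a probability vector under an atomic law when the argmax is not a singleton, so $\bs\tau$ must be read as set-valued (any distribution supported on the argmax, as in the persistency model the paper builds on) for properly mixed Nash equilibria to appear on the SE-OB side; with any deterministic tie-breaking the ``only if'' direction would fail. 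You handle that convention explicitly and correctly, so the proof stands.
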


Similarly, when each player’s belief set contains a single probability measure, belief selection is trivial and SE-OB reduces to the QRE induced by that distribution.
\begin{proposition}
   For some $\mu_j \in \mc P(\R^N)$ if $\mc B_j = \{\mu_j\}$ for all $j\in[M]$, then $\bs P\in (\Delta_N)^M$ forms an SE-OB of the game~$\mc G([M], \Delta_N, (u_j)_{j \in [M]})$ if and only if it forms a QRE induced by $(\mu_j)_{j \in [M]}$ of the same game. 
   \label{prop:qre-ne}
\end{proposition}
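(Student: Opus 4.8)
The plan is to unwind both definitions and observe that, when $\mc B_j = \{\mu_j\}$ is a singleton, the supremum in \eqref{eq:opt-exp-payoff} and the optimization in \eqref{eq:b-done} become vacuous, so that the only content left in the SE-OB condition is the requirement $\bs p_j^* = \bs\tau(\mu_j; (u_j(\bs e_i; \bs P^*_{-j}))_{i\in[N]})$ for every $j$, which is precisely Definition~\ref{def:aqre}.

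\begin{proof}
Suppose first that $\bs P = (\bs p_1, \ldots, \bs p_M)$ forms an SE-OB of $\mc G([M], \Delta^N, (u_j)_{j\in[M]})$ with the singleton belief sets $\mc B_j = \{\mu_j\}$. By definition, there exists $\mu_j^* \in \mc B_j$ for each $j \in [M]$ with $\bs p_j = \bs\tau(\mu_j^*; (u_j(\bs e_i; \bs P_{-j}))_{i\in[N]})$ satisfying \eqref{eq:b-done}. Since $\mc B_j = \{\mu_j\}$, the only admissible choice is $\mu_j^* = \mu_j$, and the inequality in \eqref{eq:b-done} holds trivially (it compares $\mu_j$ against itself). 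Hence $\bs p_j = \bs\tau(\mu_j; (u_j(\bs e_i; \bs P_{-j}))_{i\in[N]})$ for all $j \in [M]$, which is exactly the defining condition of a QRE induced by $(\mu_j)_{j\in[M]}$ in Definition~\ref{def:aqre}.

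Conversely, suppose $\bs P$ forms a QRE induced by $(\mu_j)_{j\in[M]}$, so that $\bs p_j = \bs\tau(\mu_j; (u_j(\bs e_i; \bs P_{-j}))_{i\in[N]})$ for all $j \in [M]$. Take $\mu_j^* = \mu_j \in \mc B_j$. Then $\bs p_j = \bs\tau(\mu_j^*; (u_j(\bs e_i; \bs P_{-j}))_{i\in[N]})$, and the inequality \eqref{eq:b-done} is satisfied because the only $\mu \in \mc B_j$ is $\mu_j = \mu_j^*$, for which both sides coincide. Therefore $\bs P$ forms an SE-OB of the same game.
\end{proof}

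\noindent\textbf{Remarks on difficulty.} There is essentially no obstacle here: the result is a definitional consistency check, and the only point that warrants care is the logical direction of \eqref{eq:b-done}, namely that it constrains the \emph{choice} of $\mu_j^*$ within $\mc B_j$ rather than imposing a condition relating different players' beliefs, so that a singleton belief set renders it automatically true. One should also note in passing that the quantal response map $\bs\tau$ is well-defined on all of $\mc P(\R^N)$, so that feeding it the fixed $\mu_j$ poses no measurability issue; this is implicit in the earlier development of QRE and requires no further argument.
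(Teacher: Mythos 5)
Your proof is correct and matches the paper's treatment: the paper states this proposition without a written proof, regarding it as an immediate definitional consequence, and your argument is exactly that unwinding — with a singleton belief set the optimality requirement in \eqref{eq:b-done} is vacuous and only the fixed-point condition $\bs p_j = \bs\tau(\mu_j; (u_j(\bs e_i; \bs P_{-j}))_{i\in[N]})$ remains, which is Definition~\ref{def:aqre}.
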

% {\color{red}

% }

\subsection{SE-OB with Marginal Belief Sets}
\label{sec:mbs}
We now investigate the class of belief sets with fixed marginal distributions of the form
\begin{equation}
\mathcal B=\Big\{\mu \in \mathcal P( \mathbb{R}^N )  \mid  \mu( z_{i} \leq s )=F_{i} ( s ) \quad \forall s \in\mathbb{R},~ \forall i \in[N] \Big\},
\label{eq:marginal-belief-set}
\tag{MBS}
\end{equation}
where $F_{i}$ stands for the cumulative distribution function of the uncertain disturbance $\xi_{i},  i \in[N]$. 
Marginal belief sets~\eqref{eq:marginal-belief-set} fully determine the marginal distributions of the components of the random vector $\bs \xi $ while leaving their interdependence (i.e., the copula)
entirely unconstrained. Often referred to as Fréchet ambiguity sets, they also coincide with the feasible set of a multi-marginal optimal transport problem~\citep{ref:abraham2017tomographic}. 
For each marginal distribution $F_i$, we define its left quantile function $F_{i}^{-1} : [ 0, 1 ] \to\mathbb{R}$ by $
F_{i}^{-1} ( t )=\inf \{s \in \R: F_{i} ( s ) \geq t \}  $ for all $ t \in [0,1]$. Using these quantile functions, we next introduce the \textit{smooth} expected payoffs, which serve as regularized counterparts to the original expected payoff functions.

\begin{definition}[Smooth expected payoffs]
    Fix payoff functions $u_j:(\Delta_N)^M\to\R$, $j\in[M]$.
For each $j\in[M]$ and $i\in[N]$, let $F_{j,i}$ be a continuous cumulative distribution function on $\R$
with (left continuous) quantile function $F_{j, i}^{-1}$ satisfying $\int_0^1 |F_{j,i}^{-1}(t)| \diff t < \infty$ for every $i \in [N],~j \in [M]$. The smooth expected payoff of player $j$ is defined by 
    \begin{equation}
        \bar u_j\left(\bs p, \bs P_{-j} ; (F_{j, i})_{i \in [N]}\right) = u_j(\bs p ; \bs P_{-j}) + \sum\limits_{i=1}^N \int_{1-p_i}^1 F_{j,i}^{-1}(t)\diff t.
        \label{eq:smooth-exp-payoffs}
    \end{equation}
    \label{def:smooth-exp-ut}
\end{definition}
Whenever $(F_{j,i})_{i\in[N]}$ are clear from context, we suppress this dependence and write
$\bar u_j(\boldsymbol p,\boldsymbol P_{-j})$ in place of
$\bar u_j(\boldsymbol p,\boldsymbol P_{-j};(F_{j,i})_{i\in[N]})$.
We refer to the game with smooth expected payoffs $\mathcal G([M],$ $\Delta_N, (\bar u_j)_{j \in [M]})$ as the \textit{smooth game} and if the cumulative distribution functions $(F_{j,i})_{i \in [N]}$ for each $j \in [M]$ are continuous, then by Lemma \ref{lem:concave-game} the smooth game is concave.
\begin{assumption}
    For each $j\in[M]$ and $i\in[N]$, the marginal cumulative distribution function $F_{j,i}:\mathbb R\to[0,1]$ is continuous, and it is strictly increasing on the set
$\{s\in\mathbb R: F_{j,i}(s)\in(0,1)\}$.
    \label{ass:marginal-belief-sets-f-strict}
\end{assumption}
The following theorem formally relates the smooth game's NE to an SE-OB of the original game.

\begin{theorem}
    Under Assumption~\ref{ass:marginal-belief-sets-f-strict}, a profile $\bs P\opt \in (\Delta_N)^M$ forms an SE-OB of the game $\mc G([M],$ $\Delta_N$,$ (u_j)_{j \in [M]})$ if and only if it is a NE of the game $\mc G([M], \Delta_N, (\bar u_j)_{j \in [M]})$.
    \label{thm:equiv-se-ob-smooth-game}
\end{theorem}

By \citep{ref:rosen1965existence}, an immediate corollary of the preceding theorem is that an SE-OB exists whenever the belief sets satisfy Assumption~\ref{ass:marginal-belief-sets-f-strict}.
\begin{corollary}
    \label{coro:existence}
    Under Assumption \ref{ass:marginal-belief-sets-f-strict}, SE-OB of the game $\mathcal G([M], \Delta_N, (u_j)_{j\in [M]})$ exists. 
\end{corollary}

Belief sets defined by~\eqref{eq:marginal-belief-set} may initially seem restrictive, as they only encompass distributions with predetermined marginal cumulative distribution functions. Nevertheless, as illustrated by the examples below, carefully choosing these marginal distributions provides substantial flexibility in constructing both established and novel statistical equilibria. Detailed derivations for these examples are presented in Section~\ref{sec:examples-proofs}.
\begin{example}[Exponential distribution model]
\label{ex:exp-marginal}
    Suppose that $\mc B_j$, $j\in[M]$, is a marginal belief set with (shifted) exponential marginals of the form \eqref{eq:marginal-belief-set} induced by the following cumulative distribution functions $F_{j,i}(s) = \max\{0,  1 - \eta_{j, i}\exp(-s/\gamma_j - 1) \} $
    with $\eta_{j,i}  > 0 $ and $\gamma_j > 0$, for all $j \in [M]$ and $i \in [N]$.
    Then, quantal response functions evaluated at the optimistic distributions $\mu_j\opt$'s that solve~\eqref{eq:opt-exp-payoff}
    are available in closed form and equivalent to $({\eta_{j,i} \exp(u_i/ \gamma_j)}/{ \sum_{k=1}^N \eta_{j,k} \exp(u_k/\gamma_j)})_{i \in [N]} \in T(\mu_j\opt; \bs u).$
\end{example}
This example shows that logit-QRE is not only induced by singleton belief sets containing a generalized extreme value distribution (see Example~\ref{ex:logit-QRE}) but also by marginal belief sets with exponential marginals.
\noindent The quantal response functions of Example~\ref{ex:exp-marginal} are known as softmax functions, which project a vector of real numbers to the probability simplex. By construction, the projected vector always has full support. However, in applications with large output spaces, sparse probability distributions are often preferable because they filter out less relevant alternatives, thereby reducing computational complexity and enhancing interpretability.
Responding to this need, \citet{ref:martins2016softmax} proposed the \textit{sparsemax} operator, which projects onto the simplex while setting sufficiently small coordinates exactly to zero. The next example shows that sparsemax choice probabilities arise endogenously in SE-OB when belief sets are marginal belief sets induced by uniform distributions. 
\begin{example}[Uniform distribution model]\label{ex:uniform}
Suppose that each $\mc B_j$, $j\in[M]$, is a marginal belief set of the form
\eqref{eq:marginal-belief-set}, induced by {uniform} marginals with common
width $\gamma_j>0$ and location shifts
$\theta_{j,i}\in\R$, $i\in[N]$, that is, for each $i\in[N]$,
\[
F_{j,i}(s)=
\begin{cases}
0, & s \le \theta_{j,i}-\gamma_j/2,\\[4pt]
({s-(\theta_{j,i}-\gamma_j/2)})/{\gamma_j},
& \theta_{j,i}-\gamma_j/2 < s < \theta_{j,i}+\gamma_j/2,\\[8pt]
1, & s \ge \theta_{j,i}+\gamma_j/2,
\end{cases}
\]
equivalently $\xi_{j,i}\sim \mathrm{Unif}([\theta_{j,i}-\gamma_j/2,\ \theta_{j,i}+\gamma_j/2])$.
Then, for every $\bs u\in\R^N$, any optimistic distribution $\mu_j\opt\in\mc B_j$
solving \eqref{eq:opt-exp-payoff} induces a quantal response with the closed form
\[
\operatorname{sparsemax}\Big(\frac{\bs u+\boldsymbol\theta_j}{\gamma_j}\Big)\ \in\
T(\mu_j\opt;\bs u),
\qquad\text{where }\boldsymbol\theta_j:=(\theta_{j,1},\dots,\theta_{j,N}).
\]
Here $\operatorname{sparsemax}:\R^N\to\Delta_N$ denotes the \emph{sparsemax} operator
\citep{ref:martins2016softmax}, defined as the Euclidean projection onto the simplex: $
\operatorname{sparsemax}(\bs z)=
\argmin_{\bs p\in\Delta_N}\ \|\bs p - \bs z\|_2^2.$
\end{example}

By \cite[Proposition 1]{ref:martins2016softmax}, the sparsemax operator admits a closed-form expression:
$\operatorname{sparsemax}(\bs z)$ is obtained by applying a common threshold to the coordinates of
$\bs z$ and truncating those below the threshold to zero, yielding a probability vector with
possibly sparse support.
Since the equilibrium strategy profile $\bs P^*=(\bs p_1^*,\ldots,\bs p_M^*)$ satisfying
\eqref{eq:b-done} is induced by quantal response functions evaluated at optimistic distributions,
the resulting equilibrium strategies may exhibit sparsity, in the sense that some actions receive
exactly zero probability. This sparsity plays a crucial role for computational tractability
(see \cite[Corollary 3.3]{daskalakis2023smooth}).
In particular, under the uniform marginal model of Example~\ref{ex:uniform} with $\theta_{j,i}=0$,
the induced quantal response takes the form $
p^\star(\bs u)=\operatorname{sparsemax}(\bs u/\gamma_j).$
As shown in Lemma~G.2, if $\gamma_j \ge \sigma N/(1-\sigma)$ for some $\sigma\in(0,1)$, then
$p^\star(\bs u)\in[0,(N\sigma)^{-1}]^N$.
Moreover, if ties occur with $\mu_j^{\opt}$-probability zero, the response correspondence
$T(\mu_j^{\opt};\bs u)$ is single-valued and satisfies $
T(\mu_j^{\opt};\bs u)\subseteq [0,(N\sigma)^{-1}]^N$.
Consequently, the equilibrium profile $\bs P^*$ belongs to the class of admissible policies of
$\sigma$-smooth Nash equilibrium introduced by \cite[Definition 3.2]{daskalakis2023smooth}, and
inherits the associated sparsity and thus the polynomial-time approximability guarantees.
Finally, the sparsity of choice probabilities generated by the uniform distribution model
qualitatively parallels Gabaix’s sparsity-based model of bounded rationality
\citep{ref:gabaix2014sparsity}, in which limited attention leads agents to ignore sufficiently
low-value alternatives.

\begin{example}[Pareto distribution model]\label{ex:pareto}
Suppose that $\mc B_j$, $j \in [M]$, is a marginal belief set with (shifted) Pareto distributed marginals of the form \eqref{eq:marginal-belief-set} induced by the cumulative distribution functions $F_{j,i}(s) = \max\{0, 1- \eta_{j,i} (-{s(q-1) }/{(\gamma_{j} q)}+ q^{-1})^{1 /(q-1)}\}$
with $\eta_{j,i} \geq0$ and $\gamma_j, q>1$ for all $j\in[M]$ and $i \in [N]$. 
Then, the quantal response functions evaluated under the optimistic distributions $\mu\opt_j$ solve \eqref{eq:opt-exp-payoff}, i.e., 
\begin{equation*} \argmax\limits_{\bs p\in \Delta_N} \bs p^\top \bs u-{\gamma_j}{(q-1)^{-1}} \sum_{i=1}^N\eta_{j,i}\left(\left({p_i}/{\eta_{j,i}}\right)^{q} - {p_i}/{\eta_{j,i}}\right) \subseteq T(\mu_j\opt; \bs u).\end{equation*}
The Pareto distribution model encapsulates the exponential model (in the limit $q \rightarrow 1$) and the uniform distribution model (for $q=2$) as special cases. The quantal response functions admit no simple closed-form representation under this model.
\end{example}

\begin{remark}[Computing SE-OB with marginal belief sets]
For marginal belief sets, Theorem~\ref{thm:equiv-se-ob-smooth-game} reduces SE-OB computation
to finding a Nash equilibrium of the associated smooth (regularized) game. We therefore provide a
numerical routine that computes SE-OB given oracle access to expected payoff evaluations and the
marginal quantile functions. While primarily intended as a computational tool for generating model
predictions and for empirical implementation, the routine can also be interpreted as a stylized
adaptive process under payoff-evaluation access. The full algorithm and a convergence guarantee
under standard stability and regularity conditions are reported in Appendix~\ref{sec:computation}.
\end{remark}

{\color{black}
\section{Informational Foundations of Marginal Belief Sets with Bandit Feedback}
\label{sec:mbs-foundations}

Section~4 defines SE-OB as a {static} equilibrium notion for the mixed extension of a finite
normal-form game. Nevertheless, the motivating environments behind SE-OB are inherently {statistical}: players form beliefs about payoff perturbations from data generated by repeated decision problems (\textit{e.g.}, repeated market instances, repeated laboratory rounds, or repeated interactions against a slowly varying population). In such environments, a central informational feature is {missing counterfactual payoffs}: in each period, a player observes the realized payoff of the action she chooses, but does not observe the payoffs of actions she did not choose in the same underlying ``state of the world.''

In this section, we present the broader applicability of optimism with a different feedback structure: the bandit feedback, in which we will also justify the marginal belief sets (MBS) as introduced in Section~\ref{sec:SE-OB} via the notion of observational equivalence.
More concretely, fix a player $j$ and condition on observable covariates, including the opponent's play, so that player $j$ faces a fixed baseline expected-payoff vector
$
\bs u^j = (u^j(\bs e_i;\bs P_{-j}))_{i\in[N]}\in\R^N.
$
The repeated data available to player $j$ consists of realized payoffs from chosen actions. As we show below, under bandit feedback, the joint dependence structure of the payoff
perturbations across actions, that is, the copula, is {not identified}, even under arbitrary adaptive
experimentation policies, whereas the action-wise marginals {are} identified. Consequently, once
a player can learn or estimate the marginals action-by-action, the {sharp nonparametric identified
set} of feasible joint perturbation laws is exactly the Fr\'echet class of all couplings with those marginals~ \eqref{eq:marginal-belief-set}.

\subsection{Bandit Feedback and Non-Identification of Dependence}
\label{sec:bandit-nonidentification}
Fix $N\in\mathbb N_+$ and suppress player indices. Let $\bs u=(u_1,\dots,u_N)\in\R^N$ be a baseline payoff vector.
Let $\{\bs \xi_t\}_{t\ge 1}$ be an i.i.d. sequence in $\R^N$ with common law $\mu\in \mc P(\R^N)$. 
At each time $t$, the decision maker chooses an action $a_t\in[N]$ and observes only the realized payoff $
y_t = u_{a_t}+\xi_{t,a_t}$,
while the counterfactual components $\{u_i+\xi_{t,i}\}_{i\neq a_t}$ are not observed. We interpret $\mu$ as the unknown joint distribution per-period payoff-perturbation vector $\bs \xi_t$, capturing cross-action dependence within a given period.

For $\mu\in \mc P(\R^N)$ and $i\in[N]$, let $\mu_i\in \mc P(\R)$ denote the $i$th marginal of $\mu$. 
We write $\mathscr B(\cdot)$ for Borel $\sigma$-algebras. Let $([N],2^{[N]})$ be the discrete measurable
space and let $(\R,\mathscr B(\R))$ be the real line with its Borel $\sigma$-algebra. 
Define the induced {observable payoff law} $\lambda_i^\mu\in \mc P(\R)$ by the pushforward $
\lambda_i^\mu = (x\mapsto u_i+x)_\# \mu_i$, 
equivalently $\lambda_i^\mu(B)=\mu_i(B-u_i)$ for all $B\in\mathscr B(\R)$.

For $t\ge 1$ define
the history space and $\sigma$-algebra
\[
H_t = ([N]\times\R)^t,
\qquad
\mathcal H_t = (2^{[N]}\otimes \mathscr B(\R))^{\otimes t}.
\]
Let $H_0=\{\varnothing\}$ with $\mathcal H_0=\{\emptyset,H_0\}$, and define the infinite product
\[
H_\infty = ([N]\times\R)^{\mathbb N},
\qquad
\mathcal H_\infty = (2^{[N]}\otimes \mathscr B(\R))^{\otimes \mathbb N}.
\]
We write $h_t=(a_1,y_1,\dots,a_t,y_t)\in H_t$ for the observed history up to time $t$.

A history-dependent and randomized policy is a sequence $\pi=(\pi_t)_{t\ge 1}$ where each $\pi_t$
is a stochastic kernel from $(H_{t-1},\mathcal H_{t-1})$ to $([N],2^{[N]})$; equivalently, for each
$i\in[N]$, the map $h_{t-1}\mapsto \pi_t(\{i\}\mid h_{t-1})$ is $\mathcal H_{t-1}$-measurable.
For brevity, we write $\pi_t(i\mid h_{t-1})$ for $\pi_t(\{i\}\mid h_{t-1})$.
Given $(\pi,\mu)$, define for each $t\ge 1$ a stochastic kernel
$Q_{t,\mu}^\pi(\cdot\mid h_{t-1})$ from $(H_{t-1},\mathcal H_{t-1})$ to
$([N]\times\R,\ 2^{[N]}\otimes\mathscr B(\R))$ by
\[
Q_{t,\mu}^\pi(A\times B \mid h_{t-1})
=
\sum_{i\in A}\pi_t(i\mid h_{t-1})\,\lambda_i^\mu(B),
\qquad
A\subseteq[N],\ B\in\mathscr B(\R).
\]
Because $[N]$ is finite and $\lambda_i^\mu(B)$ is constant in $h_{t-1}$, the mapping
$h_{t-1}\mapsto Q_{t,\mu}^\pi(A\times B\mid h_{t-1})$ is $\mathcal H_{t-1}$-measurable for each
measurable rectangle $A\times B$, hence $Q_{t,\mu}^\pi$ is a well-defined stochastic kernel.
By the Ionescu-Tulcea extension theorem, there exists a unique probability measure
$\mathbb P_\mu^\pi$ on $(H_\infty,\mathcal H_\infty)$ such that the coordinate process
$(a_t,y_t)_{t\ge 1}$ has conditional one-step laws given by $\{Q_{t,\mu}^\pi\}_{t\ge 1}$.
We refer to $\mathbb P_\mu^\pi$ as the law of the observable bandit data $(a_t, y_t)_{t\geq 1}$ when the agent follows $\pi$ and shocks are drawn accoridng to $\mu$.
\begin{definition}[Bandit observational equivalence]
\label{def:bandit-obs-eq}
Two laws $\mu,\mu'\in \mc P(\R^N)$ are {bandit-observationally equivalent} if for every policy $\pi$
and every horizon $T\in\mathbb N_+$,
\[
\mathbb{P}^{\pi}_{\mu}\big((a_1,y_1,\dots,a_T,y_T)\in\cdot\big)
=
\mathbb{P}^{\pi}_{\mu'}\big((a_1,y_1,\dots,a_T,y_T)\in\cdot\big).
\]
\end{definition}

\begin{proposition}[Copula non-identification under bandit feedback]
\label{prop:copula-nonidentification}
Let $\mu,\mu'\in \mc P(\R^N)$ satisfy $\mu_i=\mu'_i$ for all $i\in[N]$.
Then $\mu$ and $\mu'$ are bandit-observationally equivalent.
\end{proposition}
\noindent Let $\Pi$ denote the class of admissible policies and define the {statistical experiment} generated by
$\mu$ under bandit feedback as the family $\{\mathbb P_\mu^\pi:\pi\in\Pi\}$. A feature of $\mu$ is identified from bandit data only if different values of that feature induce different
experiments. Proposition~\ref{prop:copula-nonidentification} shows that the mapping
$
\mu \mapsto \{\mathbb P_\mu^\pi:\pi\in\Pi\}
$
depends on $\mu$ only through its one-dimensional marginals $(\mu_i)_{i\in[N]}$.
Thus the experiment is {not injective} in the dependence component of $\mu$, i.e., the copula is not
identified.
This non-identification result presented in Proposition~\ref{prop:copula-nonidentification} is driven by the fact that the
decision maker observes at most one coordinate of $\bs\xi_t$ in each period. Consequently, regardless of how
$\pi$ adapts to past data, the observable history never contains two components $(\xi_{t,i},\xi_{t,i'})$ from the
same shock vector, so cross-action dependence is not statistically testable. Identifying the copula would
require richer feedback, such as simultaneous multi-action outcomes (\textit{e.g.} A/B tests across subpopulations
exposed to the same underlying shock) or a persistent latent state that can be probed with multiple actions
before it changes; we abstract from such settings here.

\begin{lemma}[Marginals are identified from bandit data]
\label{lem:marginals-identified}
If $\mu,\mu'\in \mc P(\R^N)$ are bandit-observationally equivalent, then $\mu_i=\mu'_i$ for every $i\in[N]$.
\end{lemma}

\begin{corollary}[Marginal belief sets are the sharp identified set]
\label{cor:mbs-sharp-identified}
Fix a data-generating law $\mu_0\in \mc P(\R^N)$. Then, the set of bandit-observationally equivalent distributions of $\mu_0$ coincides with the belief set of the form \eqref{eq:marginal-belief-set} induced by $(\mu_{0, i})_{i \in [N]}$.
\end{corollary}

% \begin{remark}
% If instead the decision maker observes the entire payoff vector
% $(u_1+\xi_{t,1},\dots,u_N+\xi_{t,N})$ each period, then the sample is i.i.d.\ from the joint law of
% $u+\xi$. Under standard regularity, the joint law $\mu$ is nonparametrically identified; in particular, the
% dependence structure is testable and dependence ambiguity vanishes.
% \end{remark}
Corollary~\ref{cor:mbs-sharp-identified} shows that marginal belief sets arise endogenously from the
informational environment. With realized-payoff feedback, the joint dependence of payoff perturbations
cannot be identified from observable data, even under arbitrary exploration and infinite samples, while
the marginal distributions are point identified. Hence, marginal belief sets are not a permissive
relaxation of standard models but represent a nonparametric set of payoff
perturbations consistent with observed behavior.
Thus, (i) restricting belief sets to MBS matches the full set of joint models
that are observationally indistinguishable under realized-payoff feedback, and (ii) optimistic belief
selection in SE-OB can be interpreted as a disciplined {selection rule within an identified set} rather
than an ad hoc preference for correlation patterns.

% In what follows, we translate the identification results of Section~\ref{sec:bandit-nonidentification} into partial-identification statements for the expected maximum of perturbed payoffs under dependence ambiguity. 
% Formally, for a payoff vector $\bs u$, the inclusive value is defined as
% $\mathbb E_{\mu}[\max_{i\in[N]} \{u_i+\xi_i\}]$ where $\bs \xi \sim \mu$, and captures the ex ante value of having
% access to the full action menu before payoff perturbations are realized. We refer to this object as the \emph{inclusive value}.

\subsection{Bounds for the inclusive value}
\label{sec:partial-id-inclusive-value}

We next translate the identification results of Section~\ref{sec:bandit-nonidentification}
into partial-identification statements for the expected maximum of perturbed payoffs
under dependence ambiguity.
Formally, for a payoff vector $\bs u$ and a belief $\mu$, the inclusive value is
\[
V(\bs u ,\mu) = \mathbb E_{\bs \xi \sim \mu}\left[\max_{i\in[N]} \{u_i+\xi_i\}\right],
\]
which captures the ex ante value of having access to the full action menu before payoff
perturbations are realized.

We fix action-wise marginals $(F_i)_{i\in[N]}$ and let $\mc B \subseteq \mc P(\R^N)$ denote the corresponding
marginal belief set of the form \eqref{eq:marginal-belief-set}.
Under bandit feedback, $\mu$ is identified only up to $\mc B$,
so $V(\bs u;\mu)$ is only partially identified. We therefore define the lower and upper envelopes
\[
\underline{V}(\bs u)
=
\inf_{\mu\in \mc B}V(\bs u;\mu),
\qquad
\overline{V}(\bs u)
=
\sup_{\mu\in \mc B} V(\bs u;\mu).
\]
Throughout, assume $\int_0^1 |F_i^{-1}(t)|\diff t<\infty$ for each $i \in [N]$, so that $\underline V(\bs u ),\overline V(\bs u)\in\mathbb{R}$.
For real-valued random variables $X$ and $Y$, we write $X \le_{st} Y$ for first-order stochastic dominance, i.e.,
$\mathbb P(X \le t) \ge \mathbb P(Y \le t)$ for all $t\in\mathbb R$.

Note that, similar to the result established in Theorem~\ref{thm:equiv-se-ob-smooth-game}, the upper envelope can be equivalently written as
\[
\overline{V}(\bs u) = \max_{\bs p\in\Delta_N}
\bs p^\top \bs u
+
\sum_{i=1}^N\int_{1-p_i}^1 F_{i}^{-1}(t)\diff t,
\]
indicating that $\overline{V}(\bs u)$ is only determined by the one-dimensional marginals regardless of the dependence structure.
We can also characterize the lower envelope $\underline V(\bs u)$ as follows. Intuitively, to make the ex-post
best option as small as possible, the dependence structure should minimize the likelihood that at least
one action receives an unusually favorable perturbation. This is achieved by aligning the shocks across
actions so that they move together in the same rank order. The next proposition shows that the resulting
comonotone coupling makes the maximum payoff stochastically smallest among all joint laws with the
given marginals, and it yields a one-dimensional representation of $\underline V(\bs u)$.

\begin{proposition}
\label{prop:sharp-lower-bound}
Let $U\sim \mathrm{Unif}[0,1]$ and define the comonotone coupling $
\xi_i^{\mathrm{co}} = F_i^{-1}(U)$, $i\in[N],$ then the  joint law $\mu^{\mathrm{co}}\in \mc B$ and for every $\mu\in \mc B$,
\[
\max_{i\in[N]}u_i+\xi_i^{\mathrm{co}}\le_{\mathrm{st}} \max_{i\in[N]}u_i+\xi_i.
\]
Additionally, $\underline V(\bs u)$ is attained at $\mu^{\mathrm{co}}$ and is given by
\[
\underline{V}(\bs u)
=
\int_0^1\left( \max_{i\in[N]}u_i+F_i^{-1}(t)\right)\diff t.
\]
\end{proposition}

Proposition~\ref{prop:sharp-lower-bound} implies that $\underline V(\bs u)$ depends only on the
marginal distributions and can be computed from a one-dimensional integral. Economically, the
comonotone coupling corresponds to a common-shock benchmark in which actions are simultaneously
good or bad, minimizing the scope for a favorable outlier \citep{ref:dhaene2002comonotonicity_theory}.
More generally, because bandit feedback does not identify the dependence structure of payoff
perturbations by Proposition~\ref{prop:copula-nonidentification}, the inclusive value $V(\bs u;\mu)$
cannot be pinned down uniquely from the data and may vary across joint laws that share the same
action-wise marginals. The bounds $\underline V(\bs u)$ and $\overline V(\bs u)$ therefore summarize
the range of payoff evaluations consistent with the observed feedback. SE-OB resolves this residual
indeterminacy by evaluating payoffs at the upper end of this range.

}

\section{Empirical Content and Falsifiability of SE-OB}
\label{sec:falsifiability}
A central purpose of structural equilibrium models is to deliver \textit{testable restrictions} linking
observable behavior (\textit{e.g.}, choice probabilities) to primitives (payoffs and behavioral parameters), thereby enabling the discipline of counterfactual predictions.
If a model can rationalize essentially any observed behavior, then choice data cannot refute the model, and the discipline is lost.
A canonical illustration is QRE, for which, without strong restrictions on payoff shocks, QRE can rationalize arbitrary behavior in any normal-form game~\citep{haile2008empirical}.

We let $\mathcal G$ denote the class of finite $M$-player, $N$-action normal-form games and let the outcome space be $\mathcal Y = (\Delta_N)^M$. Since $\Delta_N$ is $(N-1)$-dimensional, its topological interior in $\R^N$ is empty; throughout, we therefore use the relative interior, $\Delta_N^\circ = \{\bs p \in \Delta_N: p_i > 0,~\forall i \in [N]\}$. We write $\mc Y^\circ = (\Delta^\circ_N)^M$ for the fully mixed outcome region. 
 An equilibrium model is a correspondence:
\[\mc E : \mc G \times \Theta \rightrightarrows \mc Y^\circ, \]
where $\theta \in \Theta$ collects the primitive objects. 
In the SE-OB model, $
\theta=(\mathcal B_1,\dots,\mathcal B_M)$,~ $\mathcal B_j\subseteq \mathcal P(\R^N)$, 
i.e., $\theta$ is the profile of players' belief sets over payoff perturbations. Under marginal belief sets, each
$\mathcal B_j$ is the set of all joint laws on $\R^N$ with fixed action-wise marginals $(F_{j,i})_{i\in[N]}$, so
equivalently, $\theta$ may be taken as $(F_{j,i})_{j\in[M],~i\in[N]}$ or a finite-dimensional parameterization thereof.

For a fixed game $G \in \mc G$, the model class $\Theta$ predicts the set of rationalizable outcomes $\mc E_{\Theta}(G)= \bigcup_{\theta\in\Theta}\mathcal E(G,\theta)\subseteq\mathcal Y^\circ$.
The model class is said to be non-falsifiable if, for every game $G \in \mc G$, every mixed action profile in $\mc Y^\circ$ can be rationalized by some $\theta \in \Theta$. The following definition formalizes this notion.
\begin{definition}[Falsifiability]\label{def:falsifiability}
The model class is {non-falsifiable} on $\mathcal G$ if $\mathcal E_{\Theta}(G)=\mathcal Y^\circ$ for all
$G\in\mathcal G$.
The model class is
{falsifiable} if there exist $G\in\mathcal G$ and $\bs P\in\mathcal Y^\circ$ such that $\bs P\notin\mathcal E_{\Theta}(G)$.
\end{definition}

Our equilibrium correspondence is defined on the full outcome space
$\mathcal Y=(\Delta_N)^M$, which allows for boundary behavior (e.g., sparse play under
the uniform model in Example~\ref{ex:uniform}). For empirical-content and falsifiability
statements, we often restrict attention to the relative interior
$\mathcal Y^\circ=(\Delta_N^\circ)^M$, where $\Delta_N^\circ=\{p\in\Delta_N:\; p_i>0\ \forall i\}$.
This is standard in perturbation-based models: full-support payoff shocks (e.g., logit)
imply strictly positive choice probabilities and hence predictions in $\Delta_N^\circ$
(and avoid zero-likelihood issues in estimation).\footnote{For logit,
$p_i=\exp(u_i/\eta)/\sum_k \exp(u_k/\eta)>0$ for all $i \in [N]$.}
Formally, non-falsifiability on $\mathcal Y$ is stronger than on $\mathcal Y^\circ$:
$E_\Theta(G)=\mathcal Y$ implies $E_\Theta(G)\supseteq \mathcal Y^\circ$, but not conversely,
since boundary profiles impose additional restrictions. Economically, the weaker notion is
already decisive: if a model is non-falsifiable on $\mathcal Y^\circ$, then it cannot be
rejected whenever observed play is fully mixed (even if some probabilities are small).
In what follows, we first formalize that SE-OB inherits the {non-testable restrictions} property of unrestricted QRE. 
\begin{proposition}
Fix a game $G \in \mc G$. For any $\bs Q \in \mc Y^\circ$, there exists non-empty belief sets $\{\mc B_j\}_{j \in [M]}$ such that $\bs Q$ forms an SE-OB of the game $G$. 
\label{prop:non-falsifiable-se-ob}
\end{proposition}
The following theorem establishes that SE-OB is falsifiable on $\mathcal Y^\circ$, and hence delivers nontrivial empirical content.
\begin{theorem}[Falsifiability of SE-OB with Marginal Belief Sets]
Fix a collection of marginal distributions $(F_{j,i})_{j\in[M],\,i\in[N]}$ satisfying Assumption~\ref{ass:marginal-belief-sets-f-strict},
and consider the SE-OB model in which each player belief set is the corresponding marginal belief set.
Then the induced equilibrium correspondence is falsifiable on $Y$.
\label{thm:falsifiability-seob}
\end{theorem}

\section{Empirical Evidence}
\label{sec:empirical}

This section illustrates the empirical implications of SE-OB through two complementary exercises. 
First, we consider a stylized choice environment that isolates violations of Luce independence of irrelevant alternatives. 
Second, we use synthetic normal form games to compare equilibrium predictions under logit QRE and SE-OB across different marginal specifications.\footnote{The code is publicly available in the Github repository \url{https://github.com/yugjerry/seob}.}

\subsection{Illustration: Choice Probabilities under Luce's IIA Violations}
\label{sec:experiments-bus-train}
Consider a two-player coordination game\footnote{A coordination game is one in which players achieve positive payoffs only when choosing identical actions; otherwise, their payoffs are zero.} with $K+1$ actions, where $K \in \mathbb N_+$. For a fixed strategy profile of the second player, the action ``Train'' yields the highest expected payoff for the first player, set at $u_T = 1$, while each ``Bus'' action provides an identical lower expected payoff $u_B = u_{B_k} = 0.9$ for all $k\in[K]$.
When the choice probabilities of the first player are computed according to Example~\ref{ex:logit-QRE} with $\eta = 1$, then the probability of selecting ``Train'' is given by $\Pr(\textrm{Train}) = {\exp(1)}/({\exp(1) + K \exp(0.9)})$. Note that as $K \to \infty$, $\Pr(\textrm{Train}) \to 0$. 
Thus, each bus clone equally draws choice probabilities away from the Train, illustrating that the choice probabilities indeed satisfy Luce's IIA.

In contrast, when the first player's choice probabilities are computed according to Example~\ref{ex:uniform} with each $\xi_i \sim \mathrm{Unif}([-\gamma/2, \gamma/2])$ for all $i \in [N]$ with $\gamma =1$.
Then, the probability of selecting ``Train'' is given by $\Pr(\textrm{Train})=\textrm{sparsemax}(u_T, u_{B_1}, \ldots, u_{B_K})_1$. Because all Bus alternatives have equal utilities, $\Pr(\text{Bus}) = (1- \Pr(\text{Train})) / K$ on each Bus when buses are on the support. For $\gamma > u_T - u_{B}$, we have the explicit formula:
\[\Pr(\text{Train}) = \frac{1 + \frac{K}{\gamma}(u_T - u_B)}{ K +1} \xrightarrow[K\to\infty]{}\frac{u_T-u_B}{\gamma}
=
\frac{0.1}{\gamma}. \]
Thus, unlike Logit-QRE (as presented in Example~\ref{ex:exp-marginal}) $\Pr(\text{Train})$ does not collapse to zero as more identical Bus clones are added to the menu: it converges to a strictly positive constant $0.1$ when $\gamma =1$. 
\begin{wrapfigure}{r}{0.45\textwidth}
  \centering
  \vspace{-0.5em}
  \includegraphics[width=\linewidth]{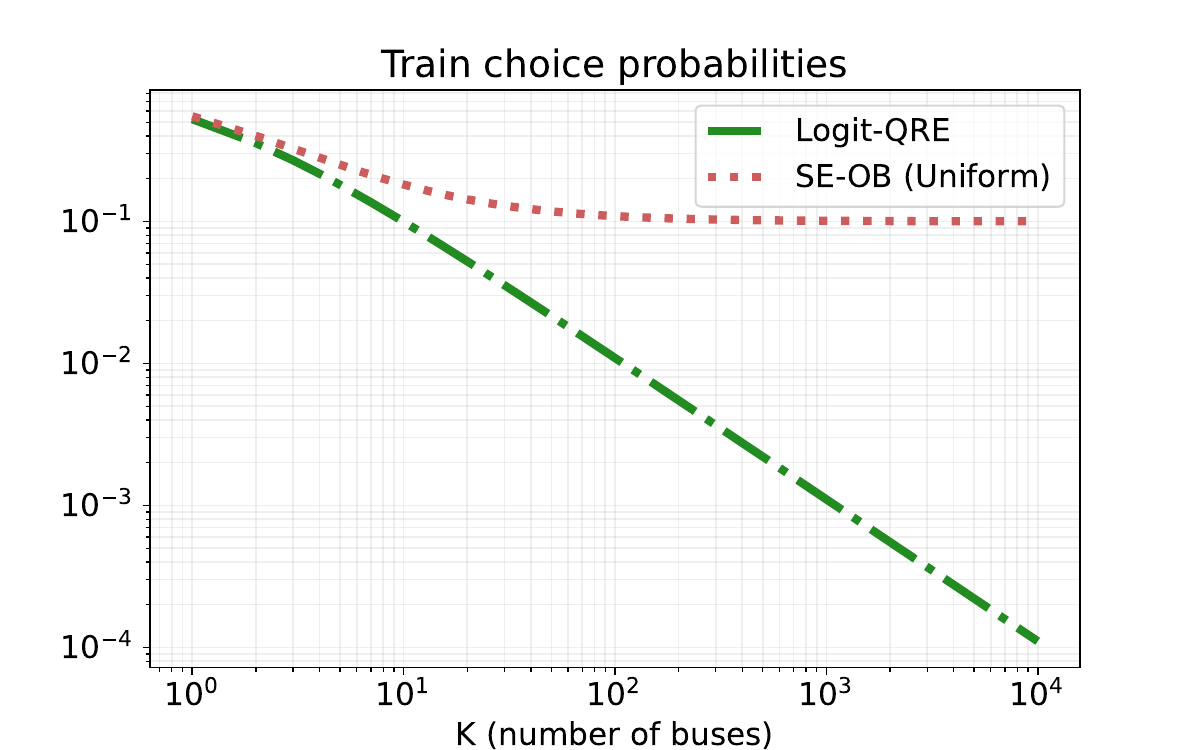}
  \caption{Log-log plot of $\Pr(\text{Train})$ as a function of the number of buses $K$ under logit-QRE and SE-OB with uniform marginal distributions.}
  \label{fig:train-prob}
  \vspace{-0.5em}
\end{wrapfigure}

Figure~\ref{fig:train-prob} reveals three key shortcomings of Luce's independence-of-irrelevant-alternatives (IIA) axiom. First, such choice probabilities suffer from \textit{predictive fragility}, as logit-QRE erroneously predicts that equilibria increasingly shift toward dominated alternatives when identical options proliferate. Second, it leads to {welfare erosion}, effectively ``pricing out'' the superior action as the number of inferior clones grows.
Moreover, such proportional dilution is difficult to reconcile
with laboratory evidence on context effects, where adding alternatives can change relative choice odds
among existing options in ways that violate IIA \citep{ref:huber1982adding}.
These pathologies vanish when the restrictive IIA assumption is relaxed. The SE-OB framework achieves such relaxation in a principled manner: if each player's belief set $\mathcal{B}_j$ consists solely of i.i.d. Gumbel distributions, it recovers logit-QRE exactly, or if $\mathcal B_j$ consists of marginal distributions of exponential distributions (see Example \ref{ex:exp-marginal}). However, if players' belief sets allow correlated or bounded-support disturbances, SE-OB with uniform marginal distributions naturally generates sparse choice probabilities, IIA-violating equilibria, as illustrated in Figure~\ref{fig:train-prob}.

\subsection{Equilibrium predictions}
We use two synthetic $2\times2$ normal-form games to illustrate how equilibrium predictions under the parametric SE-OB specifications of Section~\ref{sec:SE-OB} depend on the parameter~$\gamma$. 
Here $\gamma$ controls the strength of regularization (equivalently, the level of payoff noise): smaller $\gamma$ yields behavior closer to best responses, while larger $\gamma$ produces more diffuse mixing.
In each game, the row player (player~1) and the column player (player~2) choose between two actions,
denoted $A$ and $B$.
Payoffs are given by matrices $U_1,U_2\in\mathbb{R}^{2\times 2}$, where
$[U_1]_{ij}$ (respectively $[U_2]_{ij}$) is the payoff to the row (respectively column) player when the row chooses
$i\in\{A,B\}$ and the column chooses $j\in\{A,B\}$.
More concretely, the first game is defined by
% \begin{align*}
% U_1=\begin{pmatrix}1500 & 0\\ 0 & 500\end{pmatrix},\qquad
% U_2=\begin{pmatrix}0 & 45\\ 48 & 0\end{pmatrix},
% \end{align*}
\begin{align*}
U_1^{\mathrm{syn}}=\begin{pmatrix}
1 & 0\\[2pt]
0 & \tfrac{1}{3}
\end{pmatrix},\qquad
U_2^{\mathrm{syn}}=\begin{pmatrix}
0 & \tfrac{3}{100}\\[2pt]
\tfrac{4}{125} & 0
\end{pmatrix}.
\end{align*}
which admits one unique Nash equilibrium with $(p^*_1,q^*_1) = (16/31,1/4)$. The second game is defined by
% \begin{align*}
% U_1=\begin{pmatrix}0 & 500\\ 160 & 0\end{pmatrix},\qquad
% U_2=\begin{pmatrix}0 & 40\\ 500 & 0\end{pmatrix},
% \end{align*}
\begin{align*}
U_1^{\mathrm{syn}}=\begin{pmatrix}
0 & 1\\[2pt]
\tfrac{8}{15} & 0
\end{pmatrix},\qquad
U_2^{\mathrm{syn}}=\begin{pmatrix}
0 & \tfrac{2}{15}\\[2pt]
1 & 0
\end{pmatrix}.
\end{align*}
which has three Nash equilibria: two pure equilibria, $(p,q)=(1,0)$ and $(p,q)=(0,1)$, and one mixed
equilibrium $(p_1^*,q_1^*)=({15}/ {17},{15}/{23})$.

We compare logit-QRE to SE-OB under two marginal families as presented in Examples~\ref{ex:uniform} and~\ref{ex:pareto}.
For a given $\gamma>0$, let $p(\gamma)\in[0,1]$ denote the equilibrium probability that the row player plays $A$,
and let $q(\gamma)\in[0,1]$ denote the equilibrium probability that the column player plays $A$.
For each model, we compute the induced equilibrium pair $(p(\gamma),q(\gamma))$ over a range of $\gamma$ values.

The parameter $\gamma$ controls the amount of regularization (pay-off noise) in the players' response rules:
larger $\gamma$ corresponds to noisier (more diffuse) behavior, while smaller $\gamma$ corresponds to more
nearly best-response behavior. {Equivalently, $\gamma$ plays the role of a ``temperature'' parameter:
as $\gamma\downarrow 0$, regularization vanishes and equilibria approach Nash (up to equilibrium selection when
multiple Nash equilibria exist), whereas as $\gamma\uparrow\infty$ the induced choice probabilities become close
to uniform.}
\begin{figure}[h]
  \centering
  \begin{subfigure}[t]{0.49\textwidth}
    \centering
    \includegraphics[width=\linewidth]{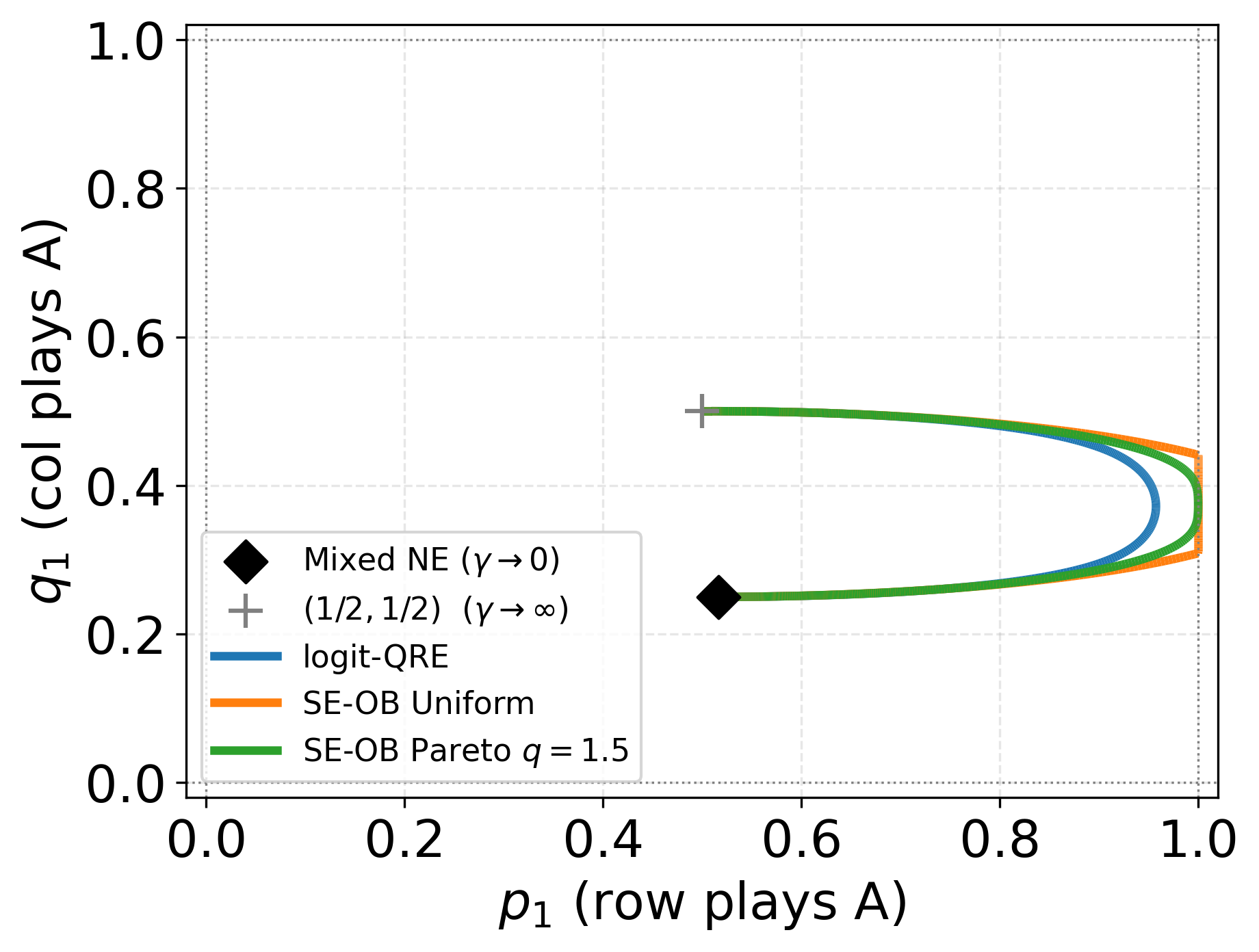}
    \caption{Game 1: SE-OB vs QRE trajectories.}
    \label{fig:traj-g1}
  \end{subfigure}\hfill
  \begin{subfigure}[t]{0.49\textwidth}
    \centering
    \includegraphics[width=\linewidth]{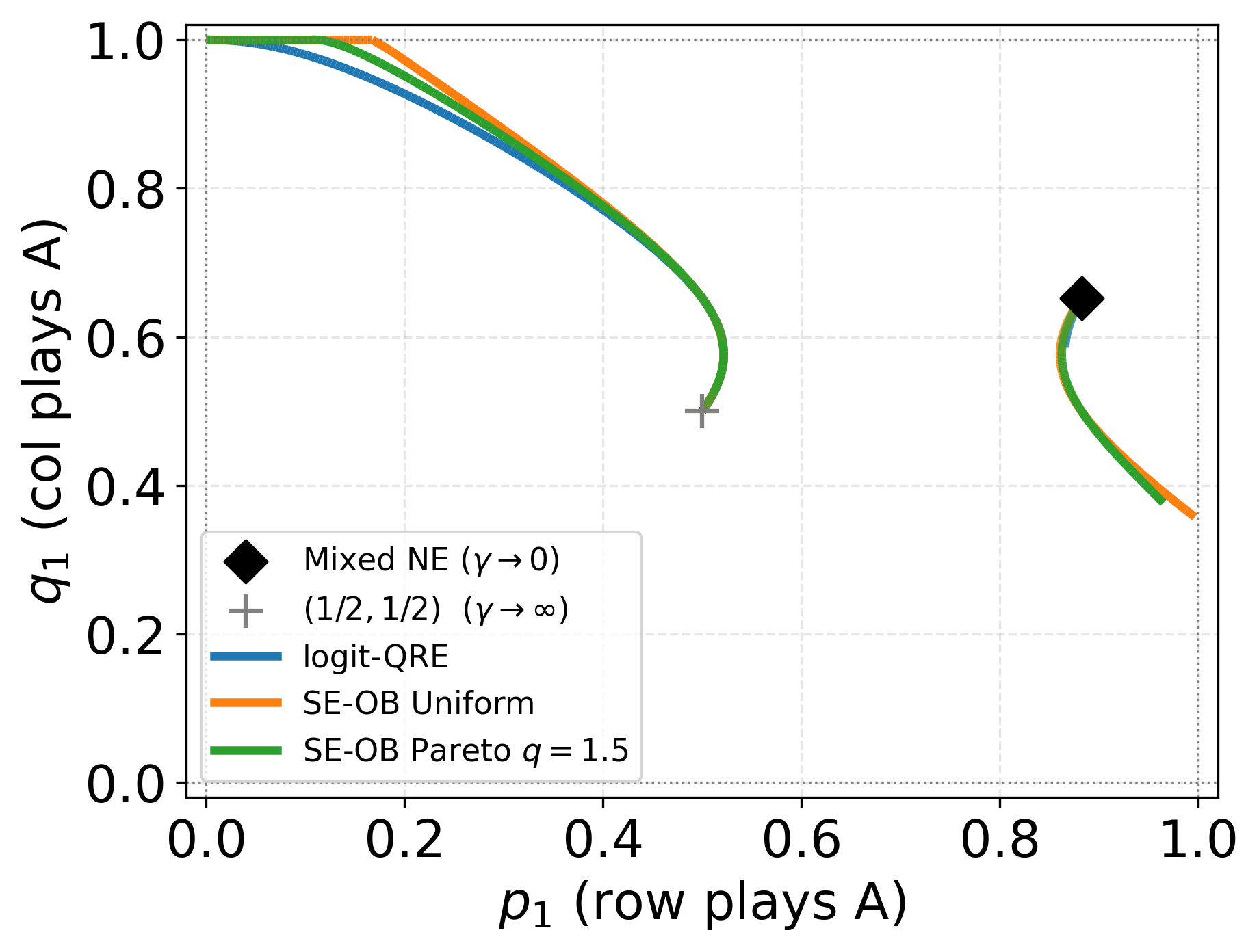}
    \caption{Game 2: SE-OB vs QRE trajectories.}
    \label{fig:traj-g2}
  \end{subfigure}
  \caption{Comparison of equilibrium trajectories $(p_1(\gamma),q_1(\gamma))$ under logit-QRE and SE-OB (uniform and Pareto marginals).}
  \label{fig:traj-side-by-side}
\end{figure}

Figure~\ref{fig:traj-side-by-side} shows a clear qualitative difference between logit-QRE and SE-OB variants as $\gamma$ varies.
Logit-QRE is generated by a softmax response rule, which assigns strictly positive probability to every action for any
finite $\gamma>0$. As a result, the logit-QRE trajectory remains strictly in the interior of the strategy simplex and
moves smoothly as $\gamma$ changes; in particular, it cannot produce exactly sparse behavior (probabilities equal to
$0$ or $1$) except in the limiting case $\gamma\downarrow 0$.
In contrast, SE-OB with uniform marginals admits sparse responses (and hence sparse equilibria) already at finite $\gamma$:
as $\gamma$ decreases, actions that are sufficiently suboptimal can drop out of the support entirely.
This generates trajectories that can approach (or hit) the boundary of the simplex and can exhibit kinks at the values of
$\gamma$ where the equilibrium support changes.
SE-OB with Pareto marginals provides an intermediate behavior (depending on the shape parameter of the Pareto distributions),
allowing support shrinkage while typically producing smoother transitions than the uniform case.
Overall, across both games, SE-OB traces a wider set of admissible equilibrium mixtures in the $(p(\gamma),q(\gamma))$ plane
than logit-QRE. This additional flexibility is particularly relevant in environments where observed play is near-deterministic,
features sharp regime shifts as incentives change, or displays strong asymmetries that are difficult to reconcile with the
smooth interior paths implied by logit-QRE.

\section{Concluding Remarks and Limitations}
\label{sec:discuss}

This paper introduced the \emph{Statistical Equilibrium of Optimistic Beliefs} (SE-OB) for the mixed extension of finite normal-form games. The key ingredient is \emph{dependence ambiguity} in payoff perturbations: players may be able to discipline action-wise marginal uncertainty from data, yet the dependence structure of shocks across actions (the copula) is typically not identified when counterfactual payoffs are unobserved. SE-OB formalizes a behavioral resolution of this ambiguity. Given opponents’ strategies, a player selects, from a prescribed belief set over joint perturbation laws, the distribution that maximizes $\mathbb{E}[\max_{i \in [N]}\{u_i+\xi_i\}]$ and then plays the induced random-utility choice rule. The SE-OB is a fixed point of this two-step mapping, and its existence is established in this paper.

SE-OB nests familiar benchmarks: Nash equilibrium is recovered as perturbations vanish, and structural QRE is recovered when belief sets collapse to singletons. Beyond existence under standard regularity conditions, we showed that SE-OB becomes especially tractable for \emph{marginal belief sets}. In this economically natural class, optimistic belief selection reduces to an optimal coupling problem, and the equilibrium can be characterized as a Nash equilibrium of an associated smooth, regularized game. This representation yields a computationally convenient fixed-point formulation, supports comparative statics, and connects SE-OB to a broad family of regularization-based response rules, including softmax/logit behavior and sparse responses.

On the empirical side, unrestricted SE-OB inherits the non-falsifiability of unrestricted QRE on the fully mixed region; however, restricting attention to marginal belief sets restores testable restrictions on outcomes. In numerical exercises, SE-OB captures systematic departures from logit-QRE predictions, including context effects and violations of independence of irrelevant alternatives that arise naturally when dependence across payoff shocks is left unrestricted.

\noindent\textbf{Limitations and directions.} Important next steps include (i) allowing intermediate attitudes toward dependence ambiguity (beyond the optimistic $\max$ criterion) and studying identification of these parameters; (ii) developing alternative, empirically grounded specifications of belief sets and understanding how estimation error in marginals propagates to equilibrium predictions; and (iii) strengthening the algorithmic and empirical side, robustness to noisy/partial feedback, extensions beyond finite normal-form games, and direct laboratory or field tests that vary the informativeness of counterfactual payoff feedback.

% \section*{Broad impact}
% This paper introduces the concept of the Statistical Equilibrium of Optimistic Beliefs (SE-OB) for the mixed extension of multi-player finite normal-form games, which shows great potential in modeling players' actual behaviors in reality, as is validated in asymmetric matching pennies games.

\bibliographystyle{abbrvnat}
\bibliography{bib}

@String{Computing = "Computing" }

@String{Macmillan = "Macmillan" }

@String{Springer = "Springer-Verlag" }

@article{natarajan2009persistency,
  title={Persistency model and its applications in choice modeling},
  author={Natarajan, Karthik and Song, Miao and Teo, Chung-Piaw},
  journal={Management Science},
  volume={55},
  number={3},
  pages={453--469},
  year={2009},
  publisher={INFORMS}
}

@inproceedings{ref:hsieh2021adaptive,
  title={Adaptive learning in continuous games: {O}ptimal regret bounds and convergence to {N}ash equilibrium},
  author={Hsieh, Yu-Guan and Antonakopoulos, Kimon and Mertikopoulos, Panayotis},
  booktitle={Conference on Learning Theory},
  pages={2388--2422},
  year={2021}
}

@article{ref:huber1982adding,
  title={Adding asymmetrically dominated alternatives: {V}iolations of regularity and the similarity hypothesis},
  author={Huber, Joel and Payne, John W and Puto, Christopher},
  journal={Journal of Consumer Research},
  volume={9},
  number={1},
  pages={90--98},
  year={1982}
}

@article{ref:rosen1965existence,
  title={Existence and uniqueness of equilibrium points for concave $n$-person games},
  author={Rosen, J Ben},
  journal={Econometrica: Journal of the Econometric Society},
  pages={520--534},
  year={1965},
  publisher={JSTOR}
}

@article{daskalakis2023smooth,
  title={Smooth {N}ash equilibria: {A}lgorithms and complexity},
  author={Daskalakis, Constantinos and Golowich, Noah and Haghtalab, Nika and Shetty, Abhishek},
  journal={arXiv:2309.12226},
  year={2023}
}

@article{tacskesen2023semi,
  title={Semi-discrete optimal transport: Hardness, regularization and numerical solution},
  author={Ta{\c{s}}kesen, Bahar and Shafieezadeh-Abadeh, Soroosh and Kuhn, Daniel},
  journal={Mathematical Programming},
  volume={199},
  number={1},
  pages={1033--1106},
  year={2023},
  publisher={Springer}
}

@article{ref:debreu1960review,
  title={Review of individual choice behavior: {A} Theoretical Analysis, by {R}.{D}. {L}uce},
  author={Debreu, Gerard},
  journal={The American Economic Review},
  volume={50},
  number={1},
  pages={186--188},
  year={1960}
}

@article{aghassi2006robust,
  title={Robust game theory},
  author={Aghassi, Michele and Bertsimas, Dimitris},
  journal={Mathematical programming},
  volume={107},
  number={1},
  pages={231--273},
  year={2006},
  publisher={Springer}
}

@article{nash1950equilibrium,
  title={Equilibrium points in $n$-person games},
  author={Nash Jr, John F},
  journal={Proceedings of the national academy of sciences},
  volume={36},
  number={1},
  pages={48--49},
  year={1950},
  publisher={National Acad Sciences}
}

@article{harsanyi1967games,
  title={Games with incomplete information played by “{B}ayesian” players, I--III Part I. The basic model},
  author={Harsanyi, John C},
  journal={Management science},
  volume={14},
  number={3},
  pages={159--182},
  year={1967},
  publisher={INFORMS}
}

@article{feingold1962block,
  title={Block diagonally {G}erschgorin circle theorem.},
  author={Feingold, David G and Varga, Richard S},
  year={1962}
}

@article{cremer1985optimal,
  title={Optimal selling strategies under uncertainty for a discriminating monopolist when demands are interdependent},
  author={Cr{\'e}mer, Jacques and McLean, Richard P},
  journal={Econometrica},
  volume={53},
  number={2},
  pages={345--361},
  year={1985},
  publisher={ECONOMETRIC SOCIETY C/O BASIL BLACKWELL LTD, C/O MARSTON BOOK SERVICES, PO~…}
}

@article{holmstrom1983efficient,
  title={Efficient and durable decision rules with incomplete information},
  author={Holmstr{\"o}m, Bengt and Myerson, Roger B},
  journal={Econometrica: Journal of the Econometric Society},
  pages={1799--1819},
  year={1983},
  publisher={JSTOR}
}

@article{liu2024bayesian,
  title={Bayesian Distributionally Robust {N}ash Equilibrium and Its Application},
  author={Liu, Jian and Su, Ziheng and Xu, Huifu},
  journal={arXiv:2410.20364},
  year={2024}
}

@article{liu2018distributionally,
  title={Distributionally robust equilibrium for continuous games: {N}ash and {S}tackelberg models},
  author={Liu, Yongchao and Xu, Huifu and Yang, Shu-Jung Sunny and Zhang, Jin},
  journal={European Journal of Operational Research},
  volume={265},
  number={2},
  pages={631--643},
  year={2018},
  publisher={Elsevier}
}

@article{loizou2016distributionally,
  title={Distributionally robust games with risk-averse players},
  author={Loizou, Nicolas},
  journal={arXiv:1610.00651},
  year={2016}
}

@article{daskalakis2013complexity,
  title={On the complexity of approximating a {N}ash equilibrium},
  author={Daskalakis, Constantinos},
  journal={ACM Transactions on Algorithms (TALG)},
  volume={9},
  number={3},
  pages={1--35},
  year={2013},
  publisher={ACM New York, NY, USA}
}

@article{moulin1978strategically,
  title={Strategically zero-sum games: {T}he class of games whose completely mixed equilibria cannot be improved upon},
  author={Moulin, Herv{\'e} and Vial, J -P},
  journal={International Journal of Game Theory},
  volume={7},
  pages={201--221},
  year={1978},
  publisher={Springer}
}

@article{aumann1987correlated,
  title={Correlated equilibrium as an expression of {B}ayesian rationality},
  author={Aumann, Robert J},
  journal={Econometrica: Journal of the Econometric Society},
  pages={1--18},
  year={1987},
  publisher={JSTOR}
}

@article{viossat2013no,
  title={No-regret dynamics and fictitious play},
  author={Viossat, Yannick and Zapechelnyuk, Andriy},
  journal={Journal of Economic Theory},
  volume={148},
  number={2},
  pages={825--842},
  year={2013},
  publisher={Elsevier}
}

@article{mckelvey1995quantal,
  title={Quantal response equilibria for normal form games},
  author={McKelvey, Richard D and Palfrey, Thomas R},
  journal={Games and economic behavior},
  volume={10},
  number={1},
  pages={6--38},
  year={1995},
  publisher={Elsevier}
}

@article{mazumdar2024tractable,
  title={Tractable Equilibrium Computation in {M}arkov Games through Risk Aversion},
  author={Mazumdar, Eric and Panaganti, Kishan and Shi, Laixi},
  journal={arXiv:2406.14156},
  year={2024}
}

@article{mertikopoulos2019learning,
  title={Learning in games with continuous action sets and unknown payoff functions},
  author={Mertikopoulos, Panayotis and Zhou, Zhengyuan},
  journal={Mathematical Programming},
  volume={173},
  pages={465--507},
  year={2019},
  publisher={Springer}
}

@article{spielman2009smoothed,
  title={Smoothed analysis: {A}n attempt to explain the behavior of algorithms in practice},
  author={Spielman, Daniel A and Teng, Shang-Hua},
  journal={Communications of the ACM},
  volume={52},
  number={10},
  pages={76--84},
  year={2009}
}

@article{ref:abraham2017tomographic,
  title={Tomographic reconstruction from a few views: {A} multi-marginal optimal transport approach},
  author={Abraham, Isabelle and Abraham, Romain and Bergounioux, Ma{\i}tine and Carlier, Guillaume},
  journal={Applied mathematics \& optimization},
  volume={75},
  number={1},
  pages={55--73},
  year={2017},
  publisher={Springer}
}

@inproceedings{ref:martins2016softmax,
  title={From softmax to sparsemax: {A} sparse model of attention and multi-label classification},
  author={Martins, Andre and Astudillo, Ramon},
  booktitle={International conference on machine learning},
  pages={1614--1623},
  year={2016},
  organization={PMLR}
}

@incollection{mcfadden1981econometric,
  title={Econometric models of probabilistic choice},
  author={McFadden, Daniel},
  booktitle={Structural Analysis of Discrete Data with Econometric Applications},
  pages = {198--272},
  chapter = {5},
  publisher = {MIT Press},
  year={1981}
}

@book{ref:shapley1963some,
  title={Some topics in two-person games},
  author={Shapley, Lloyd S},
  year={1963},
  publisher={Rand Corporation}
}

@article{ref:mcfadden1976quantal,
  title={Quantal choice analysis: {A} survey},
  author={McFadden, Daniel and others},
  journal={Annals of economic and social measurement},
  volume={5},
  number={4},
  pages={363--390},
  year={1976}
}

@article{nesterov2009primal,
  title={Primal-dual subgradient methods for convex problems},
  author={Nesterov, Yurii},
  journal={Mathematical programming},
  volume={120},
  number={1},
  pages={221--259},
  year={2009},
  publisher={Springer}
}

@article{golowich2020tight,
  title={Tight last-iterate convergence rates for no-regret learning in multi-player games},
  author={Golowich, Noah and Pattathil, Sarath and Daskalakis, Constantinos},
  journal={Advances in neural information processing systems},
  volume={33},
  pages={20766--20778},
  year={2020}
}

@article{ref:gabaix2014sparsity,
  title={A sparsity-based model of bounded rationality},
  author={Gabaix, Xavier},
  journal={The Quarterly Journal of Economics},
  volume={129},
  number={4},
  pages={1661--1710},
  year={2014},
  publisher={MIT Press}
}

@article{cen2021fast,
  title={Fast policy extragradient methods for competitive games with entropy regularization},
  author={Cen, Shicong and Wei, Yuting and Chi, Yuejie},
  journal={Advances in Neural Information Processing Systems},
  volume={34},
  pages={27952--27964},
  year={2021}
}

@inproceedings{cen2023faster,
  title={Faster last-iterate convergence of policy optimization in zero-sum {M}arkov games},
  author={Cen, Shicong and Chi, Yuejie and Du, S and Xiao, Lin},
  booktitle={International Conference on Learning Representations (ICLR)},
  year={2023}
}

@article{sun2024linear,
  title={Linear convergence of independent natural policy gradient in games with entropy regularization},
  author={Sun, Youbang and Liu, Tao and Kumar, PR and Shahrampour, Shahin},
  journal={IEEE Control Systems Letters},
  year={2024},
  publisher={IEEE}
}

@article{lieberman1960human,
  title={Human behavior in a strictly determined 3$\times$3 matrix game},
  author={Lieberman, Bernhardt},
  journal={Behavioral Science},
  volume={5},
  number={4},
  pages={317--322},
  year={1960},
  publisher={Wiley Online Library}
}

@article{brayer1964experimental,
  title={An experimental analysis of some variables of minimax theory},
  author={Brayer, A Richard},
  journal={Behavioral science},
  volume={9},
  number={1},
  pages={33--44},
  year={1964},
  publisher={Wiley Online Library}
}

@article{o1987nonmetric,
  title={Nonmetric test of the minimax theory of two-person zerosum games},
  author={O'Neill, Barry},
  journal={Proceedings of the national academy of sciences},
  volume={84},
  number={7},
  pages={2106--2109},
  year={1987},
  publisher={National Acad Sciences}
}

@article{brown1990testing,
  title={Testing the minimax hypothesis: {A} re-examination of O'Neill's game experiment},
  author={Brown, James N and Rosenthal, Robert W},
  journal={Econometrica: Journal of the Econometric Society},
  pages={1065--1081},
  year={1990},
  publisher={JSTOR}
}

@article{rapoport1992mixed,
  title={Mixed strategies in strictly competitive games: {A} further test of the minimax hypothesis},
  author={Rapoport, Amnon and Boebel, Richard B},
  journal={Games and Economic Behavior},
  volume={4},
  number={2},
  pages={261--283},
  year={1992},
  publisher={Elsevier}
}

@article{nagel1995unraveling,
  title={Unraveling in guessing games: {A}n experimental study},
  author={Nagel, Rosemarie},
  journal={The American economic review},
  volume={85},
  number={5},
  pages={1313--1326},
  year={1995},
  publisher={JSTOR}
}

@article{goeree2001ten,
  title={Ten little treasures of game theory and ten intuitive contradictions},
  author={Goeree, Jacob K and Holt, Charles A},
  journal={American Economic Review},
  volume={91},
  number={5},
  pages={1402--1422},
  year={2001},
  publisher={American Economic Association}
}

@article{gintis2005behavioral,
  title={Behavioral game theory and contemporary economic theory},
  author={Gintis, Herbert},
  journal={Analyse \& Kritik},
  volume={27},
  number={1},
  pages={48--72},
  year={2005},
  publisher={Lucius \& Lucius}
}

@article{camerer1997progress,
  title={Progress in behavioral game theory},
  author={Camerer, Colin F},
  journal={Journal of economic perspectives},
  volume={11},
  number={4},
  pages={167--188},
  year={1997},
  publisher={American Economic Association}
}

@article{ref:dhaene2002comonotonicity_theory,
  title   = {The concept of comonotonicity in actuarial science and finance: {T}heory},
  author  = {Dhaene, Jan and Denuit, Michel and Goovaerts, Marc J. and Kaas, Rob and Vyncke, David},
  journal = {Insurance: Mathematics and Economics},
  volume  = {31},
  number  = {1},
  pages   = {3--33},
  year    = {2002}
}

@article{daskalakis2009complexity,
  title={The complexity of computing a {N}ash equilibrium},
  author={Daskalakis, Constantinos and Goldberg, Paul W and Papadimitriou, Christos H},
  journal={SIAM Journal on Computing},
  volume={39},
  pages={195--259}, 
  year={2009}
}

@article{chen2009settling,
  title={Settling the complexity of computing two-player {N}ash equilibria},
  author={Chen, Xi and Deng, Xiaotie and Teng, Shang-Hua},
  journal={Journal of the ACM},
  volume={56},
  number={3},
  pages={1--57},
  year={2009}
}

@inproceedings{rubinstein2015inapproximability,
  title={Inapproximability of {N}ash equilibrium},
  author={Rubinstein, Aviad},
  booktitle={ACM symposium on Theory of computing},
  pages={409--418},
  year={2015}
}

@article{hart2010long,
  title={How long to equilibrium? {T}he communication complexity of uncoupled equilibrium procedures},
  author={Hart, Sergiu and Mansour, Yishay},
  journal={Games and Economic Behavior},
  volume={69},
  number={1},
  pages={107--126},
  year={2010},
  publisher={Elsevier}
}

@article{brown1951iterative,
  title={Iterative solution of games by fictitious play},
  author={Brown, George W},
  journal={Act. Anal. Prod Allocation},
  volume={13},
  number={1},
  pages={374},
  year={1951}
}

@article{robinson1951iterative,
  title={An iterative method of solving a game},
  author={Robinson, Julia},
  journal={Annals of mathematics},
  volume={54},
  number={2},
  pages={296--301},
  year={1951},
  publisher={JSTOR}
}

@article{hart2000simple,
  title={A simple adaptive procedure leading to correlated equilibrium},
  author={Hart, Sergiu and Mas-Colell, Andreu},
  journal={Econometrica},
  volume={68},
  number={5},
  pages={1127--1150},
  year={2000},
  publisher={Wiley Online Library}
}

@article{goeree2021m,
  title={M Equilibrium: {A} theory of beliefs and choices in games},
  author={Goeree, Jacob K and Louis, Philippos},
  journal={American Economic Review},
  volume={111},
  number={12},
  pages={4002--4045},
  year={2021},
  publisher={American Economic Association 2014 Broadway, Suite 305, Nashville, TN 37203}
}

@book{selten1988reexamination,
  title={Reexamination of the perfectness concept for equilibrium points in extensive games},
  author={Selten, Reinhard and Bielefeld, R Selten},
  year={1988},
  publisher={Springer}
}

@article{harsanyi1973games,
  title={Games with randomly disturbed payoffs: {A} new rationale for mixed-strategy equilibrium points},
  author={Harsanyi, John C},
  journal={International journal of game theory},
  volume={2},
  number={1},
  pages={1--23},
  year={1973},
  publisher={Springer}
}

@article{stahl1994experimental,
  title={Experimental evidence on players' models of other players},
  author={Stahl II, Dale O and Wilson, Paul W},
  journal={Journal of economic behavior \& organization},
  volume={25},
  number={3},
  pages={309--327},
  year={1994},
  publisher={Elsevier}
}

@article{camerer2004cognitive,
  title={A cognitive hierarchy model of games},
  author={Camerer, Colin F and Ho, Teck-Hua and Chong, Juin-Kuan},
  journal={The Quarterly Journal of Economics},
  volume={119},
  number={3},
  pages={861--898},
  year={2004},
  publisher={MIT Press}
}

@article{friedman2005random,
  title={Random belief equilibrium in normal form games},
  author={Friedman, James W and Mezzetti, Claudio},
  journal={Games and Economic Behavior},
  volume={51},
  number={2},
  pages={296--323},
  year={2005},
  publisher={Elsevier}
}

@article{goeree2004model,
  title={A model of noisy introspection},
  author={Goeree, Jacob K and Holt, Charles A},
  journal={Games and Economic Behavior},
  volume={46},
  number={2},
  pages={365--382},
  year={2004},
  publisher={Elsevier}
}

@article{rogers2009heterogeneous,
  title={Heterogeneous quantal response equilibrium and cognitive hierarchies},
  author={Rogers, Brian W and Palfrey, Thomas R and Camerer, Colin F},
  journal={Journal of Economic Theory},
  volume={144},
  number={4},
  pages={1440--1467},
  year={2009},
  publisher={Elsevier}
}

@article{hofbauer2002global,
  title={On the global convergence of stochastic fictitious play},
  author={Hofbauer, Josef and Sandholm, William H},
  journal={Econometrica},
  volume={70},
  number={6},
  pages={2265--2294},
  year={2002}
}

@article{goeree2023s,
  title={S Equilibrium: A Synthesis of (Behavioral) Game Theory},
  author={Goeree, Jacob K and Garcia-Pola, Bernardo},
  journal={arXiv:2307.06309},
  year={2023}
}

@article{loizou2015distributionally,
  title={Distributionally robust game theory},
  author={Loizou, Nicolas},
  journal={arXiv:1512.03253},
  year={2015}
}

@article{kakutani1941generalization,
  title={A generalization of Brouwer’s fixed point theorem},
  author={Kakutani, Shizuo},
  year={1941}
}

@article{hurwicz1951generalized,
  title={The generalized Bayes minimax principle: a criterion for decision making under uncertainty},
  author={Hurwicz, Leonid},
  journal={Cowles Comm. Discuss. Paper Stat},
  volume={335},
  pages={1950},
  year={1951}
}

@article{brunnermeier2005optimal,
  title={Optimal expectations},
  author={Brunnermeier, Markus K and Parker, Jonathan A},
  journal={American Economic Review},
  volume={95},
  number={4},
  pages={1092--1118},
  year={2005},
  publisher={American Economic Association}
}

@article{bracha2012affective,
  title={Affective decision making: A theory of optimism bias},
  author={Bracha, Anat and Brown, Donald J},
  journal={Games and Economic Behavior},
  volume={75},
  number={1},
  pages={67--80},
  year={2012},
  publisher={Elsevier}
}

@article{haile2008empirical,
  title={On the empirical content of quantal response equilibrium},
  author={Haile, Philip A and Horta{\c{c}}su, Ali and Kosenok, Grigory},
  journal={American Economic Review},
  volume={98},
  number={1},
  pages={180--200},
  year={2008},
  publisher={American Economic Association}
}

@book{Keynes:1936,
  added-at = {2009-06-26T15:25:19.000+0200},
  author = {Keynes, J. M.},
  biburl = {https://www.bibsonomy.org/bibtex/2c27c51d717b88bdf405c38cbfdfa65ac/butz},
  description = {diverse cognitive systems bib},
  interhash = {060c8b5ac81500224415ba44501f4164},
  intrahash = {c27c51d717b88bdf405c38cbfdfa65ac},
  keywords = {imported},
  note = {14th edition, 1973},
  publisher = {Macmillan},
  timestamp = {2009-06-26T15:25:41.000+0200},
  title = {The General Theory of Employment, Interest and Money},
  year = 1936
}

\appendix

\newpage
\appendix

% --- Start Title Block ---
\begin{center}
  \vspace*{2\baselineskip} % Optional: Adds space at the top of the page
  {\Huge \bfseries Appendix} % Huge, Bold, Centered
  \vspace{2\baselineskip}   % Space between title and TOC
\end{center}
% --- End Title Block ---
\noindent 
This document contains supplementary material for the paper
\emph{``Statistical Equilibrium of Optimistic Beliefs.''}
It is organized as follows.
Appendices~\ref{app:proofs-existence}--\ref{app:proofs-section-falsifiability} collect proofs of the results stated in the main text.
Appendix~\ref{sec:computation} describes additional computational details, including the numerical routine and its convergence properties.
Appendix~\ref{app:add-lit-review} provides additional discussion of related literature.
Appendix~\ref{sec:examples-proofs} supplies derivations and verifies the assumptions for the examples studied in the main text.
Finally, Appendix~\ref{app:additional-tech-results} contains further technical results used in the appendices.

\addtocontents{toc}{\protect\setcounter{tocdepth}{3}}

{
  \renewcommand{\contentsname}{} % Hides "Contents" header
  \vspace*{-2em}                 % Removes gap left by hidden header
  \tableofcontents
}

% \section{Definitions}
% \label{sec:defs}
% {\color{red}\begin{definition}[Sparsemax] 
% \label{def:sparsemax}
% Given $\boldsymbol{u} \in \mathbb{R}^N$, let $\rho$ be a permutation of $[N]$ with $u_{\rho(1)} \geq u_{\rho(2)} \geq \cdots \geq u_{\rho(N)}$, for all $i \in [N]$, set
% \[
% k=\max \left\{j \in[N]: 2 N +{j \cdot u_{\rho(j)}} >\sum_{i=1}^j  u_{\rho(i)}\right\} \quad \text { and } \quad \kappa^{\star}=\frac{\left(\sum_{i=1}^k  u_{\rho(i)}\right)-2N}{k} .
% \]
% Then, $\text{sparsemax}(\bs u)=([u_i-\kappa^{\star}]_{+} / (2N))_{i \in[N]}$, where $[\cdot]_{+}=\max \{0, \cdot\}$ stands for the ramp function.
% \end{definition}}

\section{Proofs of the results in Section~\ref{sec:SE-OB}}
\label{app:proofs-existence}
This section collects the proofs of the main theoretical results stated in Section~\ref{sec:SE-OB}. 
We begin by establishing the benchmark equivalences that position SE-OB relative to standard equilibrium notions: Proposition~\ref{prop:se-ob-ne} shows that when belief sets collapse to a constant Dirac perturbation, the SE-OB condition reduces to Nash equilibrium, and Proposition~\ref{prop:qre-ne} shows that when belief sets are singletons, SE-OB coincides with the corresponding structural QRE. 
We then present the full existence proof of SE-OB (Theorem~\ref{thm:exist-endog}) via Kakutani’s fixed point theorem by verifying the required regularity properties of the optimistic-belief and response correspondences. 
Finally, we prove Theorem~\ref{thm:equiv-se-ob-smooth-game}, which characterizes SE-OB under marginal belief sets through the Nash equilibria of an associated smooth (regularized) game.

\subsection{Proofs of the results in Section~\ref{sec:existence}}
\label{app:existence}
In this section, we provide a complete proof of Theorem \ref{thm:exist-endog}, establishing the existence of an SE-OB. 
For completeness and to keep the paper self-contained, we begin by stating Kakutani's fixed point theorem \citep[Theorem 1]{kakutani1941generalization}, which is the main fixed point tool used in the argument. 
The proof proceeds by formulating each player's optimistic best response as a set-valued correspondence on the strategy simplex and verifying the hypotheses of Kakutani's theorem. 
Specifically, we establish nonemptiness, convexity, compactness, and closed graph properties of the underlying response correspondences, and then deduce upper hemicontinuity of the induced best response map. To begin, we formally present Kakutani's fixed point theorem \citep[Theorem 1]{kakutani1941generalization}.

\begin{theorem}[Kakutani's Fixed-Point Theorem \citep{kakutani1941generalization}]\label{thm:kakutani}
Let $\mc{X}$ be a non-empty, compact, convex subset of a finite-dimensional Euclidean space. If $\Phi: \mc{X} \rightrightarrows \mc{X}$ is upper hemicontinuous with non-empty, convex, closed values, then $\Phi$ has a fixed point.
\end{theorem}

In what follows, we let $\mc A$ denote the graph of the correspondence $(\bs u, \bs \xi) \rightrightarrows \Delta(A(\bs u, \bs \xi))$ defined by $
\mc A = \{(\bs u,\bs \xi,\bs a)\in \mathbb R^N\times\mathbb R^N\times \Delta_N:\bs a\in \Delta(A(\bs u,\bs \xi))\}.$ 
\begin{lemma}\label{lem:graph-closed}
$\mc A$ is closed.
\end{lemma}
\begin{proof}[Proof of Lemma~\ref{lem:graph-closed}]
Let $(\bs u_n,\bs \xi_n,\bs a_n)\to(\bs u,\bs \xi,\bs a)$ with $\bs a_n\in\Delta(A(\bs u_n,\bs \xi_n))$ for all $n \in \mathbb N$.
Our goal is to show that $\bs a\in\Delta(A(\bs u,\bs \xi))$, \textit{i.e.}, ${\mathrm{supp}}(\bs a)\subseteq A(\bs u,\bs \xi)$.

Now, take any index $i \in [N]$ with $a_i>0$. Since $\bs a_n\to \bs a$, for all large $n$ we have $a_{n,i}>0$.
Because $\bs a_n\in\Delta(A(\bs u_n,\bs \xi_n))$, it follows that $i\in A(\bs u_n,\bs \xi_n)$ for all large $n$.
Assume for contradiction that $i\notin A(\bs u,\bs \xi)$. Then there exists $\delta>0$ such that
\[
u_i+\xi_i \le \max_{k \in [N]}u_k+\xi_k -\delta.
\]
By convergence $(\bs u_n,\bs \xi_n)\to(\bs u,\bs \xi)$, the same strict inequality holds for all large $n$:
\[
u_{n,i}+\xi_{n,i} \le \max_{k\in [N]}u_{n,k}+\xi_{n,k}-\delta/2,
\]
implying $i\notin A(\bs u_n,\bs \xi_n)$ for large $n$, a contradiction.
Hence every $i$ with $a_i>0$ belongs to $A(\bs u,\bs \xi)$ and thus $\bs a\in\Delta(A(\bs u,\bs \xi))$.
\end{proof}
\begin{lemma}\label{lem:tau-props}
For every $(\mu,\bs u)\in\mathcal P(\mathbb R^N)\times\mathbb R^N$, the set $T(\mu;\bs u)$ is nonempty,
convex, and compact in $\Delta_N$. Moreover, the correspondence $(\mu,\bs u)\rightrightarrows T(\mu;\bs u)$
has a closed graph. 
\end{lemma}
\begin{proof}[Proof of Lemma~\ref{lem:tau-props}]
\noindent \textit{(Nonemptiness)}
We fix $\bs u \in \R^N$. 
Define a measurable selection by picking the smallest maximizer:
\[
i^\mathrm{min}(\bs u,\bs \xi)=\min A(\bs u,\bs \xi),
\qquad
 \alpha_0(\bs \xi)=\bs e_{i^\mathrm{min}(\bs u,\bs \xi)}.
\]
Note that for every $i \in [N]$, the set $\{\bs \xi \in \R^N: i^\mathrm{min}(\bs u,\bs \xi)=i\}$ is defined by finitely many linear inequalities and is therefore Borel. 
Since $\alpha_0(\bs \xi)$ takes only finitely many values and the pre-image of each value is a Borel set, $\alpha_0$ is measurable. Moreover, by definition,
$\alpha_0(\bs \xi)\in\Delta(A(\bs u,\bs \xi))$ for all $\bs \xi \in \R^N$. Consequently, $\int \alpha_0\diff \mu\in T(\mu;\bs u)$.

\noindent \textit{(Convexity)}
If $\bs p^k=\int \alpha_k\diff \mu\in T(\mu; \bs u)$ for $k=1,2$ and $\lambda\in[0,1]$,
define $\alpha = \lambda\alpha_1+(1-\lambda)\alpha_2$.
Because each $\Delta(A(\bs u,\bs \xi))$ is convex, $\alpha(\bs \xi)\in\Delta(A(\bs u,\bs \xi))$ $\mu$-almost surely,
and $\int \alpha\diff \mu=\lambda \bs p^1+(1-\lambda)\bs p^2$.

\noindent \textit{(Closed graph)}
Note that the graph of $T$ is in the following form:
\begin{equation}\{(\mu, \bs u, \bs p) \in \mc P(\R^N) \times \R^N \times \Delta_N~\vert~ \bs p \in T(\mu; \bs u)\}.
\label{eq:graph-t}
\end{equation}
Let $(\mu_n, \bs u_n, \bs p_n)\to(\mu,\bs u,\bs p)$ with $\bs p_n\in T (\mu_n;\bs u_n)$.
Pick measurable $\bs \alpha_n$ such that
\[
\bs p_n=\int_{\R^N} \alpha_n(\bs \xi)\mu_n(\diff \bs \xi),
\qquad
\alpha_n(\bs \xi)\in\Delta(A(\bs u_n,\bs \xi))\ \mu_n\text{-a.s.}
\]
Define probability measures $\pi_n$ on $\mathbb R^N\times\Delta_N$ by push-forward: $
\pi_n = (\mathrm{Id}, \alpha_n)_\# \mu_n.$
Then, the first marginal of $\pi_n$ is $\mu_n$, and since $\Delta_N$ is compact,
tightness of $\{\mu_n\}$ implies tightness of $\{\pi_n\}$; hence, along a subsequence,
$\pi_n \Rightarrow \pi$ weakly for some $\pi\in\mathcal P(\mathbb R^N\times\Delta_N)$.
Standard marginal convergence gives that the first marginal of $\pi$ is $\mu$.

% Recall the set
% $
% \mathcal A = \{(\bs v,\bs \xi,\bs a)\in \mathbb R^N\times\mathbb R^N\times \Delta_N:\bs a\in \Delta(A(\bs v,\bs \xi))\},$ which is closed by Lemma~\ref{lem:graph-closed}.
Next, we define the augmented measures on $\mathbb R^N\times\mathbb R^N\times\Delta_N$ as $
\bar\pi_n = \delta_{\bs u_n}\otimes \pi_n.$
By construction, $\bar\pi_n(\mc A)=1$ for all $n$. Since $\bs u_n\to \bs u$ and $\pi_n\Rightarrow \pi$,
we have $\bar\pi_n\Rightarrow \bar\pi=\delta_{\bs u}\otimes\pi$.
Because $\mathcal A$ is closed by Lemma~\ref{lem:graph-closed}, Portmanteau Theorem yields $\bar\pi(\mathcal A)=1$, \textit{\textit{i.e.}},
\[
\pi(\mathcal A_{\bs u})=1,
\qquad
\mathcal A_{\bs u}=\{(\bs \xi,\bs a)\in\mathbb R^N\times\Delta_N: \bs a\in\Delta(A(\bs u,\bs \xi))\}.
\]

Next, disintegrate $\pi$ with respect to its first marginal $\mu$:
there exists a stochastic kernel $K(\bs \xi,\diff \bs a)$ such that
\[
\pi(\diff 
\bs \xi,\diff \bs a)=\mu(\diff \bs \xi)K(\bs \xi,\diff \bs a),
\quad\text{and}\quad
K(\bs \xi,\Delta(A(\bs u,\bs \xi)))=1\ \ \mu\text{-a.s.}
\]
Define $
\alpha(\bs \xi)= \int_{\Delta_N} \bs a K(\bs \xi,\diff \bs a)\in\Delta_N. $
Measurability of $\alpha$ follows from measurability of $K$.
Because $\Delta(A(\bs u,\bs \xi))$ is convex and closed, $\alpha(\bs \xi)\in\Delta(A(\bs u,\bs \xi))$ $\mu$-almost surely.

Finally, since $(\bs \xi,\bs a)\mapsto \bs a$ is bounded and continuous on $\mathbb R^N\times\Delta_N$,
\[
\bs p = \lim_{n\to \infty} \bs p_n
= \lim_{n\to \infty}\int_{\mathbb R^N\times\Delta_N} \bs a\pi_n(\diff \bs \xi,\diff \bs a)
= \int_{\mathbb R^N\times\Delta_N} \bs a\pi(\diff \bs \xi,\diff \bs a)
= \int_{\mathbb R^N} \alpha(\bs \xi)\mu(\diff \bs \xi),
\]
so $\bs p\in T(\mu;\bs u)$, proving \eqref{eq:graph-t} is closed.
\end{proof}

\begin{lemma}[Continuity of $\Phi_j$] Under Assumption~\ref{ass:topo-belief-sets}, for each $j\in [M]$, the map $(\mu, \bs u) \mapsto \Phi_j(\mu, \bs u)$ is jointly continuous on $\mc B_j \times \mc K$ for every compact $\mc K \subset \R^N$. 
\label{lem:cont-phi}
\end{lemma}
\begin{proof}[Proof of Lemma~\ref{lem:cont-phi}]
    Fix a compact $\mc K \subset \R^N$. Let $(\mu_n, \bs u_n) \to (\mu, \bs u)$ with $\mu_n, \mu \in \mc B_j$ and $\bs u_n, \bs u \in \mc K$ where $\mu_n \Rightarrow \mu$ weakly in $\mc P(\R^N)$ and $\bs u_n \to \bs u$ in $\R^N$. Define
    \[f_n (\bs \xi) = \max\limits_{i \in [N]} u_{n,i} + \xi_i, \quad f(\bs \xi) = \max\limits_{i \in [N]} u_i + \xi_i.\]
    Then, $\Phi_j(\mu_n, \bs u_n) = \int f_n \diff \mu_n$, and $\Phi_j(\mu, \bs u) = \int f \diff \mu$. Hence, to argue that $\Phi$ is continuous it sufficies to prove 
    \begin{equation}
    \int f_n \diff \mu_n \to \int f \diff \mu.
    \label{eq:final-convergence-f-n-mu-n-to-f-mu}
    \end{equation}

    % Fix $\bs \xi \in \R^N$ and set $\Delta_n = \|\bs u_n - \bs u\|_\infty$. For every $i \in [N]$ we have $u_{n,i} \leq u_i +\Delta_n$, hence $u_{n,i} + \xi_i \leq u_i + \xi_i +\Delta_n$, implying
    % \[f_n(\bs \xi) = \max\limits_{i \in [N]} u_{n,i} + \xi_i \leq \max\limits_{i \in [N]} u_i + \xi_i +\Delta_n = f(\bs \xi) + \Delta_n.\]
    % By symmetry, we also have $f(\bs \xi) \leq f_n(\bs \xi) + \Delta_n$. Therefore, for all $\bs \xi \in \R^N$, 
    % \[|f_n(\bs \xi) - f(\bs \xi)| \leq \| \bs u_n- \bs u\|_\infty. \]
    % Integrating both sides on the above inequality with respect to $\mu_n$ yields
    % \[\left|\int f_n \diff \mu_n - \int f \diff \mu_n\right| = \left| \int (f_n -f )\diff \mu_n\right| \leq \int |f_n -f | \diff \mu_n \leq \| \bs u_n- \bs u\|_\infty. \]
    % Therefore, we may conclude that 
    % \begin{equation}
    %     \label{eq:f_n-mu-n-to-f-mu-n}
    %     \left|\int f_n \diff \mu_n - \int f \diff \mu_n\right| \to 0~ \text{as}~n\to \infty.
    % \end{equation}
    % Hence, to prove \eqref{eq:final-convergence-f-n-mu-n-to-f-mu} holds, it remains to show $
    %     \int f \diff \mu_n \to \int f \diff \mu. $
    Now, we fix $\bs \xi \in \R^N$ and set $\Delta_n = \|\bs u_n - \bs u\|_\infty$. For every $i \in [N]$, we have $u_{n,i} \leq u_i + \Delta_n$, hence $u_{n,i} + \xi_i \leq u_i + \xi_i + \Delta_n$ implying
    \[f_n(\bs \xi) = \max\limits_{i \in [N]} u_{n,i } + \xi_i \leq \max\limits_{i \in [N]} u_i + \xi_i +\Delta_n = f(\bs \xi) + \Delta_n.\]
    By interchanging $\bs u_n$ and $\bs u$ in the above chain of expressions, we also have $f(\bs \xi) \leq f_n(\bs \xi) + \Delta_n$. Therefore, for all $\bs \xi \in \R^N$, we have $f_n(\bs \xi) - f(\bs \xi)| \leq \| \bs u_n - \bs u
    \|_\infty$. By integrating both sides of the inequality with respect to $\mu_n$, we have 
    \[\left|\int f_n \diff \mu_n - \int f \diff \mu_n\right| = \left| \int (f_n -f ) \diff \mu_n\right|\leq \int |f_n -f |\diff \mu_n \leq \|\bs u_n -\bs u\|_\infty, \]
    which implies that as $n \to \infty$, 
    \begin{equation}\left|\int f_n \diff \mu_n - \int f \diff \mu_n \right| \to 0.
    \label{eq:conv-f-n-mu-n-f-mu-n}
    \end{equation}
    Therefore, to prove \eqref{eq:final-convergence-f-n-mu-n-to-f-mu}, it remains to show 
    \begin{equation}\int f \diff \mu_n \to \int f \diff \mu. 
    \label{eq:convergence-f-mu-n-f-mu}
    \end{equation}

    Observe that the function $f$ is, in general, unbounded. Indeed, since $\bs u \in \mc K$ and $\mc K$ is compact, the constant $C= \sup_{\bs  v \in \mc K}\|\bs v\|_\infty$ is finite. Then, for any $\bs \xi \in \R^N$, 
    \begin{equation}|f(\bs \xi)| = \left|\max\limits_{i \in [N]} u_i + \xi_i \right| \leq \max\limits_{i \in [N]} |u_i + \xi_i|\leq \|\bs u\|_\infty + \|\bs \xi \|_\infty \leq C+ \|\bs \xi\|_\infty.
    \label{eq:f-upper-bound}
    \end{equation}
    Hence, $f$ grows in $\|\bs\xi\|_\infty$ and is therefore unbounded. As a consequence, weak convergence of $\mu_n$ to $\mu$ alone is not sufficient to guarantee \eqref{eq:convergence-f-mu-n-f-mu}. Hence, in what follows we will truncate $f$ to apply weak convergence. For $R> 0$ define 
    \[f^{(R)}(\bs \xi ) = (f(\bs \xi) \wedge R ) \vee (-R). \]
    Because $f$ is continuous and the operations $\wedge $ and $\vee$ preserve continuity, $f^{(R)}$ is continuous; and also $f^{(R)}(\bs \xi) \in [-R, R]$ for all $\bs \xi \in \R^N$, so $f^{(R)}$ is bounded. Hence, by weak convergence $\mu_n\rightrightarrows \mu$, 
    \begin{equation}\int f^{(R)} \diff \mu_n \to \int f^{(R)} \diff \mu\quad \text{for every fixed}~R> 0. 
    \label{eq:truncated-convergence}
    \end{equation}
    Hence, it remains to control the truncation error: $\int (f- f^{(R)}) \diff \mu_n$. By definition of $f^{(R)}$, we have $f^{(R)}(\bs \xi) = f(\bs \xi)$ on the event $\{\bs \xi \in \R^N : |f(\bs \xi)| \leq R\}$, so $f(\bs\xi) - f^{(R)}(\bs \xi) =0$ there. Moreoever, on $\{\bs \xi \in \R^N :  |f(\bs \xi)|  > R\} $ we have $|f(\bs \xi) - f^{(R)}(\bs \xi)| \leq |f(\bs \xi)|$. Therefore, for all $\bs \xi$, 
    \begin{equation}
        \label{eq:f-truncation-error-upper-bound}
        |f(\bs\xi) - f^{(R)}(\bs \xi)| \leq |f(\bs \xi)| \bs 1_{\{|f(\bs \xi)| > R\}}. 
    \end{equation}
    Next, we relate the event $\{|f| > R\}$ to a tail event in $\|\bs \xi \|_{\infty}$. From \eqref{eq:f-upper-bound} for any $\bs \xi \in \R^N$, we have 
    \[|f(\bs \xi)| \leq C + \|\bs \xi\|_\infty \Rightarrow \{\bs \xi \in \R^N: |f(\bs \xi) | >R \}\subseteq \{\bs \xi \in \R^N : \|\bs \xi \|_\infty > R-C\}.\]
    Combination of the above implication with \eqref{eq:f-upper-bound} and \eqref{eq:f-truncation-error-upper-bound} yields the pointwise bound:
    \[|f(\bs \xi) - f^{(R)}(\bs \xi)| \leq (C+ \|\bs \xi\|_\infty) \bs 1_{\{\|\bs \xi\|_\infty > R-C\}}.\]
    Integrating the above inequality with respect to $\mu_n$ gives
\begin{align}
\Big|\int (f-f^{(R)})\diff \mu_n\Big|
&\le \int |f-f^{(R)}|\diff \mu_n
\le \int (C+\|\bs\xi\|_\infty)\mathbf 1_{\{\|\bs\xi\|_\infty>R-C\}}\diff\mu_n.
\label{eq:trunc-error-mun}
\end{align}

Now assume that $R$ is selected such that $R> C$ and set $S = R-C > 0$. Fix $\veps > 0$ as in Assumption \ref{ass:topo-belief-sets} such that \eqref{eq:moment-bound-belief-sets} is satisfied. Then, on the event $\{\bs \xi \in \R^N : \| \bs \xi \|_\infty > S\}$, we have $\|\bs \xi \|_{\infty}^{1+\veps} \geq S^{1+\veps}$, hence $\bs 1_{\{\|\bs \xi\|_\infty > S\}} \leq \|\bs \xi\|_\infty^{1+\veps} / S^{1+\veps}$. Integrating that inequality with respect to some $\nu \in \mc B_j$ yields
\begin{equation}\nu\left(\|\bs \xi\|_\infty  > S\right) = \int\bs 1_{\{\|\bs \xi\|_\infty > S\}} \diff \nu \leq \frac{1}{S^{1+\veps}}\int \|\bs \xi\|_\infty^{1+ \veps} \diff \nu. 
\label{eq:truncation-error-nu-ineq}
\end{equation}
Additionally, on $\{\bs \xi \in \R^N : \|\bs \xi\|_\infty > S\}$, we have $\|\bs \xi\|_\infty^\veps \geq S^\veps$, so $\|\bs \xi\|_\infty \bs 1_{\{\|\bs \xi \|_\infty > S\}} \leq \|\bs \xi\|_\infty \|\bs \xi\|_\infty^\veps / S^\veps = \|\bs \xi \|^{1+ \veps}_\infty / S^{\veps}$, and integrating both sides with respect to $\nu$ gives
\begin{equation}
    \int \|\bs \xi\|_\infty\bs 1_{\{\|\bs \xi\|_\infty > S\}} \diff \nu \leq \frac{1}{S^\veps} \int \|\bs \xi\|_\infty^{1+\veps} \diff \nu. 
    \label{eq:tail-bound}
\end{equation}
Using \eqref{eq:truncation-error-nu-ineq} and \eqref{eq:tail-bound}, we bound for every $\nu\in\mc B_j$:
\begin{align}
\int (C+\|\bs\xi\|_\infty)\mathbf 1_{\{\|\bs\xi\|_\infty>S\}}\diff\nu
&=
C\int\mathbf 1_{\{\|\bs\xi\|_\infty>S\}}\diff\nu
+
\int \|\bs\xi\|_\infty \mathbf 1_{\{\|\bs\xi\|_\infty>S\}}\diff\nu
\notag\\
&\le
\frac{C}{S^{1+\veps}}\int \|\bs\xi\|_\infty^{1+\veps}\diff\nu
+
\frac{1}{S^\veps}\int \|\bs\xi\|_\infty^{1+\veps}\diff\nu
\notag\\
&=
\left(\frac{C}{S^{1+\veps}}+\frac{1}{S^\veps}\right)
\int \|\bs\xi\|_\infty^{1+\veps}\diff\nu.
\label{eq:tail-bound-nu}
\end{align}
Taking the supremum over $\nu\in\mc B_j$ in \eqref{eq:tail-bound-nu} yields
\begin{equation}\label{eq:tail-bound-sup}
\sup_{\nu\in\mc B_j}\int (C+\|\bs\xi\|_\infty)\mathbf 1_{\{\|\bs\xi\|_\infty>S\}}\diff\nu
\le
\left(\frac{C}{S^{1+\veps}}+\frac{1}{S^\veps}\right)
\sup_{\nu\in\mc B_j}\int \|\bs\xi\|_\infty^{1+\veps}\diff\nu.
\end{equation}
By Assumption~\ref{ass:topo-belief-sets}, the supremum of the $(1+\veps)$-moment on the right is finite.
Since $\veps>0$, the prefactor ${C}/{S^{1+\veps}}+{1}/{S^\veps}\to 0$ as $S\to\infty$.
Therefore,
\begin{equation}\label{eq:tail-UI}
\sup_{\nu\in\mc B_j}\int (C+\|\bs\xi\|_\infty)\mathbf 1_{\{\|\bs\xi\|_\infty>S\}}\diff\nu
\xrightarrow[S\to\infty]{}0.
\end{equation}

Since $\mu_n\in\mc B_j$ for all $n \in \mathbb N$, from \eqref{eq:trunc-error-mun} with $S=R-C$ we obtain
\begin{align}
\sup_{n \in \mathbb N} \left|\int (f-f^{(R)})\diff \mu_n\right|
&\leq \sup_{n \in \mathbb N} \int (C+\|\bs\xi\|_\infty)\mathbf 1_{\{\|\bs\xi\|_\infty>R-C\}}\diff \mu_n
\notag\\
&\le \sup_{\nu\in\mc B_j}\int (C+\|\bs\xi\|_\infty)\mathbf 1_{\{\|\bs\xi\|_\infty>R-C\}}\diff \nu.
\label{eq:trunc-error-unif-step}
\end{align}
Letting $R\to\infty$ (so $S=R-C\to\infty$) and using \eqref{eq:tail-UI} yields
\begin{equation}\label{eq:trunc-error-unif}
\lim_{R\to\infty} \sup_{n \in \mathbb N} \left|\int (f-f^{(R)})\diff \mu_n\right|=0.
\end{equation}
Similarly, since $\mu\in\mc B_j$, applying \eqref{eq:tail-UI} with $\nu=\mu$ gives
\begin{equation}\label{eq:trunc-error-mu}
\lim_{R\to\infty} \left|\int (f-f^{(R)})\diff \mu\right|=0.
\end{equation}
Fix $\delta>0$. By \eqref{eq:trunc-error-unif} and \eqref{eq:trunc-error-mu}, choose $R$ large enough so that
\begin{equation}\label{eq:choose-R}
\sup_{n \in \mathbb N} \Big|\int (f-f^{(R)})\diff \mu_n\Big|<\frac{\delta}{3}
\qquad\text{and}\qquad
\Big|\int (f-f^{(R)})\diff \mu\Big|<\frac{\delta}{3}.
\end{equation}
Now keep this $R$ fixed. By \eqref{eq:truncated-convergence}, there exists $n_0\in \mathbb N$ such that for all $n\ge n_0$,
\begin{equation}\label{eq:choose-n}
\left|\int f^{(R)}\diff \mu_n-\int f^{(R)}\diff \mu\right|<\frac{\delta}{3}.
\end{equation}
For $n\ge n_0$, we decompose and apply the triangle inequality:
\begin{align}
\Big|\int f\diff \mu_n-\int f\diff \mu\Big|
&\le
\Big|\int (f-f^{(R)})\diff \mu_n\Big|
+
\Big|\int f^{(R)}\diff \mu_n-\int f^{(R)}\diff \mu\Big|
+
\Big|\int (f^{(R)}-f)\diff \mu\Big|.
\label{eq:decompose-f}
\end{align}
Using \eqref{eq:choose-R} and \eqref{eq:choose-n} in \eqref{eq:decompose-f} gives, for all $n\ge n_0$,
\[
\Big|\int f\diff \mu_n-\int f\diff \mu\Big|
<\frac{\delta}{3}+\frac{\delta}{3}+\frac{\delta}{3}
=\delta.
\]
Since $\delta>0$ was arbitrary, this proves \eqref{eq:convergence-f-mu-n-f-mu}.

Finally,
\[
\left|\int f_n\diff \mu_n-\int f\diff \mu\right|
\le
\left|\int f_n\diff \mu_n-\int f\diff \mu_n\right|
+
\left|\int f\diff \mu_n-\int f\diff \mu\right|,
\]
and the first term tends to $0$ by \eqref{eq:conv-f-n-mu-n-f-mu-n} while the second tends to $0$
by \eqref{eq:convergence-f-mu-n-f-mu}. Hence $\int f_n \diff \mu_n \to \int f \diff \mu$ holds, \textit{i.e.},
$\Phi_j(\mu_n,\bs u_n)\to \Phi_j(\mu,\bs u)$, proving joint continuity of $\Phi_j$ on $\mc B_j\times \mc K$.
\end{proof}

Before presenting the properties of optimistic belief sets, we introduce the definition of upper hemicontinuity as follows.
\begin{definition}
A set-valued function $\Gamma : \mc A \rightrightarrows \mc B$ is said to be \emph{upper hemicontinuous} at a point
$a \in \mc A$ if, for every open set $\mc V \subset \mc B$ with $\Gamma(a) \subset \mc V$, there exists a neighborhood
$\mc U$ of $a$ such that for all $x \in \mc U$, $\Gamma(x) \subset V$.
\end{definition}
\begin{lemma}[Properties of optimistic beliefs]\label{lem:belief-props}
Under Assumption~\ref{ass:topo-belief-sets}, for each $j\in[M]$:
\begin{enumerate}[label=\textnormal{(\roman*)}]
\item $\Belief_j^{\mathrm{opt}}(\bs u)$ is nonempty and weakly compact for every $\bs u\in \R^N$.
\item $\Belief_j^{\mathrm{opt}}(\bs u)$ is convex for every $\bs u\in \R^N$.
\item The correspondence $\bs u\rightrightarrows \Belief_j^{\mathrm{opt}}(\bs u)$ is upper hemicontinuous.
\end{enumerate}
\end{lemma}
\begin{proof}[Proof of Lemma~\ref{lem:belief-props}]
(i) Fix $j \in [M]$. By Assumption~\ref{ass:topo-belief-sets}, $\mathcal B_j$ is nonempty and weakly compact. Furthermore, Lemma~\ref{lem:cont-phi} applied with $\mc K=\{\bs u\}$ guarantees that the objective function $\Phi_j(\cdot, \bs u)$ is continuous on $\mathcal B_j$ for each fixed $\bs u$. By the Weierstrass Extreme Value Theorem, a continuous function on a compact set attains its maximum; therefore, the set of optimal beliefs $\Belief_j^{\mathrm{opt}}(\bs u)\neq\emptyset$ is nonempty. 
 Finally, since the set of maximizers for a continuous function on a compact set is closed.  Because $\Belief_j^{\mathrm{opt}}(\bs u)$ is a closed subset of a weakly comapct set $\mc B_j$, it is itself weakly compact.
(ii) Since $\Phi_j(\cdot, \boldsymbol{u})$ is affine in $\mu$ and $\mathcal{B}_j$ is convex, the set of optimal beliefs $\Belief_j^{\mathrm{opt}}(\boldsymbol{u})$ is convex for every $\boldsymbol{u} \in \mathbb{R}^N$.
(iii) Because $\mathcal{B}_j$ is weakly compact and $\Phi_j$ is jointly continuous, Berge's Maximum Theorem guarantees that the set of maximizers is upper hemicontinuous.
\end{proof}

\begin{lemma}[Properties of $\Phi_j^{\mathrm{BR}}$]\label{lem:BRj-props}
Under Assumption~\ref{ass:topo-belief-sets}, for each $j\in[M]$:
$\Phi_j^{\mathrm{BR}}(\bs  P_{-j})$ is nonempty, convex, and compact for all $\bs P_{-j}$, and the correspondence
$\bs P_{-j}\rightrightarrows \Phi_j^{\mathrm{BR}}(\bs P_{-j})$ is upper hemicontinuous.
\end{lemma}
\begin{proof}[Proof of Lemma~\ref{lem:BRj-props}]
Fix $j \in [M]$.\\

\noindent\textit{(Nonemptiness)}
By Lemma~\ref{lem:belief-props}, $\Belief_j^{\mathrm{opt}}(U_j(\bs P_{-j}))\neq\emptyset$.
Pick $\mu\in\Belief_j^{\mathrm{opt}}(U_j(\bs P_{-j}))$; by Lemma~\ref{lem:tau-props}, $T(\mu$;$U_j(\bs P_{-j}) $) $\neq$ $\emptyset$,
so $\Phi_j^{\mathrm{BR}}(\bs P_{-j})\neq\emptyset$.

\noindent\textit{(Convexity)}
Take $\bs p^1,\bs p^2\in\Phi_j^{\mathrm{BR}}(\bs P_{-j})$. Then there exist $\mu^1,\mu^2\in\Belief_j^{\mathrm{opt}}(U_j(P_{-j}))$
and $\alpha^1,\alpha^2$ such that
$p^k=\int\alpha^k\diff \mu^k$ and $\alpha^k(\xi)\in\Delta(A(U_j(P_{-j}),\xi))$ $\mu^k$-a.s.
Let $\lambda\in[0,1]$ and set $\mu^\lambda = \lambda\mu^1+(1-\lambda)\mu^2$.
Since $\Belief_j^{\mathrm{opt}}(\cdot)$ is convex by Lemma~\ref{lem:belief-props}, $\mu^\lambda \in \Belief_j^\mathrm{opt}(U_j(\bs P_{-j}))$.

If $\lambda\in(0,1)$, then $\mu^1\ll \mu^\lambda$ and $\mu^2\ll \mu^\lambda$; let $f_k=\diff \mu^k/\diff \mu^\lambda$.
Then $\lambda f_1+(1-\lambda)f_2=1$ $\mu^\lambda$-almost surely.
Define
\[
\alpha^\lambda(\bs \xi)=\lambda f_1(\bs \xi)\alpha^1(\bs \xi)+(1-\lambda)f_2(\bs \xi)\alpha^2(\bs \xi).
\]

Since $f_1, f_2$ and $\alpha^1, \alpha^2$ are Borel measurable, their linear combination $\alpha^\lambda$ is measurable. 
Moreover, $\Delta(A(U_j(\bs P_{-j})$,$\bs \xi))$ is convex, and the defining constraint
for $\alpha^k$ holds on a $\mu^k$-full set; hence it holds $\mu^\lambda$-almost surely on the region where $f_k>0$.
Therefore $\alpha^\lambda(\bs \xi)\in\Delta(A(U_j(\bs P_{-j})$,$\bs \xi))$ $\mu^\lambda$-almost surely.
Finally,
\[
\int \alpha^\lambda\diff \mu^\lambda
=\lambda\int\alpha^1\diff \mu^1+(1-\lambda)\int\alpha^2\diff \mu^2
=\lambda \bs p^1+(1-\lambda) \bs p^2,
\]
so $\lambda \bs p^1+(1-\lambda)\bs p^2\in\Phi_j^{\mathrm{BR}}(\bs P_{-j})$.

\noindent\textit{(Compactness and upper hemicontinuity)}
Let $\bs P_{-j}^{(n)}\to \bs P_{-j}$ and $\bs p^{(n)}\in\Phi_j^{\mathrm{BR}}(\bs P_{-j}^{(n)})$ with $\bs p^{(n)}\to \bs p$.
Pick $\mu^{(n)}\in\Belief_j^{\mathrm{opt}}(U_j(\bs P_{-j}^{(n)}))$ such that
$\bs p^{(n)}\in T(\mu^{(n)};U_j(\bs P_{-j}^{(n)}))$.
By weak compactness of $\mathcal B_j$, along a subsequence $\mu^{(n)}\Rightarrow \mu$.

Continuity of $U_j(\cdot)$ implies
$U_j(\bs P_{-j}^{(n)})\to U_j(\bs P_{-j})$.
By Lemma~\ref{lem:belief-props},  $\Belief_j^{\mathrm{opt}}$ is upper hemicontinuous, and thus
$\mu\in\Belief_j^{\mathrm{opt}}(U_j(\bs P_{-j}))$.

Finally, by Lemma~\ref{lem:tau-props} $T$ has a closed graph, and thus 
$\bs p\in T(\mu;U_j(\bs P_{-j}))$. Therefore, we may conclude that $\bs p\in\Phi_j^{\mathrm{BR}}(\bs P_{-j})$.
Thus $\Phi_j^{\mathrm{BR}}$ has closed graph; since values are subsets of compact $\Delta_N$, they are compact.
Closed graph together with compact values implies upper hemicontinuity of the correspondence $\bs P_{-j} \rightrightarrows \mathrm{BR}_j(\bs P_{-j})$.
\end{proof}
With these preparatory results in hand, we are now equipped to present the proof of Theorem~\ref{thm:exist-endog}.

\subsubsection{Proof of Theorem~\ref{thm:exist-endog}}
The set $\mc X = (\Delta_N)^M$ is nonempty, compact, and convex in a finite-dimensional Euclidean space.
By Lemma~\ref{lem:BRj-props}, each $\Phi_j^{\mathrm{BR}}$ has nonempty, convex, compact values and is upper
hemicontinuous; hence $\BR$ has nonempty, convex, compact values and is upper hemicontinuous.
Kakutani's fixed point theorem (Theorem~\ref{thm:kakutani}) \citep[Theorem]{kakutani1941generalization} applies to $\BR:\mc X\rightrightarrows \mc X $ and yields $\bs P^\star\in \mc X$
with $\bs P^\star\in\BR(\bs P^\star)$.
Unpacking the definition of $\BR$ gives the existence of $\mu_j^\star\in\Belief_j^{\mathrm{opt}}(U_j(\bs P_{-j}^\star))$
and $\bs p_j^\star\in T(\mu_j^\star;U_j(\bs P_{-j}^\star))$ for each $j\in [M]$, \textit{i.e.}, an SE-OB.

\subsection{Proof of Proposition~\ref{prop:se-ob-ne}}
Fix $\bs u\in\R^N$. Since adding a constant does not change the set of maximizers,
\[
A(\bs u, b\bs 1) = \argmax_{i\in[N]} (u_i+b) = \argmax_{i\in[N]} u_i,
\]
and therefore
\[
\Delta(A(\bs u, b\bs 1)) = \Delta\left(\argmax_{i\in[N]} u_i\right).
\]
By definition of $T$,
\[
T(\delta_{b\bs 1};\bs u)
=
\left\{
\int_{\R^N} \alpha(\boldsymbol{\xi})\delta_{b\bs 1}(d\boldsymbol{\xi})
\middle|
\alpha:\R^N\to \Delta_N \text{ Borel measurable, }
\alpha(\boldsymbol{\xi})\in \Delta(A(\bs u,\boldsymbol{\xi}))\ \delta_{b\bs 1}\text{-a.s.}
\right\}.
\]
Because integration against a Dirac measure collapses the integral,
\[
\int_{\R^N} \alpha(\boldsymbol{\xi})\delta_{b\bs 1}(d\boldsymbol{\xi})
= \alpha(b\bs 1),
\]
and the constraint forces $\alpha(b\bs 1)\in \Delta(A(\bs u,b\bs 1))$. Hence
\[
T(\delta_{b\bs 1};\bs u) = \Delta\left(\argmax_{i\in[N]} u_i\right).
\]
On the other hand, since $\bs q\mapsto \langle \bs q,\bs u\rangle$ is linear and $\Delta_N$ is the
convex hull of $\{\bs e_i\}_{i\in[N]}$, its maximizers are exactly the convex hull of
$\{\bs e_i: i\in\argmax_{k\in[N]}u_k\}$, \textit{i.e.},
\[
\argmax_{\bs q\in\Delta_N}\langle \bs q,\bs u\rangle
=
\Delta\left(\argmax_{i\in[N]} u_i\right).
\]
Therefore,
\begin{equation}
\label{eq:T-is-argmax}
T(\delta_{b\bs 1};\bs u)
=
\argmax_{\bs q\in\Delta_N}\langle \bs q,\bs u\rangle.
\end{equation}

Now suppose $\bs P^*$ forms an SE-OB. Since each $\mc B_j$ is a singleton, necessarily
$\mu_j^*=\delta_{b\bs 1}$ for every $j\in[M]$. The SE-OB condition gives
\[
\bs p_j^* \in T\left(\delta_{b\bs 1};\ (u_j(\bs e_i;\bs P_{-j}^*))_{i\in[N]}\right)
\qquad \forall j\in[M].
\]
By~\eqref{eq:T-is-argmax}, this is equivalent to
\[
\bs p_j^* \in \argmax_{\bs p\in\Delta_N} \left\langle \bs p,\ (u_j(\bs e_i;\bs P_{-j}^*))_{i\in[N]} \right\rangle
=
\argmax_{\bs p\in\Delta_N} u_j(\bs p;\bs P_{-j}^*),
\]
so $\bs P^*$ satisfies~\eqref{eq:NE}, i.e., it is a Nash equilibrium.

Conversely, suppose $\bs P^*$ satisfies~\eqref{eq:NE}. Then for every $j\in[M]$,
\[
\bs p_j^* \in \argmax_{\bs p\in\Delta_N} u_j(\bs p;\bs P_{-j}^*).
\]
By~\eqref{eq:T-is-argmax}, this implies
\[
\bs p_j^* \in T\left(\delta_{b\bs 1};\ (u_j(\bs e_i;\bs P_{-j}^*))_{i\in[N]}\right).
\]
Finally, since $\mc B_j$ is a singleton, $\mu_j^*=\delta_{b\bs 1}$ is trivially optimal for the
optimistic-belief condition. Hence $\bs P^*$ forms an SE-OB.

\subsection{Proof of Proposition~\ref{prop:qre-ne}}
Fix $\bs{P}^{\star} \in\left(\Delta_N\right)^M$. 
( $\Rightarrow$ ) Assume $\bs{P}^{\star}$ is an SE-OB. By definition, for each $j \in[M]$ there exists $\mu_j^{\star} \in \mathcal{B}_j$ such that (i) $\mu_j^{\star}$ maximizes the optimistic value functional over $\mathcal{B}_j$, and (ii)
\[
\bs{p}_j^{\star} \in T\left(\mu_j^{\star} ; (u_j(\bs e_i; \bs P_{-j}\opt))_{i \in [N]}\right) .
\]
Since $\mathcal{B}_j=\left\{\mu_j\right\}$, necessarily $\mu_j^{\star}=\mu_j$, and the maximization condition (i) is vacuous. Therefore
\[
\bs{p}_j^{\star} \in  T\left(\mu_j^{\star} ; (u_j(\bs e_i; \bs P_{-j}\opt))_{i \in [N]}\right) \quad \forall j \in[M],
\]
which is the definition of QRE induced by $\left(\mu_j\right)_{j \in[M]}$.

$(\Leftarrow)$ Conversely, assume $\bs{P}^{\star}$ is a QRE induced by $\left(\mu_j\right)_{j \in[M]}$, \textit{i.e.}, $\bs{p}_j^{\star} \in  T(\mu_j ; (u_j(\bs e_i; \bs P_{-j}\opt))_{i \in [N]})$ for all $j \in [M]$. Take $\mu_j^{\star}=\mu_j \in \mathcal{B}_j$. The SE-OB belief optimality condition holds automatically because $\mathcal{B}_j$ is a singleton, and the response condition is precisely the QRE condition. Hence $\bs{P}^{\star}$ forms an SE-OB.
\subsection{Proof of Theorem~\ref{thm:equiv-se-ob-smooth-game}}
$(\Leftarrow)$ Let $\bs P^* = (\bs p^*_1, \ldots, \bs p_M^*)$ be an Nash equilibrium of the smooth game $\mc G([M], \Delta_N, (\bar u_j)_{j \in [M]})$. Fix a player $j \in [M]$ and define the baseline payoff vector against $\bs P^*_{-j}$ by $\bs u^j = (u_j(\bs e_i, \bs P^*_{-j}))_{i \in [N]}$. Since, $\bs P^*$ is a Nash equilibrium of the smooth game, it satisfies:
   % \[\bs p_{j}^* \in \argmax\limits_{\bs p \in \Delta_N} \bar u_j(u_j ; (\bs p, \bs P_{-j}^*); (F_{j,i})_{i \in [N]}).\]
   \[\bs p_{j}^* \in \argmax\limits_{\bs p \in \Delta_N} \bar u_j(\bs p, \bs P_{-j}^*).\]
   By Lemma~\ref{lem:distributional-regularization}, for the marginal belief set $\mc B_j$ induced by
$(F_{j,i})_{i\in[N]}$,
\begin{equation}\label{eq:proof-reg-duality}
\sup_{\mu\in\mc B_j}\EE_{\bs\xi\sim\mu}\left[\max_{i\in[N]}u_i^j+\xi_i\right]
=
\max_{\bs p\in\Delta_N}
\left\{
\bs p^\top \bs u^j
+
\sum_{i=1}^N\int_{1-p_i}^1 F_{j,i}^{-1}(t)dt
\right\},
\end{equation}
and, moreover, any maximizer $\bs p$ on the right-hand side can be implemented as a random-utility
response under some {optimistic} belief:
there exists $\mu_j^*\in B_j^{\mathrm{opt}}(\bs u^j)$ such that $
\bs p \in T(\mu_j^*;\bs u^j).$
Applying this to $\bs p=\bs p_j^*$ yields the existence of $\mu_j^*\in\mc B_j$ such that
\[
\EE_{\bs\xi\sim\mu}\left[\max_{i\in[N]}(u_i^j+\xi_i)\right]
\le
\EE_{\bs\xi\sim\mu_j^*}\left[\max_{i\in[N]}(u_i^j+\xi_i)\right]
\qquad \forall \mu\in\mc B_j,
\]
and
\[
\bs p_j^*\in T(\mu_j^*;\bs u^j)
=
T\left(\mu_j^*; \big(u_j(\bs e_i;\bs P_{-j}^*)\big)_{i\in[N]}\right).
\]
Because $j\in[M]$ was arbitrary, we have shown that for every player $j$ there exists a belief
$\mu_j^*\in\mc B_j$ satisfying both the optimism condition and the response condition in
Definition~\ref{def:se-ob}. Therefore, $\bs P^*$ is an SE-OB of the game
$\mc G([M],\Delta_N,(u_j)_{j\in[M]})$.

$(\Rightarrow)$ Fix a player $j \in [M]$ and set 
\[\bar{\Phi}^{\rm BR}(\bs P_{-j}) = \argmax_{\bs p_j \in \Delta_N}\bar u_j(\bs p_j; \bs P_{-j})\] be the smooth best-response correspondence. Our goal is to show that $\Phi_j^{\mathrm{BR}}(\bs P_{-j}) \subseteq \bar{\Phi}^{\rm BR}_j(\bs P_{-j})$. Let $\bs P$ be an SE-OB of the game {$\mc G([M],\Delta_N, ( u_j)_{j \in [M]})$}, then $\bs p_j \in \Phi_j^{\mathrm{BR}}(\bs P_{-j})$, and by definition there exists $\mu\opt \in  B_j\opt(\bs u)$ such that $\bs p \in T(\mu\opt; \bs u)$. By definition of $T$, there exists a Borel measurable map $\alpha : \R^N \to \Delta_N$ satisfying $\alpha(\bs \xi)\in \Delta(A(\bs u, \bs \xi))$-$\mu\opt$ almost surely and $\bs p = \int \alpha(\bs \xi) \mu\opt(\diff \bs \xi)$, where $A(\bs u , \bs \xi) = \argmax_{i \in [N]} u_i + \xi_i$. Next, with any fixed $\bs v\in \R^N$, for every $\bs \xi \in \R^N$, we have 
    \begin{align*}
        \max\limits_{i \in [N]} v_i + \xi_i 
        &\geq \sum\limits_{i=1}^N \alpha_i(\bs \xi) (v_i + \xi_i) \\
        &=\sum\limits_{i=1}^N \alpha_i(\bs \xi) (u_i + \xi_i) + \sum\limits_{i=1}^N \alpha_i(\bs \xi) (v_i - u_i)\\
        &=\max\limits_{i\in[N]} \left\{u_i + \xi_i +\langle \alpha(\bs \xi), v-u\rangle\right\},
    \end{align*}
  where the inequality holds because $\alpha (\bs \xi)$ is a probability vector, and the last equality uses $\alpha(\bs \xi) \in \Delta(A(\bs u, \bs \xi))$. Recalling the definition of $\Phi_j$ in \eqref{def:phij}, integrating both sides with respect to $\mu\opt$ yields
  \begin{equation}
      \Phi_j(\mu\opt, \bs v) \geq \Phi_j(\mu\opt, \bs u) + \langle \bs p, \bs v - \bs u \rangle. 
      \label{eq:ineq-stoc-choice}
  \end{equation}
Now define the optimistic value function {$\bar \Phi_j(\bs u) = \sup_{\mu \in \mc B_j} \bar \Phi_j^{\mathrm{BR}}(\mu, \bs u)$}. 
Since $\mu\opt \in  B_j^\text{opt}(\bs u)$, we have $\bar\Phi_j(\bs u) = \Phi_j(\mu\opt, \bs u)$, and since $\bar\Phi_j(\bs v) \geq \Phi_j(\mu\opt, \bs v)$ for all $\bs v \in \R^N$, the inequality \eqref{eq:ineq-stoc-choice} further implies 
\[\bar \Phi_j(\bs v)\geq \bar\Phi_j(\bs u) + \langle p, \bs v - \bs u\rangle\quad\forall \bs v \in \R^N.\]
Thus, $\bs p$ is a subgradient of $\bar \Phi_j$ at $\bs u$, that is, $\bs p \in \partial \bar \Phi_j(\bs u)$. By Lemma~\ref{lem:distributional-regularization}, for every $\bs w \in \R^N$, it holds that
\[\bar \Phi_j(\bs w) = \max\limits_{\bs q \in \Delta_N} \left\{\langle \bs q ,\bs w \rangle + \sum\limits_{i=1}^N \int_{1-q_i}^1 F_{j,i}^{-1}(t) \diff t\right\}.\]
Denote the maximizer set above by 
\[S(\bs w) = \argmax\limits_{\bs q \in \Delta_N} \left\{\langle \bs q,\bs w\rangle + \sum\limits_{i=1}^N \int_{1-q_i}^1 F_{j,i}^{-1}(t) \diff t\right\}.\]
Since the objective function above is concave in $\bs q$, the set $S(\bs w)$ is convex. By Danskin's theorem applied to $\bar\Phi_j$, the subdifferential of $\bar \Phi_j$ at $\bs u$ equals $S(\bs u)$ (because $\mathrm{conv}(S(\bs u )) = S(\bs u)$ as $S$ is convex). Combined with \eqref{eq:ineq-stoc-choice}, we further obtain that $\bs p  \in S(\bs u)$, that is, $\bs p$ solves the optimization problem defining $\bar \Phi_j$ at $\bs w= \bs u$. Therefore, $\bs p \in \bar{\Phi}^{\rm BR}_j(\bs P_{-j})$, which completes the proof.

\subsection{Proof of Corollary~\ref{coro:existence}}
    By Lemma~\ref{lem:concave-game}, $\mathcal G([M], \Delta_N, (\bar u_j)_{j \in [M]})$ is a concave game. Consequently, by \cite[Theorem 1]{ref:rosen1965existence}, its Nash equilibrium exists. Thus, the claim follows from Theorem~\ref{thm:equiv-se-ob-smooth-game}.
\qed

\section{Proofs of the results in Section~\ref{sec:mbs-foundations}}
\subsection{Proof of Proposition~\ref{prop:copula-nonidentification}}
Fix a policy $\pi$.
For each $t\ge 1$ and history $h_{t-1}\in([N]\times\mathbb{R})^{t-1}$, define a stochastic kernel
$Q_{t,\mu}(\cdot\mid h_{t-1})$ on $[N]\times\mathbb{R}$ by
\[
Q_{t,\mu}(\{i\}\times B \mid h_{t-1}) = \pi_t(i\mid h_{t-1})\lambda_i^\mu(B),
\qquad i\in[N],\ B\in\mathscr{B}(\mathbb{R}).
\]
Equivalently, for measurable $A\subseteq[N]$ and $B\in\mathscr{B}(\mathbb{R})$,
\[
Q_{t,\mu}(A\times B\mid h_{t-1})=\sum_{i\in A}\pi_t(i\mid h_{t-1})\lambda_i^\mu(B).
\]
By construction, $Q_{t,\mu}$ is measurable in $h_{t-1}$ because $\pi_t(\cdot\mid h_{t-1})$ is a kernel and the
measures $\lambda_i^\mu$ are fixed.

Under the i.i.d.\ assumption on $\{\xi_t\}_{t\ge 1}$, conditional on $(h_{t-1},a_t=i)$ the observation $y_t$
has law $\lambda_i^\mu$ and is independent of $h_{t-1}$. Hence, the observable process
$(a_t,y_t)_{t\ge 1}$ is generated by the initial kernel $Q_{1,\mu}$ and the transition kernels
$\{Q_{t,\mu}\}_{t\ge 2}$. By the Ionescu-Tulcea extension theorem, there exists a unique probability measure
$\mathbb{P}^{\pi}_\mu$ on the path space $([N]\times\mathbb{R})^{\mathbb{N}}$ with these one-step kernels.

Now assume $\mu_i=\mu'_i$ for all $i\in[N]$. Then $\lambda_i^\mu=\lambda_i^{\mu'}$ for all $i$, so
$Q_{t,\mu}=Q_{t,\mu'}$ for every $t$. By uniqueness in Ionescu-Tulcea, the induced path measures coincide:
$\mathbb{P}^{\pi}_\mu=\mathbb{P}^{\pi}_{\mu'}$. Restricting to the first $T$ coordinates yields the stated equality
of finite-horizon laws for every $T$.
\qed

\subsection{Proof of Lemma~\ref{lem:marginals-identified}}
Fix $i\in[N]$. By Definition~\ref{def:bandit-obs-eq}, observational equivalence holds for {every}
policy, in particular for the deterministic policy $\pi^{(i)}$ that plays $a_t\equiv i$ almost surely.
Under $\pi^{(i)}$, we have $y_1=u_i+\xi_{1,i}$, hence for every $B\in\mathscr B(\R)$,
\[
\mathbb{P}^{\pi^{(i)}}_\mu\big(y_1\in u_i+B\big)=\mu_i(B),
\qquad
\mathbb{P}^{\pi^{(i)}}_{\mu'}\big(y_1\in u_i+B\big)=\mu'_i(B).
\]
Observational equivalence implies the left-hand sides are equal for all Borel $B$, so $\mu_i=\mu'_i$.
\qed

\subsection{Proof of Corollary~\ref{cor:mbs-sharp-identified}}
Denote by $[\mu_0]_{\mathrm{band}}$ as the set of bandit-observationally equivalent distributions of $\mu_0$.
If $\mu\in[\mu_0]_{\mathrm{band}}$, then Lemma~\ref{lem:marginals-identified} gives
$\mu_i=(\mu_0)_i$ for all $i$. Conversely, if $\mu_i=(\mu_0)_i$ for all $i$, then
Proposition~\ref{prop:copula-nonidentification} implies $\mu$ and $\mu_0$ are observationally equivalent,
so $\mu\in[\mu_0]_{\mathrm{band}}$.
\qed
\subsection{Proof of Proposition~\ref{prop:sharp-lower-bound}}
Let $U\sim\mathrm{Unif}[0,1]$ and define $\xi_i^{\mathrm{co}}=F_i^{-1}(U)$ for every $i \in [N]$.
Let $\mu^{\mathrm{co}}$ be the joint law of $\xi^{\mathrm{co}}=(\xi_1^{\mathrm{co}},\dots,\xi_N^{\mathrm{co}})$. We first record the basic equivalence: for any cumulative distribution function $F$ with generalized inverse
$F^{-1}(t)=\inf\{s: F(s)\ge t\}$ and any $x\in\R$,
\[
F^{-1}(t)\le x \quad\Longleftrightarrow\quad t\le F(x),\qquad t\in[0,1].
\]
Indeed, if $t\le F(x)$ then $x$ belongs to the set $\{s:F(s)\ge t\}$, so its infimum
is $\le x$. Conversely, if $F^{-1}(t)\le x$, then there exists $s\le x$ with
$F(s)\ge t$; by monotonicity of $F$ this implies $F(x)\ge F(s)\ge t$.

Applying this with $t=U$ and $F=F_i$, for every $x\in\R$ we obtain
\[
\{\xi_i^{\mathrm{co}}\le x\}
=\{F_i^{-1}(U)\le x\}
=\{U\le F_i(x)\}.
\]
Therefore,
\[
\PP(\xi_i^{\mathrm{co}}\le x)=\PP(U\le F_i(x))=F_i(x),\qquad \forall x\in\R,
\]
since $U$ is uniform on $[0,1]$. Hence the $i$th marginal of $\mu^{\mathrm{co}}$ is $F_i$ for
every $i$, and thus $\mu^{\mathrm{co}}\in\mc B$ by definition of MBS.

Fix any $\mu\in\mc B$ and let $\xi=(\xi_1,\dots,\xi_N)\sim\mu$. For any $t\in\R$,
\begin{align*}
\PP_\mu\left(\max_{i\in[N]}(u_i+\xi_i)\le t\right)
&=\PP_\mu\left(\bigcap_{i\in[N]}\{\xi_i\le t-u_i\}\right) \\
&\le \min_{i\in[N]}\PP_\mu(\xi_i\le t-u_i),
\end{align*}
because $\bigcap_i A_i\subseteq A_k$ for each $k$ implies
$\PP(\bigcap_i A_i)\le \PP(A_k)$ for each $k$, hence it is $\le \min_k \PP(A_k)$.
Since $\mu\in\mc B$, the $i$th marginal of $\mu$ is $F_i$, so
$\PP_\mu(\xi_i\le t-u_i)=F_i(t-u_i)$. Therefore
\[
\PP_\mu\left(\max_{i\in[N]}(u_i+\xi_i)\le t\right)\le \min_{i\in[N]}F_i(t-u_i),
\qquad \forall t\in\R.
\]
Under $\mu^{\mathrm{co}}$, we have shown
$\{\xi_i^{\mathrm{co}}\le x\}=\{U\le F_i(x)\}$. Hence, for any $t\in\R$,
\begin{align*}
\PP\left(\max_{i\in[N]}(u_i+\xi_i^{\mathrm{co}})\le t\right)
&=\PP\left(\bigcap_{i\in[N]}\{\xi_i^{\mathrm{co}}\le t-u_i\}\right) \\
&=\PP\left(\bigcap_{i\in[N]}\{U\le F_i(t-u_i)\}\right) \\
&=\PP\left(U\le \min_{i\in[N]}F_i(t-u_i)\right) \\
&=\min_{i\in[N]}F_i(t-u_i),
\end{align*}
again using $U\sim\mathrm{Unif}[0,1]$.
Combining the previous steps yields that for all $t\in\R$,
\[
\PP_\mu\left(\max_{i\in[N]}(u_i+\xi_i)\le t\right)
\le
\PP\left(\max_{i\in[N]}(u_i+\xi_i^{\mathrm{co}})\le t\right).
\]
Equivalently, by the definition of first-order stochastic dominance, we have 
\[
\max_{i\in[N]}(u_i+\xi_i^{\mathrm{co}})\ \le_{\mathrm{st}}\max_{i\in[N]}(u_i+\xi_i).
\]

To pass to expectations, we use the standing integrability condition
$\int_0^1|F_i^{-1}(s)|ds<\infty$ for each $i$, which implies that each marginal
has finite first absolute moment and guarantees the inclusive value envelopes are finite, that is, the expectations below are well-defined. Under this integrability,
$\EE[|\xi_i|]<\infty$ for all $i \in [N]$ for every $\mu\in\mc B$, and thus
\[
\EE\left[\left|\max_{i\in[N]}(u_i+\xi_i)\right|\right]
\le \max_{i\in[N]}|u_i|+\EE\Big[\max_{i\in[N]}|\xi_i|\Big]
\le \max_{i\in[N]}|u_i|+\sum_{i=1}^N \EE|\xi_i| <\infty.
\]
Hence both maxima are integrable.

Finally, for integrable real-valued random variables $X,Y$ with $X\le_{\mathrm{st}}Y$, we have
$\EE[X]\le \EE[Y]$ (\textit{e.g.}, by the tail-integral identity
$\EE[X]=\int_0^\infty \PP(X>t)\diff t-\int_0^\infty \PP(X\le -t)\diff t$ and the fact that
$X\le_{\mathrm{st}}Y$ implies $\PP(X>t)\le \PP(Y>t)$ and $\PP(X\le -t)\ge \PP(Y\le -t)$
for all $t\ge0$). Applying this with
$X=\max_{i \in [N]}(u_i+\xi_i^{\mathrm{co}})$ and $Y=\max_{i \in [N]}(u_i+\xi_i)$ gives
\[
\EE\left[\max_{i\in[N]}(u_i+\xi_i^{\mathrm{co}})\right]
\le
\EE_\mu\left[\max_{i\in[N]}(u_i+\xi_i)\right]
\qquad \forall \mu\in\mc B.
\]
Since $\mu^{\mathrm{co}}\in\mc B$, this shows $\mu^{\mathrm{co}}$ attains the infimum in the
lower envelope $\underline V(\bs u)=\inf_{\mu\in\mc B}\EE_\mu[\max_i(u_i+\xi_i)]$.
Define $g(t)=\max_{i\in[N]}(u_i+F_i^{-1}(t))$ on $[0,1]$. The function $g$ is measurable
and integrable because
$|g(t)|\le \max_i|u_i|+\max_i|F_i^{-1}(t)|\le \max_i|u_i|+\sum_i|F_i^{-1}(t)|$
and $\int_0^1|F_i^{-1}(t)|dt<\infty$ for each $i$. Therefore, since $U\sim\mathrm{Unif}[0,1]$,
\[
\EE_{\mu^{\mathrm{co}}}\left[\max_{i\in[N]}(u_i+\xi_i^{\mathrm{co}})\right]
=\EE[g(U)]
=\int_0^1 g(t)\diff t
=\int_0^1 \max_{i\in[N]}(u_i+F_i^{-1}(t))\diff t.
\]
This completes the proof. \qed

% Under the comonotone coupling, for each $i$,
% $\{\xi_i^{\mathrm{co}}\le t-u_i\}=\{U\le F_i(t-u_i)\}$, hence
% \[
% \mathbb{P}\left(\max_{i \in [N]}(u_i+\xi_i^{\mathrm{co}})\le t\right)
% =
% \mathbb{P}\left(U \le \min_{i \in [N]} F_i(t-u_i)\right)
% =
% \min_{i \in [N]} F_i(t-u_i).
% \]

% Therefore, for all $t$,
% \[
% \mathbb{P}_{\mu}\left(\max_{i \in [N]}(u_i+\xi_i)\le t\right)
% \le
% \mathbb{P}\left(\max_{i \in [N]}(u_i+\xi_i^{\mathrm{co}})\le t\right),
% \]
% which is $\max_{i\in[N]}(u_i+\xi_i^{\mathrm{co}})\ \le_{\mathrm{st}}\max_{i\in[N]}(u_i+\xi_i).
% $
% Integrability implies $\mathbb{E}[\max_{i \in [N]}(u_i+\xi_i^{\mathrm{co}})]\le \mathbb{E}[\max_{i \in [N]}(u_i+\xi_i)]$ for
% all $\mu\in \mc B$, proving optimality of $\mu^{\mathrm{co}}$.
% Finally, the integral representation follows from $U\sim\mathrm{Unif}[0,1]$.
% \qed
\section{Proofs of the results in Section~\ref{sec:falsifiability}}
\label{app:proofs-section-falsifiability}
This section collects the proofs for the empirical-content and falsifiability statements in Section~\ref{def:falsifiability}. 
We proceed in two steps that mirror the economic message of the section. First, we prove Proposition~\ref{prop:non-falsifiable-se-ob}, which shows that under unrestricted belief sets SE-OB inherits the classical non-falsifiability result for unrestricted QRE: any fully mixed profile can be rationalized by constructing player-specific singleton belief sets that implement the desired choice probabilities via the response correspondence $T$. 
Second, we prove Theorem~\ref{thm:falsifiability-seob}, which establishes that restricting attention to marginal belief sets restores testable restrictions: in a constant-payoff game, the associated smooth game has a unique equilibrium because each player objective becomes strictly concave on $\Delta_N$, so the SE-OB prediction set is a strict subset of the outcome space, implying falsifiability.
\subsection{Proof of Proposition~\ref{prop:non-falsifiable-se-ob}}
Fix $\bs Q\in\mc Y^\circ$.  For each player $j$, let $\bs u_j=U_j(\bs Q_{-j})\in\mathbb R^N$ be the baseline
expected-payoff vector induced by $\bs Q_{-j}$.
By \cite[Theorem 1]{haile2008empirical}, for each such $\bs u_j$ and each $\bs q_j\in \mc Y^\circ$ there exists a distribution $\mu_j^\star\in\mathcal P(\mathbb R^N)$ such
that $
\bs q_j \in T(\mu_j^\star;\bs u_j).$
Now set $B_j=\{\mu_j^\star\}$ for each player $j \in [M]$.  Then for each $j$,
$B_j\opt(\bs u_j)=\{\mu_j^\star\}$ and therefore
\[
\bs q_j\in T(\mu_j^\star;\bs u_j)\subseteq \Phi_j^{BR}(\bs Q_{-j}).
\]
Collecting across players, $\bs Q\in\prod_{j=1}^M \Phi_j^{BR}(\bs Q_{-j})=\Phi^{BR}(\bs Q)$, \textit{i.e.}, $\bs Q$ is an
SE-OB under the constructed belief sets.  Since $\bs Q$ was arbitrary in $\mc Y^\circ$, we complete our proof.
\qed

\subsection{Proof of Theorem~\ref{thm:falsifiability-seob}}
Fix a marginal profile $F=(F_{j,i})_{j\in[M],\,i\in[N]}$ satisfying
Assumption~\ref{ass:marginal-belief-sets-f-strict}, and define the corresponding parameter space by $
\Theta_F = \{F\}$.
For any game $G\in\mathcal G$, the model-class prediction set reduces to $
\mathcal E_{\Theta_F}(G)
= \bigcup_{\theta\in\Theta_F}\mathcal E(G,\theta)
= \mathcal E(G,F)$.
By Definition~\ref{def:falsifiability}, it suffices to exhibit a game $G^0\in\mathcal G$
and an outcome $\bs P\in \mc Y^\circ$
such that $\bs P\notin \mathcal E(G_0,F)$.

Consider the constant-payoff game $G^0$ in which $u_j(\bs e_i; \bs P_{-j})=0$ for all $j$, all $i\in[N]$, and all $\bs P_{-j}$.
By Theorem~\ref{thm:equiv-se-ob-smooth-game}, SE-OB profiles in $G^0$ coincide with Nash equilibria of the associated smooth game,
whose payoffs reduce to
\[
\bar u_j(\bs p_j;\bs P_{-j})=\sum_{i=1}^N \int_{1-p_{j,i}}^{1} F_{j,i}^{-1}(t)\diff t.
\]

For each $i$, the map $p\mapsto \int_{1-p}^{1} F_{j,i}^{-1}(t)\diff t$ is strictly concave on $(0,1)$ because its derivative
equals $F_{j,i}^{-1}(1-p)$, which is strictly decreasing in $p$ under Assumption~\ref{ass:marginal-belief-sets-f-strict}.
Hence $\bar u_j(\cdot;\bs P_{-j})$ is strictly concave on $\Delta_N$, so player $j$ has a unique best response $\bs p_j^{\mathrm{sm}}$.
Therefore the smooth game has a unique Nash equilibrium $\bs P^{\mathrm{sm}}=(\bs p_1^{\mathrm{sm}},\dots,\bs p_M^{\mathrm{sm}})$,
and thus if we pick any $\bs P \in\mathcal Y^\circ$ with $\bs P \neq \bs P^{\mathrm{sm}}$. Then $\bs P\notin \mathcal E(G^0)$, so the model is falsifiable.
\qed
\section{Computation of SE-OB}\label{sec:computation}
% \newbt{
This appendix records an iterative routine that can be used to compute (simulate) SE-OB in finite
normal-form games when each players belief set is a marginal belief set as in \eqref{eq:marginal-belief-set}.
In Section~\ref{sec:empirical}, SE-OB is computed via the Nash equilibrium
of the associated smooth game.
The iterative procedure described below is not used in generating the reported numerical results.
We include the iteration as an optional alternative implementation, useful when one has only oracle access
to expected-payoff evaluations and the marginal quantile functions. The update can also be interpreted as a
stylized adjustment dynamic, and we state conditions under which it converges.

% }

We consider a multi-agent learning model in which players interact repeatedly over time. 
For the settings in which each player exhibits individualized risk-preferences and receives only zeroth-order feedback in the optimization sense, we present a learning algorithm called \textit{Repeated Game of Risk-sensitive Players with Optimistic Beliefs} presented in Algorithm~\ref{alg:rs-obl}. 
Specifically, each player $j$ is endowed with a set of \textit{risk-preference} functions $(F_{j,i})_{i\in[N]}$ and a \textit{dynamic} belief set $\mc M_j \subseteq \mc P(\R^N)$, and observes only their expected payoff vectors evaluated at the current strategy profile. 
At each iteration, every player computes a one-shot payoff estimate adjusted according to their risk preferences, and updates their internal statistics.
Based on these updated statistics, each player selects an optimistic belief from~$\mc M_j$ that maximizes their expected maximal perturbed statistic, and finally chooses a strategy in accordance with that optimistic belief.

% \notebt{The algorithm that we are proposing now has uncoupled dynamics, that is, each player's strategy update depends only on (i) their own past actions, (ii) the realized payoffs to those actions and never to the payoff functions or realized actions of other players.}

\begin{algorithm}
\caption{: Repeated Game of Risk-sensitive Players with Optimistic Beliefs}
\label{alg:rs-obl}
\begin{algorithmic}[1]
\State\textbf{Input} Step-size sequence~$(\lambda_t)_{t \in [T]} \subseteq (\mathbb{R}_{++})^T$, initial estimates
$\hat{\bs u}^{(0)} = (\hat {\bs u}_1^{(0)}, \ldots, \hat {\bs u}_M^{(0)})$, risk-preference functions, continuous, non-decreasing, $F_{j,i} : \R \to [0,1]$ for all $j \in [M], i\in [N]$, belief sets $(\mc M_j)_{j \in [M]}$, $\mc M_j \subseteq \mc P(\R^N)$
\For{round $t = 1, \ldots, T$}
    \State Every player~$j \in [M]$ selects $\bs p_j^{(t)} \in T(\mu_j\opt; \hat{\bs u}_j^{(t-1)})$, where
    \begin{equation}\mu_j\opt \in \argmax\limits_{\mu \in \mc M_j} \EE_{\bs \xi \sim \mu}\left[\max\limits_{i \in [N]} \hat u_{j, i}^{(t-1)} + \xi_i \right]
    \label{eq:optimistic-bel-dyn}
    \end{equation}
    \State $\bs P^{(t)} = (\bs p_1^{(t)}, \ldots, \bs p_M^{(t)})$  
    \State Every player~$j \in [M]$ receives {$\bs r^{(t)}_j = (u_j(\bs e_i; \bs P_{-j}^{(t)}))_{i \in [N]}$}
    \State Every player~$j \in [M]$ updates {$\hat{ u}_{j,i}^{(t)} = \hat{u}_{j,i}^{(t-1)} + \lambda_t ( r^{(t)}_{j,i} + F^{-1}_{j,i}(1- p_{j,i}^{(t)}))$} for all $i \in [N]$ \label{line:statistic-update}
\EndFor
\State \textbf{Return} $\bs P^{(T)} $
\end{algorithmic}
\end{algorithm}
The risk sensitivity of the players is incorporated into Algorithm~\ref{alg:rs-obl} through the update rule in Line~\ref{line:statistic-update}. The risk-preference functions $F_{j,i}$ explicitly captures whether the player exhibits risk-averse or risk-seeking behavior. As each $F_{j,i}$ is non-decreasing, its inverse $F_{j,i}^{-1}$ is also non-decreasing. For simplicity, suppose that $F_{j,i} = F_j$ for all $i\in[N]$.
Then, at round $t$, if arm $i$ is assigned a probability that is no greater than that of arm $i'$, \textit{i.e.}, $p_{j,i}^{(t)} \leq p_{j,i'}^{(t)}$,
the monotonicity of $F_j^{-1}$ implies that $F_j^{-1}(1-p_{j,i}^{(t)}) \geq F_j^{-1}(1-p_{j,i'}^{(t)}).$
This inequality indicates that the reinforcement applied to action $i$ is stronger than that for action $i'$ when estimating $\hat{\bs u}_{j}^{(t)}$, thereby promoting increased exploration of action $i$ in the subsequent round. In this way, the update mechanism explicitly accommodates each player’s risk preferences, adapting automatically to different degrees of risk aversion or risk seeking.

In what follows, we demonstrate that when Algorithm~\ref{alg:rs-obl} is run with some risk-preference functions, its output converges to a strategy profile that forms a \eqref{eq:b-done} for the game $\mc G([M], \Delta_N, (u_j)_{j\in [M]})$, where the belief sets $\mathcal B_j$ are those induced by the same risk-preference functions.
\subsection{Convergence Analysis of Algorithm \ref{alg:rs-obl}}

\label{sec:convergence}

A fundamental goal in algorithm design for repeated multi-player games is ensuring convergence to an equilibrium, a stationary state where no player benefits from unilateral deviations. Since players only receive payoff estimates at each round, they adapt dynamically, naturally introducing the notion of regret, the gap between their actual cumulative payoff and the best fixed action in hindsight. While equilibrium clearly implies no regret, it is less obvious whether no-regret dynamics always yield equilibrium outcomes. Bridging this gap requires suitable static equilibrium conditions, such as equilibrium stability induced by players' risk preferences, and dynamic conditions governing belief updates and action selection. In subsequent sections, we examine these static and dynamic considerations, highlighting their joint role in ensuring equilibrium convergence.\\

\noindent \textit{Static Principles.}
To analyze the convergence of Algorithm~\ref{alg:rs-obl}, we study the equilibrium stability of the smooth game, ensuring global attraction to its NE.

\begin{definition}[Variational Stability]
$ \bs P' \in (\Delta_N)^M$ is a globally variationally stable strategy profile for the game $\mc G([M], \Delta_N, (u_j)_{j \in [M]})$ if
    \begin{align}
        \label{eq:vs}
        \langle (\nabla_{\bs p_j} u_j(\bs p_j; \bs P_{-j}))_{j \in [M]}, \bs P - \bs P'\rangle 
        &= \sum\limits_{j=1}^M \langle \nabla_{\bs p_j} u_j(\bs p_j; \bs P_{-j}), \bs p_j - \bs p_j' \rangle \\
        &< 0~ \forall \bs P \in (\Delta_N)^M, \bs P \neq \bs P'.\tag{\text{VS}}
    \end{align}
    \label{def:vs}
\end{definition}
Intuitively, a strategy profile $\bs P'$, is globally variationally stable if, whenever the players jointly consider any other feasible profile $\bs P \neq \bs P'$, then,  the instantaneous force generated by marginal incentives of all players would push them back toward $\bs P'$, not farther away.
\begin{definition}[Hessian of the Game] Hessian of the game~$\mc G([M], \Delta_N, (u_j)_{j \in[M]})$ that is defined as the block matrix $\bs H(\bs P,(u_j)_{j \in [M]}) = (\bs H_{j,k}(\bs P,(u_j,u_k)))_{j,k \in [M]}$ with
\[\bs H_{j,k}(\bs P,(u_j,u_k)) = \frac{1}{2} \nabla_{\bs p_k}\nabla_{\bs p_j} u_j(\bs p_j; \bs P_{-j}) + \frac{1}{2} (\nabla_{\bs p_j}\nabla_{\bs p_k} u_k(\bs p_k; \bs P_{-k}))^\top.\]
\label{def:hessian}
\end{definition}

\begin{assumption}\label{ass:vs-hessian}
The risk-preference functions $F_{j,i}$, $j \in [M]$, $i \in [N]$, are continuous, differentiable and strictly increasing whenever $F_{j, i}(s) \in (0,1)$, and 
\begin{equation}
    \min_{i\in[N]}\ \inf_{p\in(0,1)}\ \frac{1}{F'_{j,i}\left(F^{-1}_{j,i}(1-p)\right)}> \frac{1}{2} \sum\limits_{k=1,~k\neq j}^M  \left\|\left(u_j(\bs P)\Big|_{\substack{\bs p_j = \bs e_i\\ \bs p_k = \bs e_{i'}}} \right)_{i, i'\in [N]}+ \left(u_k(\bs P )\Big|_{\substack{\bs p_k = \bs e_{i}\\ \bs p_j = \bs e_{i'}}}\right)_{i, i' \in [N]}^\top \right\|.
    \label{eq:vs-hessian-cond-F_k}
\end{equation}

\end{assumption}
{
While Assumption~\ref{ass:vs-hessian} may initially appear restrictive, it is satisfied by examples presented in Section \ref{sec:SE-OB} see Section~\ref{sec:verify}.
\begin{proposition}\label{prop:vs-done}
If the risk-preference functions $F_{j,i}$, $j \in [M]$, $i\in [N]$, satisfy Assumption~\ref{ass:vs-hessian} and the smooth expected payoffs $(\bar u_j)_{j\in [M]}$ are constructed as in~\eqref{eq:smooth-exp-payoffs}, then the game $\mc G([M],\Delta_N, (\bar u_j)_{j \in [M]})$ admits a unique NE that is variationally globally stable. 
\end{proposition}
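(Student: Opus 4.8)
The plan is to reduce Proposition~\ref{prop:vs-done} to Lemma~\ref{lem:vs-hessian} by showing that Assumption~\ref{ass:vs-hessian} forces the game Hessian of the smooth game, $\bs H(\bs P, (\bar u_j)_{j \in [M]})$, to be negative definite on $\textrm{TC}(\bs P)$ for every $\bs P \in (\Delta^N)^M$. Once this is established, Lemma~\ref{lem:vs-hessian} immediately yields that the (existing, by Lemma~\ref{lem:concave-game} and \cite[Theorem~1]{ref:rosen1965existence}) Nash equilibrium of the smooth game is globally variationally stable and unique.

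The key computation is to decompose $\bs H(\bs P, (\bar u_j)_{j\in[M]})$ into its diagonal and off-diagonal blocks and bound each. First I would observe that, by~\eqref{eq:smooth-exp-payoffs}, $\bar u_j(u_j(\bs p_j; \bs P_{-j}); (F_{j,i})_{i\in[N]}) = u_j(\bs p_j; \bs P_{-j}) + R_j(\bs p_j)$, where $R_j(\bs p_j) = \sum_{i=1}^N \int_{1-p_{j,i}}^1 F_{j,i}^{-1}(t)\,\diff t$ depends only on $\bs p_j$. Hence the Hessian splits as $\bs H(\bs P, (\bar u_j)_{j}) = \bs H(\bs P, (u_j)_{j}) + \bs D(\bs P)$, where $\bs D(\bs P)$ is block-diagonal with $j$th block $\tfrac12 \nabla^2_{\bs p_j} R_j(\bs p_j)$. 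A direct differentiation gives $\partial^2 R_j / \partial p_{j,i}^2 = -\,(F_{j,i}^{-1})'(1-p_{j,i}) = -\,1/F_{j,i}'(F_{j,i}^{-1}(1-p_{j,i}))$ and zero cross-partials, so $\nabla^2_{\bs p_j} R_j(\bs p_j)$ is diagonal with entries bounded above by $-\inf_{p\in(0,1)} 1/\min_{i}F_{j,i}'(F_{j,i}^{-1}(1-p))$; call this bound $-c_j < 0$. Thus $\bs D(\bs P) \preceq -\tfrac12\bigl(\min_j c_j\bigr)\bs I$ on the whole space, and in particular on $\textrm{TC}(\bs P)$. For the off-diagonal part, note that the diagonal blocks of $\bs H(\bs P, (u_j)_{j})$ vanish because $u_j(\bs p_j;\bs P_{-j})$ is linear (multilinear) in $\bs p_j$, so $\nabla^2_{\bs p_j}u_j = 0$; the only contributions come from the off-diagonal blocks $\bs H_{j,k}$, $j\neq k$, which are exactly the matrices of mixed second derivatives appearing on the right-hand side of~\eqref{eq:vs-hessian-cond-F_k}. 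Bounding the operator norm of the off-diagonal part of $\bs H(\bs P, (u_j)_{j})$ by the sum over rows of block-operator-norms (a Gershgorin-type / triangle-inequality estimate, using that each off-diagonal block of row $j$ has norm at most $\tfrac12\|(\cdots) + (\cdots)^\top\|$ as in~\eqref{eq:vs-hessian-cond-F_k}), one gets $\|\bs H(\bs P,(u_j)_j)\| \le \max_j \sum_{k\neq j} \tfrac12\|(u_j|_{\bs p_j=\bs e_i,\bs p_k=\bs e_{i'}})_{i,i'} + (u_j|_{\bs p_k=\bs e_i,\bs p_j=\bs e_{i'}})_{i,i'}^\top\|$. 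Assumption~\ref{ass:vs-hessian} says precisely that $\tfrac12\min_j c_j$ strictly exceeds this quantity for every $j$, hence exceeds the max over $j$, so $\bs H(\bs P, (\bar u_j)_j) \preceq \bigl(\|\bs H(\bs P,(u_j)_j)\| - \tfrac12\min_j c_j\bigr)\bs I \prec 0$.

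The main obstacle I anticipate is the bookkeeping in the off-diagonal norm bound: making precise the claim that the operator norm of the block matrix $\bs H(\bs P, (u_j)_j)$ (which has zero diagonal blocks) is controlled by the row-wise sums of block norms, and verifying that the block $\bs H_{j,k}$ really does coincide with the symmetrized mixed-derivative matrix written in~\eqref{eq:vs-hessian-cond-F_k} when the payoffs are multilinear — in particular checking that the ``frozen-coordinate'' evaluations $u_j(\bs P)|_{\bs p_j = \bs e_i, \bs p_k = \bs e_{i'}}$ are exactly the entries of $\nabla_{\bs p_k}\nabla_{\bs p_j}u_j$. A secondary subtlety is handling the boundary of the simplex, where some $F_{j,i}^{-1}(1-p_{j,i})$ may sit at the edge of the region on which $F_{j,i}$ is strictly increasing; here I would restrict attention to the relative interior (where $R_j$ is twice differentiable and the bound $-c_j$ is valid) and argue that negative definiteness of the Hessian on $\textrm{TC}(\bs P)$ for interior $\bs P$, combined with continuity / the structure of Lemma~\ref{lem:vs-hessian}, suffices — or invoke that the condition in Assumption~\ref{ass:vs-hessian} is stated as an infimum over $p\in(0,1)$ precisely to make the bound uniform up to the boundary. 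Everything else is a routine application of Lemma~\ref{lem:vs-hessian}.
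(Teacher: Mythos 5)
Your overall strategy is the same as the paper's: compute the diagonal blocks $\bs H_{j,j}=\nabla^2_{\bs p_j}\bar u_j=-\mathrm{diag}\bigl(1/F_{j,i}'(F_{j,i}^{-1}(1-p_{j,i}))\bigr)_{i\in[N]}$, identify the off-diagonal blocks $\bs H_{j,k}$ with the symmetrized ``frozen-coordinate'' matrices in~\eqref{eq:vs-hessian-cond-F_k} (this identification is indeed correct by multilinearity of $u_j$), argue negative definiteness by a diagonal-dominance estimate, and invoke Lemma~\ref{lem:vs-hessian}. However, the step where you close the argument has a genuine gap. First, a bookkeeping error: since $\bs H_{j,j}=\tfrac12\nabla^2_{\bs p_j}\bar u_j+\tfrac12(\nabla^2_{\bs p_j}\bar u_j)^\top=\nabla^2_{\bs p_j}R_j$, the diagonal block carries no factor $\tfrac12$; your $\bs D(\bs P)$ with blocks $\tfrac12\nabla^2 R_j$ undercounts the negative contribution, and correspondingly your reading of Assumption~\ref{ass:vs-hessian} as ``$\tfrac12\min_j c_j$ exceeds the row sum'' is not what the assumption states (its left-hand side is $c_j$, and its right-hand side equals $\sum_{k\neq j}\|\bs H_{j,k}\|$ exactly). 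Second, and more seriously, the bound $\bs H\preceq\bigl(\|\bs H(\bs P,(u_j)_j)\|-\tfrac12\min_j c_j\bigr)\bs I$ requires $\min_j c_j$ to dominate $\max_j\sum_{k\neq j}\|\bs H_{j,k}\|$, but Assumption~\ref{ass:vs-hessian} is a \emph{per-player} condition, $c_j>\sum_{k\neq j}\|\bs H_{j,k}\|$ for each $j$ separately. With heterogeneous players (say $c_1=10$ dominating row sum $9$, and $c_2=1$ dominating row sum $0.5$) your min-versus-max comparison fails even though the assumption holds, so the uniform $\preceq\alpha\bs I$ route does not go through as written.

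The repair is to keep the estimate row-wise: for $\bs z=(\bs z_1,\ldots,\bs z_M)\neq 0$, bound $\bs z^\top\bs H\bs z\le\sum_j\bigl(-c_j+\sum_{k\neq j}\|\bs H_{j,k}\|\bigr)\|\bs z_j\|^2<0$ using $\|\bs z_j\|\|\bs z_k\|\le\tfrac12(\|\bs z_j\|^2+\|\bs z_k\|^2)$ and the symmetry $\|\bs H_{j,k}\|=\|\bs H_{k,j}\|$. This is precisely the block strict diagonal dominance criterion the paper invokes via \cite[Theorem~9]{feingold1962block}: the condition $1/\|\bs H_{j,j}^{-1}\|>\sum_{k\neq j}\|\bs H_{j,k}\|$ matches Assumption~\ref{ass:vs-hessian} term for term, with no min/max decoupling needed. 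Your handling of the simplex boundary via the infimum over $p\in(0,1)$ and the existence argument via Lemma~\ref{lem:concave-game} and Rosen's theorem are fine.
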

\begin{proof}[Proof of Proposition~{\ref{prop:vs-done}}]
For each $j\in[M]$, the gradient of $\bar u_j$ with respect to $\bs p_j$ is in the form of
\begin{equation}
    (\nabla_{\bs p_j} \bar u_{j}(u_j; (\bs p_j; \bs P_{-j}); (F_{j,i})_{i \in [N]}))_i = 
    u_j(\bs e_i; \bs P_{-j}) + F_{j,i}^{-1}(1-p_{j,i}) \quad \forall i \in [N].
\end{equation}
Then, for $k \neq j$, we have
\begin{equation}
    (\nabla_{\bs p_k}\nabla_{\bs p_j} \bar u_{j}(u_j; (\bs p_j; \bs P_{-j}); (F_{j,i})_{i \in [N]}))_{ii'} = u_j(\bs P)\Big|_{\substack{\bs p_j = \bs e_i\\ \bs p_k = \bs e_{i'}}},
    \label{eq:hessian-off-diagonal}
\end{equation}
which implies that for $k \neq j$, $\bs H_{j,k}$ defined in Definition \ref{def:hessian} satisfies
\[
\bs H_{j,k}(\bs P, (\bar u_j, \bar u_k)) = \frac{1}{2}\left(u_j(\bs P)\Big|_{\substack{\bs p_j = \bs e_i\\ \bs p_k = \bs e_{i'}}} \right)_{i, i'\in [N]}+ \frac{1}{2} \left(u_k(\bs P )\Big|_{\substack{\bs p_k = \bs e_{i}\\ \bs p_j = \bs e_{i'}}}\right)_{i, i' \in [N]}^\top.
\]
% \notebt{In the expression below, are we defining $\bs H_{j,j}$ or is there a typo?}
Additionally, for each $j \in [M]$, we have
% \notebt{I think the expression below should be $\bs H_{j,j}(\bs P, (\bar u_j, \bar u_j))$? or am I mistaken?}
\begin{align}
 \bs H_{j,j}(\bs P, (\bar u_j, \bar u_j)) = \nabla^2_{\bs p_j} \bar u_{j}(u_j; (\bs p_j; \bs P_{-j}); (F_{j,i})_{i \in [N]})= -\text{diag}\left\{\left(\frac{1}{F_{j,i}'(F_{j,i}^{-1}(1-p_{j,i})))}\right)_{i\in [N]}\right\},
 \label{eq:hessian-diags}
\end{align}
which is a diagonal matrix and is negative definite because $F_{j,i}$'s are strictly increasing under Assumption~\ref{ass:vs-hessian} and thus $F'_{j,i} > 0$. Then, for all $j\in [M]$, we have
% \notebt{Is it possible that we should have to the power of $-1$ on the right hand side? } \noteyg{left-hand side has inverse in the denominator}
\begin{align}
\label{eq:cond-negdef}
\frac{1}{\|\bs H^{-1}_{j,j}(\bs P, (\bar u_j, \bar u_j))\|} & = \left\| \text{diag}\left(\frac{1}{F_{j,i}'(F_{j,i}^{-1}(1-p_{j,i}))} \right)_{i \in [N]}\right\| \\
&=\min\limits_{i \in [N]} \frac{1}{F'_{j,i}(F_{j,i}^{-1}(1- p_{j,i}))} \\
&\geq \min\limits_{i \in [N]} \inf\limits_{p' \in (0,1)}  \frac{1}{F_{j,i}'(F_{j,i}^{-1}(1-p'))}\\
& > \frac{1}{2}\sum\limits_{k=1, k\neq j}^M  \bigg\|\left(u_j(\bs P)\Big|_{\substack{\bs p_j = \bs e_i\\ \bs p_k = \bs e_{i'}}} \right)_{i, i'\in [N]}+ \left(u_k(\bs P )\Big|_{\substack{\bs p_k = \bs e_{i}\\ \bs p_j = \bs e_{i'}}}\right)_{i, i' \in [N]}^\top \bigg\|\\
&= \sum\limits_{k=1, k\neq j}^M \| \bs H_{j,k}(\bs P, (\bar u_j, \bar u_k)) \|,
\end{align}
where the first equality follows by \eqref{eq:hessian-diags} and the second equality by the definition of operator norm. The strict inequality holds as $F_{j,i}$'s satisfy Assumption~\ref{ass:vs-hessian} and the last inequality follows by~\eqref{eq:hessian-off-diagonal}.
The ultimate inequality above implies that $\bs  H(\bs P, \bar u_j)_{j \in [M]})$ is strictly diagonally dominant (see \cite[Definition~1]{feingold1962block}). 
Therefore, by~\cite[Theorem~9]{feingold1962block}, we have
\[\bs  H\left(\bs P, (\bar u_{j})_{j \in [M]} \right) \prec 0.\] 
The claim then follows directly from Lemma~\ref{lem:vs-hessian}.
\end{proof}
% \end{proof}

\noindent \textit{Dynamic Principles.} At round $t$ of Algorithm~\ref{alg:rs-obl}, each player uses their risk-adjusted cumulative payoff estimate, $\hat{\bs u}_j^{(t-1)}$, to select the belief from their belief set $\mathcal{M}_j$ that maximizes the expected maximal perturbed cumulative payoff. 
Once this optimistic belief is identified, then the player adopts it by choosing a strategy based on the corresponding quantile function.
Throughout this section, we will assume that the belief sets~$\mc M_j$ are of the form~\eqref{eq:marginal-belief-set} induced by the cumulative distribution functions~$(G_{j,i})_{i \in [N]}$. By Lemma~\ref{lem:distributional-regularization}, if $\bs p\opt (\bs u)$ solves 
\begin{equation}
    \max_{\bs p \in \Delta_N} g\left(\bs p; \bs u; (G_i)_{i \in [N]}\right) = \bs p^\top \bs u + \sum\limits_{i=1}^N \int_{1-p_i}^1 G_{i}^{-1} (s)\diff s, 
    \label{eq:regularized-map}
\end{equation}
then $\bs p\opt(\hat{\bs u}_j^{(t-1)})$ constitutes a valid strategy for player~$j$ in round $t$ as Lemma~\ref{lem:distributional-regularization} ensures that $\bs p\opt(\hat{\bs u}_j^{(t-1)}) \in  T(\mu_j\opt; \hat{\bs u}_j^{(t-1)})$, where $\mu_j\opt$ satisfies~\eqref{eq:optimistic-bel-dyn}. Consequently, we assume that the player~$j$ selects their strategy $\bs p_j^{(t)}$ by solving the maximization problem in~\eqref{eq:regularized-map} for $\bs u = \hat{\bs u}_j^{(t-1)} $. Curiously, from an optimization perspective, choosing actions based on optimistic beliefs yields a smoothing effect on the strategy that will be implemented in the subsequent round.
While one might expect a greedy strategy-choosing actions via $\bs p_j^{(t)} \mapsto \argmax_{\bs p \in \Delta_N} \bs p^\top \hat{\bs u}_j^{(t-1)}$ (which corresponds to choosing actions with certain fixed beliefs by Proposition~\ref{prop:se-ob-ne}), to suffice, such methods have been criticized as being too aggressive in the presence of uncertainty~\cite[\S 3.2]{mertikopoulos2019learning}.

\begin{proposition} Suppose that for each $j\in[M]$, $\mc M_j$ is of the form~\eqref{eq:marginal-belief-set} with $(G_{j,i})_{i \in [N]}$ that are continuous for all $i \in[N]$. If Algorithm \ref{alg:rs-obl} is run with a step size sequence $(\lambda_t)_{t\in [T]}$ such that 
$\sum_{t=1}^\infty \lambda_t = \infty$, and if $\bs P^{(T)}$ converges to $\bs P^* \in (\Delta_N)^M$ as $T \to \infty$, then $\bs P^*$ is a NE of the game~$\mc G([M], \Delta_N, (\bar u_j)_{j\in[M]})$, where $\bar u_j$, $j\in [M]$, is as defined in~\eqref{eq:smooth-exp-payoffs} for some continuous cumulative distribution functions~$(F_{j,i})_{i \in [N]}$.
\label{prop:convergence-to-point-and-this-is-nash}
\end{proposition}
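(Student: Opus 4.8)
The plan is to read Algorithm~\ref{alg:rs-obl} as a \emph{dual-averaging} (lazy mirror descent) scheme for the smooth game $\mc G([M],\Delta^N,(\bar u_j)_{j\in[M]})$ and then to show that convergence of its trajectory forces the limit to be a fixed point of the smooth best-response map, i.e.\ a Nash equilibrium of the smooth game. Throughout I abbreviate $\bar u_j(\bs p;\bs P_{-j}):=\bar u_j(u_j(\bs p;\bs P_{-j});(F_{j,i})_{i\in[N]})$. The key structural observation is that the increment in Line~\ref{line:statistic-update} is a (super)gradient of $\bar u_j$: since $u_j(\bs p_j;\bs P_{-j})=\langle\bs p_j,\bs r_j(\bs P_{-j})\rangle$ with $\bs r_j(\bs P_{-j}):=(u_j(\bs e_i;\bs P_{-j}))_{i\in[N]}$ is linear in $\bs p_j$, and $\tfrac{\diff}{\diff p}\int_{1-p}^1 F_{j,i}^{-1}(s)\diff s=F_{j,i}^{-1}(1-p)$, the vector $\bs g_j^{(t)}:=\bs r_j(\bs P_{-j}^{(t)})+(F_{j,i}^{-1}(1-p_{j,i}^{(t)}))_{i\in[N]}$ added to $\hat{\bs u}_j$ at round $t$ lies in $\partial_{\bs p_j}\bar u_j(\bs p_j^{(t)};\bs P_{-j}^{(t)})$ (an exact gradient when the $F_{j,i}^{-1}$ are continuous). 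Hence $\hat{\bs u}_j^{(t)}=\hat{\bs u}_j^{(0)}+\sum_{s=1}^t\lambda_s\bs g_j^{(s)}$, and by the discussion preceding~\eqref{eq:regularized-map}, the played strategy $\bs p_j^{(t)}=\bs p\opt(\hat{\bs u}_j^{(t-1)})$ maximizes $\bs p\mapsto\langle\bs p,\hat{\bs u}_j^{(t-1)}\rangle+R_j^G(\bs p)$ over $\Delta^N$, where $R_j^G(\bs p):=\sum_{i=1}^N\int_{1-p_i}^1 G_{j,i}^{-1}(s)\diff s$; equivalently, writing $\Phi_j^G(\bs y):=\max_{\bs p\in\Delta^N}\{\langle\bs p,\bs y\rangle+R_j^G(\bs p)\}$, each iterate $\bs p_j^{(t)}$ is a subgradient of the convex function $\Phi_j^G$ at $\hat{\bs u}_j^{(t-1)}$.

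Next I would set up a proof by contradiction. Suppose $\bs P^*$ is not a Nash equilibrium of the smooth game; then for some player $j$ there is $\bs q\in\Delta^N$ with $\bar u_j(\bs q;\bs P_{-j}^*)-\bar u_j(\bs p_j^*;\bs P_{-j}^*)=:3c>0$. Since the $F_{j,i}$ are continuous, Lemma~\ref{lem:concave-game} makes $\bar u_j(\,\cdot\,;\bs P_{-j}^*)$ concave on $\Delta^N$, so the supergradient inequality gives $\langle\bs h,\bs q-\bs p_j^*\rangle\ge 3c$ for \emph{every} $\bs h\in\partial_{\bs p_j}\bar u_j(\bs p_j^*;\bs P_{-j}^*)$. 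Now $\bs P^{(t)}\to\bs P^*$, the map $\bs r_j(\cdot)$ is continuous (multilinear), and the superdifferential of the concave map $\bs p\mapsto R_j^F(\bs p):=\sum_i\int_{1-p_i}^1F_{j,i}^{-1}(s)\diff s$ has closed graph; hence every subsequential limit of the sequence $\bs g_j^{(t)}$ lies in $\partial_{\bs p_j}\bar u_j(\bs p_j^*;\bs P_{-j}^*)$. Consequently $\liminf_t\langle\bs g_j^{(t)},\bs q-\bs p_j^*\rangle\ge 3c$, and since $\bs p_j^{(t+1)}\to\bs p_j^*$ while $\|\bs g_j^{(t)}\|$ stays bounded near the limit, $\langle\bs p_j^{(t+1)}-\bs q,\ \bs g_j^{(t)}\rangle\le -c$ for all sufficiently large $t$.

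The last step is the Fenchel-coupling (energy) estimate. Set $D_j(\bs q,\bs y):=\Phi_j^G(\bs y)-\langle\bs q,\bs y\rangle-R_j^G(\bs q)$, which is $\ge 0$ because $\bs q\in\Delta^N$. Since $\bs p_j^{(t+1)}=\bs p\opt(\hat{\bs u}_j^{(t)})$ is a subgradient of $\Phi_j^G$ at $\hat{\bs u}_j^{(t)}$, convexity of $\Phi_j^G$ yields $\Phi_j^G(\hat{\bs u}_j^{(t)})-\Phi_j^G(\hat{\bs u}_j^{(t-1)})\le\lambda_t\langle\bs p_j^{(t+1)},\bs g_j^{(t)}\rangle$, so
\[
D_j(\bs q,\hat{\bs u}_j^{(t)})-D_j(\bs q,\hat{\bs u}_j^{(t-1)})\;\le\;\lambda_t\,\langle\bs p_j^{(t+1)}-\bs q,\ \bs g_j^{(t)}\rangle\;\le\;-c\,\lambda_t
\]
for all large $t$. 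Summing from some $t_0$ onward and invoking $\sum_t\lambda_t=\infty$ forces $D_j(\bs q,\hat{\bs u}_j^{(t)})\to-\infty$, contradicting $D_j\ge 0$. Therefore $\bs P^*$ is a Nash equilibrium of $\mc G([M],\Delta^N,(\bar u_j)_{j\in[M]})$, as claimed.

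I expect the technical heart (and the main obstacle) to lie in the two ``regularity at the limit'' claims used in the second paragraph. First, that $\|\bs g_j^{(t)}\|$ stays bounded along the convergent trajectory: this can only fail if some coordinate $p_{j,i}^{(t)}$ tends to $0$ or $1$ while the corresponding quantile $F_{j,i}^{-1}$ is unbounded at $1^-$ or $0^+$, but such a configuration is self-contradictory, since an unbounded quantile at $1^-$ keeps the mirror map $\bs p\opt$ full-support (ruling out $p_{j,i}^*=0$) and symmetrically at $0^+$ (ruling out $p_{j,i}^*=1$). Second, that subsequential limits of the \emph{specific} supergradient selection $\bs g_j^{(t)}$ land in the superdifferential of $\bar u_j$ at $\bs P^*$ — the closed-graph property of the superdifferential of the concave regularizer $R_j^F$ together with joint continuity of $(\bs p_j,\bs P_{-j})\mapsto\bs r_j(\bs P_{-j})$. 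Both require careful but routine handling of possible discontinuities of the quantile functions and possible non-uniqueness of $\bs p\opt$; everything else follows the standard dual-averaging/Fenchel-coupling template, here adapted to the feature that the mirror-map regularizer $R_j^G$ need not coincide with the payoff regularizer $R_j^F$.
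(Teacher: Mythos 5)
Your proposal is correct and follows essentially the same route as the paper: both argue by contradiction, exploit the dual-averaging structure of Algorithm~\ref{alg:rs-obl} so that the aggregated score $\hat{\bs u}_j^{(T)}$ grows like $c\sum_t\lambda_t$ in the direction of a profitable deviation $\bs q_j-\bs p_j^*$, and derive a contradiction with a bounded convex potential (you telescope the Fenchel coupling $D_j\ge 0$; the paper applies the supergradient inequality for the regularizer $h_j$ once at time $T$ and uses that $h_j$ is bounded on $\Delta^N$). The remaining differences---your value-gap/closed-graph handling of the limit versus the paper's continuity neighborhood around $(\bs P^*,\bs V^*)$---are cosmetic variations of the same template adapted from \citet{mertikopoulos2019learning}.
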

\begin{proof}[Proof of Proposition~\ref{prop:convergence-to-point-and-this-is-nash}]
For strategy profile $\bs P^* = (\bs p_1^*, \ldots, \bs p^*_M) \in (\Delta_N)^M$, denote by
\[
\bs v_j^* = \left(u_j(\bs e_i, \bs P^*_{-j}) + F^{-1}_{j,i}\left(1- p^*_{j,i}\right)\right)_{i \in [N]} \quad \forall j \in [M].
\]
Note that 
% $\bs v_j^* = \nabla_{\bs p} \bar u_j(u_j; (\bs p_j; \bs P_{-j}); (F_{j,i})_{i \in [N]})|_{\bs p = \bs p_j^*}$, 
$\bs v_j^* = \nabla_{\bs p} \bar u_j(\bs p_j; \bs P_{-j})|_{\bs p = \bs p_j^*}$, 
where $\bar u_j$ is as defined in \eqref{eq:smooth-exp-payoffs}. If $\bs P^*$ is a Nash equilibrium of the game $\mc G([M], \Delta_N, (\bar u_j)_{j\in [M]})$, then $\bs v^*_j$ should satisfy \eqref{eq:ne-first-order} for all $j \in [M]$.
Now, for the sake of argument, suppose that $\bs P^*$ is not a Nash equilibrium of the game $\mc G([M], \Delta_N, (\bar u_j)_{j\in [M]})$. 
Then, there exists $j \in [M]$ and a deviation $\bs q_j \in \Delta_N$ such that
\[
\langle \bs v^*_j, \bs q_j - \bs p^*_j \rangle > 0.
\]
We denote $\bs V^* = (\bs v^*_1, \ldots, \bs v^*_M)$, and neighborhoods of $\bs P^*$ and $\bs V^*$ as $\mc P$.
Hence, by continuity, there exists a positive constant $c > 0$ and neighborhoods $\mc Z$ and $\mc V$ of $ \bs P^*$ and $ \bs V^*$, respectively, such that,
\begin{equation}
\langle \bs v'_j, \bs q_j - \bs p'_j \rangle \geq c > 0,
\label{eq:v-q-p-ineq}
\end{equation}
for all $\bs P' \in \mc Z$ and $\bs V' \in \mc V$. 
Suppose that $\bs P^{(T)}$ converges to $\bs P^*$, then, for simplicity, we assume that $\bs P^{(T)} \in \mc Z$ and
\[\bs V^{(T)} = \left(\bs v_1^{(T)}, \ldots, \bs v_M^{(T)}\right) \in \mc V~\text{where}~\bs v^{(T)}_j = \left(u_j\left(\bs e_i, \bs P^{(T)}_{-j}\right) + F^{-1}_{j,i}\left(1- p^{(T)}_{j,i}\right)\right)_{i \in [N]}.\] 
Denote $ \hat{\bs U}^{(t)} = ((\hat{\bs u}_{1,i}^{(t)})_{i \in [N]}, \ldots, (\hat{\bs u}_{M,i}^{(t)})_{i \in [N]})$. Then, we have
\[
\hat{\bs U}^{(T)} = \hat{\bs U}^{(0)} + \sum_{t=1}^T \lambda_t \bs V^{(t)} = \hat{\bs U}^{(0)} + \left(\sum_{t=1}^T \lambda_t\right) \bar{\bs V}^{(T)},
\]
where $\bar{\bs V}^{(T)} = (\bar{\bs v}_1^{(T)}, \ldots, \bar{\bs v}_M^{(T)})$ and satisfies
\[
\bar{\bs V}^{(T)} = \frac{\sum_{t=1}^T \lambda_t \bs V^{(t)}}{\sum_{t=1}^T \lambda_t}.
\]
If we denote by
\[
h_j(\bs p) = \sum_{i=1}^N \int_{1-p_i}^1 G^{-1}_{j,i}(s)\diff s,
\]
then $\bs p^{(T)}_j$ solves
\[
\bs p^{(T)}_j \in \arg\max_{\bs p \in \Delta_N}
\Big\{ \langle \hat{\bs u}_j^{(T-1)}, \bs p \rangle + h_j(\bs p) \Big\}.
\]
Hence, for any $\bs q_j \in \Delta_N$, by optimality of $\bs p^{(T)}_j$ we have
\begin{align}\label{eq:conv-ne}
h_j(\bs q_j) - h_j(\bs p^{(T)}_j)
&\le \langle \hat{\bs u}_j^{(T-1)}, \bs p^{(T)}_j - \bs q_j \rangle \\
&= -\langle \hat{\bs u}_j^{(T-1)}, \bs q_j - \bs p^{(T)}_j \rangle \\
&= -\langle \hat{\bs u}_j^{(0)}, \bs q_j - \bs p^{(T)}_j \rangle
- \Big(\sum_{t=1}^{T-1} \lambda_t\Big)
\langle \bar{\bs v}_j^{(T-1)}, \bs q_j - \bs p^{(T)}_j \rangle .
\end{align}

Given that $\bs P^{(T)} \to \bs P^*$ as $T \to \infty$, we have $\bs V^{(T)} \to \bs V^*$, and thus
$\bar{\bs V}^{(T-1)} \to \bs V^*$.
By the Cauchy--Schwarz inequality,
\[
-\langle \hat{\bs u}_j^{(0)}, \bs q_j - \bs p^{(T)}_j \rangle
\le \big|\langle \hat{\bs u}_j^{(0)}, \bs q_j - \bs p^{(T)}_j \rangle\big|
\le \|\hat{\bs u}_j^{(0)}\| \cdot \|\bs q_j - \bs p^{(T)}_j\|
\le 2\|\hat{\bs u}_j^{(0)}\| = O(1),
\]
where the last inequality follows because $\max_{\bs p,\bs q \in \Delta_N}\|\bs p-\bs q\|=2$.
Additionally, note that we have
$\langle \bar{\bs v}_j^{(T-1)}, \bs q_j - \bs p^{(T)}_j \rangle \ge c>0$
by~\eqref{eq:v-q-p-ineq}.
Hence, \eqref{eq:conv-ne} implies
\[
h_j(\bs q_j) - h_j(\bs p^{(T)}_j)
\le O(1) - c \sum_{t=1}^{T-1}\lambda_t \xrightarrow[T\to\infty]{} -\infty,
\]
where the last limit follows because $(\lambda_t)_{t\in\mathbb N}$ is such that
$\sum_{t=1}^\infty \lambda_t = \infty$.
The conclusion above draws a contradiction because
$\max_{\bs p,\bs q \in \Delta_N}\big(h_j(\bs p)-h_j(\bs q)\big) < \infty$
implies
\[
h_j(\bs q_j) - h_j(\bs p^{(T)}_j)
= -\big(h_j(\bs p^{(T)}_j)-h_j(\bs q_j)\big)
\ge -\max_{\bs p,\bs q\in\Delta_N}\big(h_j(\bs p)-h_j(\bs q)\big)
> -\infty.
\]
Therefore, we conclude that $\bs P^*$ is a Nash equilibrium of the game
$\mc G([M], \Delta_N, (\bar u_j)_{j\in[M]})$.
\end{proof}
% \end{proof}
% }

% \subsection{Convergence Analysis}
Equipped with the discussions on static and dynamic principles of Algorithm~\ref{alg:rs-obl}, we are ready to focus on the convergence of Algorithm~\ref{alg:rs-obl} to the NE of the game~$\mc G([M], \Delta_N, (\bar u_j)_{j\in [M]})$. 
% We first state the following assumption regarding the marginal distributions associated with the dynamic belief set $\mc M_j$.
\begin{assumption}\label{ass:G-fenchel}
    For each $j \in [M]$, $\mc M_j$ is in the form of~\eqref{eq:marginal-belief-set} with cumulative distribution functions~$(G_{j,i})_{i\in [N]}$ that are Lipschitz continuous with Lipschitz constant $ L > 0$. {Let $(\bs u_n)_{n\ge 1}\subset [0,1]^N$ be any sequence, and, for each $n$, if $\bs p\opt(\bs u_n) \to \bs p$, then we have
    \begin{equation}
        \max\limits_{\bs p' \in \Delta_N} g(\bs p'; \bs u_n; (G_{j,i})_{i \in [N]}) - g(\bs p; \bs u_n ; (G_{j,i})_{i \in [N]}) \to 0.
    \end{equation}}
    \label{ass:dynamic-belief-sets-fenchel-coupling}
\end{assumption}

% \notebt{When the marginals of the belief sets of Lipschitz continuous, then by Lemma \ref{lem:strong-convex-reg}, $-g$ is strongly convex. Hence, by \cite[Corollary 4.4]{mertikopoulos2019learning} the Assumption \ref{ass:G-fenchel} is satisfied. Hence, every example of the marginal belief sets presented in \ref{sec:SE-OB} satisfies this assumption.}

Although Assumption \ref{ass:dynamic-belief-sets-fenchel-coupling} might initially appear restrictive, it is easy to check that it is satisfied whenever the belief set $\mathcal{M}_j$ is generated by cumulative distribution functions featured in the Examples \ref{ex:exp-marginal}, \ref{ex:uniform} and \ref{ex:pareto} (see our proofs in Appendix~\ref{sec:verify} and \cite[Page 16]{mertikopoulos2019learning}). 
Moreover, we present a sufficient condition for Assumption~\ref{ass:dynamic-belief-sets-fenchel-coupling} as follows.
{\color{black}\begin{lemma}\label{lem:suff}
Suppose that $G_{i}$, $i \in [N]$, are Lipschitz continuous with Lipschitz constant $L>0$ and either of the following conditions holds.
\begin{enumerate}
    \item[(i)] there exists $\epsilon >0$ such that $\epsilon\leq p_i^\star(\bs u) \leq 1-\epsilon$ for all $i\in[N]$; 
    \item[(ii)] there exists $\bar B< \infty$ such that $|G_i^{-1}(t)| \leq \bar B$ for all $t\in[0, 1]$.
\end{enumerate}
Let $\bs u_n\subset [0,1]^N$ be any sequence, and for each $n$, if $p\opt(\bs u_n)\to \bar{\bs p}$, then we have $\max_{\bs p' \in \Delta_N} g(\bs p';\bs u_n; G_i) - g(\bar {\bs p};\bs u_n ; G_i) \to 0.$
\end{lemma}
\begin{proof}[Proof of Lemma~\ref{lem:suff}]
Define
\[
R(\bs p) = \sum_{i=1}^N \int_{1-p_i}^1 G_i^{-1}(s)ds,
\qquad
g(\bs p;\bs u;(G_i)) = \bs p^\top \bs u + R(\bs p).
\]
Since each $G_i$ is $L$-Lipschitz, Lemma~G.5 implies that $-g(\cdot;\bs u;(G_i))$ is $1/L$-strongly convex
in $\bs p$, equivalently $g(\cdot;\bs u;(G_i))$ is $1/L$-strongly concave in $\bs p$; in particular
the maximizer $p^\star(\bs u)$ is unique.

Moreover, for each $i$,
\[
\frac{\diff R(\bs p)}{\diff p_i} = G_i^{-1}(1-p_i),
\quad\text{so}\quad
\nabla_{\bs p} g(\bs p;\bs u;(G_i)) = \bs u + \big(G_i^{-1}(1-p_i)\big)_{i\in[N]}.
\]

Let $\bs p_n=p^\star(\bs u_n)$ and define
\[
V_n = g(\bs p_n;\bs u_n;(G_i)) - g(\bar{\bs p};\bs u_n;(G_i)) \ge 0.
\]
By $1/L$-strong concavity of $g(\cdot;\bs u_n)$, for every $n$ we have
\[
V_n \le \left\langle \nabla_{\bs p} g(\bar{\bs p};\bs u_n;(G_i)), \bs p_n-\bar{\bs p}\right\rangle
-\frac{1}{2L}\|\bs p_n-\bar{\bs p}\|_2^2.
\]
Hence, using Cauchy--Schwarz,
\[
V_n \le \|\nabla_{\bs p} g(\bar{\bs p};\bs u_n;(G_i))\|_2\|\bs p_n-\bar{\bs p}\|_2.
\]

Because $\bs u_n\in[0,1]^N$, we have $\|\bs u_n\|_2\le \sqrt N$.
Under (i), $\bar{\bs p}\in[\epsilon,1-\epsilon]^N$ and continuity of $G_i^{-1}$ on $[\epsilon,1-\epsilon]$
implies there exists $B_\epsilon<\infty$ with $|G_i^{-1}(t)|\le B_\epsilon$ for all $t\in[\epsilon,1-\epsilon]$,
so $\| (G_i^{-1}(1-\bar p_i))_{i\in[N]}\|_2 \le \sqrt NB_\epsilon$.
Under (ii), $|(G_i^{-1}(1-\bar p_i))|\le \bar B$ for all $i$, so the same norm is bounded by $\sqrt N\bar B$.
Therefore, in both cases there is a finite constant $C$ such that
$\|\nabla_{\bs p} g(\bar{\bs p};\bs u_n;(G_i))\|_2 \le C$ for all $n$.
It follows that
\[
0 \le V_n \le C\|\bs p_n-\bar{\bs p}\|_2 \to 0,
\]
since $\bs p_n\to\bar{\bs p}$ by assumption.
\end{proof}}
% }

% one can show that every example of the marginal belief sets presented in Section~\ref{sec:SE-OB} satisfies this assumption. 
% \notebt{What was the reason for that? that $g$ is strongly convex in that case?}
% \noteyg{The reason is that the regularization part in $g$ is strictly concave. By \cite[(4.10)]{mertikopoulos2019learning}, the Bregman divergence is well-defined.}
If the risk-preference functions satisfy Assumption~\ref{ass:vs-hessian} and the dynamic belief sets $\mc M_j$ satisfy Assumption~\ref{ass:dynamic-belief-sets-fenchel-coupling}, then the following theorem demonstrates that Algorithm~\ref{alg:rs-obl} converges to a strategy profile that forms an SE-OB for the game $\mc G([M], \Delta_N, (u_j)_{j \in [M]})$ when the belief sets of the players are of the form \eqref{eq:marginal-belief-set} induced by their corresponding risk-preference functions.  

Recall that the strategies of each player are chosen by solving~\eqref{eq:regularized-map}. 
Then, the sequence of play created by Algorithm~\ref{alg:rs-obl} is equivalent to that of dual averaging algorithm~\cite[\S~3]{mertikopoulos2019learning} with regularization functions of the form 
\[\bs p \mapsto \sum_{i=1}^N \int_{1-p_{i}}^1 G_{j,i}^{-1}(s)\diff s \quad  \forall  j\in[M].\]
By Lemma~\ref{lem:concave-game},~$\mc G([M], \Delta_N, (\bar u_j)_{j \in [M]})$ is a concave game. Hence, our proof can be adapted from~\cite[Theorem 4.1]{mertikopoulos2019learning}, which is for general cases when the feedback mechanism is noisy. 
For the sake of completeness, we now present a slightly revised version of their proof.
\begin{theorem}
Under Assumptions~\ref{ass:vs-hessian} and~\ref{ass:dynamic-belief-sets-fenchel-coupling}, suppose that Algorithm~\ref{alg:rs-obl} is run with the step-size sequence $(\lambda_t)_{t\in [T]}$ such that %$\lambda_t > 0$ and
$\sum_{s=1}^t \lambda^2_s / \sum_{s=1}^t \lambda_s \to 0$. Then, the trajectory of play $\bs P^{(T)}$ converges to an \eqref{eq:b-done} of the game~$\mathcal G([M], \Delta_N, (u_j)_{j\in [M]})$ induced by the belief sets $\mc B_j$, ${j \in [M]}$, of the form~\eqref{eq:marginal-belief-set} with cumulative functions~$(F_{j,i})_{i \in [N]}$. 
\label{theorem:asymp-convergence}
\end{theorem}
\begin{proof}[Proof of Theorem~{\ref{theorem:asymp-convergence}}]
By Proposition~\ref{prop:vs-done}, the game $\mc G([M], \Delta_N, (\bar u_j)_{j\in[M]})$ admits a unique Nash equilibrium that is variationally globally stable. 
Recall that the strategies of each player are chosen by solving~\eqref{eq:regularized-map}. Then, the sequence of play created by Algorithm~\ref{alg:rs-obl} is equivalent to that of dual averaging algorithm~\cite[\S 3]{mertikopoulos2019learning} with regularization functions $\sum_{i=1}^N \int_{1-p_i}^1 G_i^{-1}(s) \diff s$, which are strongly concave thanks to Lemma~\ref{lem:strong-convex-reg}.
Finally, the claim follows by~\cite[Theorem~4.6]{mertikopoulos2019learning}.
% \end{proof}

% {\color{red}\subsection{Proof of Remark~\ref{rk:exp}}\label{sec:proof-rk}
% To justify the condition with the exponential marginal distributions, we note that $F^{-1}_{j,i}(s) = -\lambda(1+\log [\eta_i^{-1}(1-s)])$ and $F'_{j,i}(t) = \eta_i \lambda^{-1}\exp(-t/\lambda-1)$. Then, in this case, we have
% \[
% \sup_{p \in [0,1]} \max_{i \in [N]} F'_{j,i}(F_{j,i}^{-1}(1-p)) = \sup_{p \in [0,1]} \max_{i \in [N]} \eta_i \lambda^{-1}\exp\left(-1 + (1+\log(\eta^{-1}_i p))\right) = \sup_{p \in [0,1]} \lambda^{-1} p = \lambda^{-1}.
% \]
% Note that the operator norm is bounded above by the Frobenius norm and that $u_j(\bs P)\in[0,1]$, the right-hand side of~\eqref{eq:vs-hessian-cond-F_k} is at most $N(M-1)$. 
% Thus, a feasible marginal distribution satisfying Assumption~\ref{ass:vs-hessian} should have $\lambda > N(M-1)$.}
% \end{proof}
\end{proof}

% {\color{bblye}
\section{Additional Related Literature}
\label{app:add-lit-review}

This appendix collects two background strands that are conceptually related to SE-OB
but not central to the paper's main economic mechanism.
The first concerns the computational difficulty of Nash equilibrium in general finite games,
which motivates equilibrium notions with tractable regularized best-response structure.
The second concerns convergence of adaptive dynamics in repeated games, clarifying when
(iterative) adjustment procedures can be expected to reach equilibrium objects.

\vspace{0.5cm}

\noindent\textbf{Computational complexity of Nash equilibrium.}
Computing a Nash equilibrium in a normal-form game is inherently complex, primarily because of the combinatorial explosion of potential strategies and the intricate interdependencies among players’ choices. While Nash’s theorem assures us that an equilibrium exists, subsequent research has demonstrated that finding this equilibrium in finite normal-form games is PPAD-complete \citep{daskalakis2009complexity,chen2009settling}. 
This classification implies that, in the worst case, no polynomial-time algorithm is likely to exist for computing a Nash equilibrium.
As a result, researchers have shifted their focus toward the approximability of Nash equilibria, seeking to compute an equilibrium that meets a predefined tolerance level rather than an exact solution. However, even this relaxed goal has proven to be computationally demanding \citep{daskalakis2013complexity, rubinstein2015inapproximability}, that is, even approximating a Nash equilibrium remains PPAD-hard. This persistent intractability, along with the potential tradeoff between the relatively high payoffs associated with Nash equilibria and the substantial computational resources required to compute them, raises important questions about whether players will always seek to reach such equilibria in practice \citep{daskalakis2013complexity, hart2010long}.

The independent nature of players' action choices introduces additional degrees of freedom, significantly contributing to the computational complexity of strategic dynamics. Relaxing the independence assumption by permitting correlated actions leads to equilibrium concepts that are computationally tractable. Notably, coarse correlated equilibrium (CCE) \citep{moulin1978strategically} and correlated equilibrium (CE) \citep{aumann1987correlated} exemplify such relaxations. However, because these concepts depend on the existence of a public signal and the coordination of players' actions, their applicability is limited in decentralized settings or in scenarios where privacy concerns restrict such coordination.

Drawing inspiration from smoothed analysis \citep{spielman2009smoothed}, \citet{daskalakis2023smooth} introduced the concept of a smooth Nash equilibrium. In this framework, players' strategies are confined to smooth distributions, meaning that the probability assigned to any pure strategy is bounded away from one. This constraint effectively limits the use of pure strategies. Moreover, under a given smoothness parameter and assuming independent players, \citep{daskalakis2023smooth} demonstrate the existence of a polynomial-time algorithm\footnote{An algorithm with computational complexity polynomial in the number of players and the number of actions of each player.} for approximating the smooth Nash equilibrium in normal-form games, thereby substantially alleviating the computational challenges associated with computing a Nash equilibrium.
Another strategy to address these computational challenges is to incorporate risk aversion into players' decision-making. Rather than merely maximizing expected payoffs, risk-averse players optimize a specific risk measure, resulting in risk-adjusted equilibria. When players are constrained to bounded strategies, these equilibria are computationally tractable in all $n$-player matrix games and finite-horizon Markov games \citep{mazumdar2024tractable}.\\

\noindent \textbf{Learning Nash equilibrium in repeated games.} In its most natural form, a game is not a one-shot, isolated event but a dynamic process of continuous interaction, where players iteratively adapt their strategies based on the observed behavior of their opponents. This adaptive behavior is captured by the concept of fictitious play \citep{brown1951iterative,robinson1951iterative,hofbauer2002global}, in which each player chooses an action that is a \textit{best response} to the empirical distribution of the opponents' play until that time. First, \citet{robinson1951iterative} demonstrated that, in two-player zero-sum games, the joint distribution of action frequencies converges to a Nash equilibrium. However, \citet{ref:shapley1963some} later showed that this convergence result does not extend beyond zero-sum games. In repeated games, players experience regret, defined as the cumulative difference between the payoff of the strategy they followed and that of the best fixed action in hindsight. Notably, fictitious play is not Hannan consistent, that is, it does not guarantee no-regret performance. In contrast, when players adopt no-regret strategies in repeated games, \citet{hart2000simple} proved that the empirical frequency of their strategies converges to a CCE. Although this result is promising, the resulting CCE may fail to satisfy even the most basic rationalizability axioms \citep{viossat2013no}. Moreover, typical no-regret convergence results address the empirical frequency over the entire horizon rather than the strategies played in the final iteration. The study of last-iterate convergence in repeated games seeks to address this shortcoming. Specifically, in smooth concave games that admit a stable Nash equilibrium, \citet{mertikopoulos2019learning} and \citet{ref:hsieh2021adaptive} show that strategies generated by a variant of the dual averaging algorithm \citep{nesterov2009primal} converge to the Nash equilibrium. Similarly, in monotone \citep{ref:rosen1965existence} and smooth games, \citet{golowich2020tight} demonstrate that an optimistic gradient-based algorithm, augmented with an additional term, converges to a Nash equilibrium. Unfortunately, normal-form games do not invariably exhibit the desirable properties of smoothness or monotonicity, so these convergence guarantees do not directly extend to mixed extensions of general-form games.

\section{Additional details for examples of marginal belief sets}
\label{sec:examples-proofs}
In this section, we provide detailed derivations for the results presented in Examples~\ref{ex:exp-marginal}, \ref{ex:uniform}, and \ref{ex:pareto}, after which, we will verify that all three examples satisfy Assumptions~\ref{ass:vs-hessian} and \ref{ass:G-fenchel}.

\subsection{Derivation of examples}
\noindent\textbf{Detailed derivations of Example~\ref{ex:exp-marginal}.}
Fix $j\in[M]$ and suppose the marginal distributions in \eqref{eq:marginal-belief-set} are (shifted)
exponential with cumulative distribution functions
\[
F_{j,i}(s)=\max\Bigl\{0,1-\eta_{j,i}\exp\bigl(-s/\gamma_j-1\bigr)\Bigr\},
\qquad s\in\mathbb R,
\]
where $\gamma_j>0$ and $\eta_{j,i}>0$ for all $i\in[N]$.
For $t\in(0,1)$, the left-quantile function is
\[
F_{j,i}^{-1}(t)
=-\gamma_j\left(\log\left(\frac{1-t}{\eta_{j,i}}\right)+1\right).
\]
Hence, for every $p_i\in[0,1]$,
\[
\int_{1-p_i}^1 F_{j,i}^{-1}(t)\diff t
= -\gamma_j p_i \log\left(\frac{p_i}{\eta_{j,i}}\right),
\]
where we interpret $0\log 0=0$.

Recall the definition of $T(\mu;\bs u)$ in \eqref{eq:Tau-def}.
By Lemma~\ref{lem:distributional-regularization}, if $\mu_j^{\opt}$ solves \eqref{eq:opt-exp-payoff},
then any maximizer $\bs p^\star$ of the concave program
\[
\max_{\bs p\in\Delta_N}
\left\{
\bs p^\top \bs u
-\gamma_j\sum_{i=1}^N p_i \log\left(\frac{p_i}{\eta_{j,i}}\right)
\right\}
\]
satisfies $\bs p^\star\in T(\mu_j^{\opt};\bs u)$; equivalently,
\[
\arg\max_{\bs p\in\Delta_N}
\left\{
\bs p^\top \bs u
-\gamma_j\sum_{i=1}^N p_i \log\left(\frac{p_i}{\eta_{j,i}}\right)
\right\}
\subseteq T(\mu_j^{\opt};\bs u).
\]
In particular, the maximizer is unique and is given by
\[
p_i^\star(\bs u)
=\frac{\eta_{j,i}\exp(u_i/\gamma_j)}{\sum_{k=1}^N \eta_{j,k}\exp(u_k/\gamma_j)},
\qquad i\in[N],
\]
which matches Example~\ref{ex:exp-marginal}.

% \textbf{Detailed derivations of Example \ref{ex:exp-marginal}}.
% If the marginal distributions of \eqref{eq:marginal-belief-set} are set to exponential distributions, then 
% \[\int_{1-p_{i}}^1 F_{j,i}^{-1}(t) \diff t =- \gamma_j p_{i} \log\left(\frac{p_{i}}{\eta_{j,i}}\right). \]
% Recall the definition of $T(\mu; \bs u)$ in \eqref{eq:Tau-def}.
% Then, by Lemma \ref{lem:distributional-regularization}, we have
% \[ \argmax\limits_{\bs p \in \Delta_N} \bs p^\top \bs u - \gamma_j \sum\limits_{i=1}^N p_i \log\left( \frac{p_i}{ \eta_{j,i}}\right) \in T(\mu_j\opt; \bs u),\]
% where $\mu_j\opt$ solves \eqref{eq:opt-exp-payoff}.
% An application of the KKT conditions to the above optimization problem yields a closed-form expression for its maximizer, as stated in Example~\ref{ex:exp-marginal}.
% We note that $F_{j,i}^{-1}(t) = -\gamma_j (1-\log \eta_{j,i} + \log()1-t)$, then 
% \begin{align}
%     \int_{1-p_i}^1 F_{j,i}^{-1}(t) dt = -\gamma_j p_i \log\frac{p_i}{\eta_{j,i}}.
% \end{align}
% By maximizing the objective function
% \[
% \bs p^\top \bs u-\sum_{i=1}^N \gamma_j p_i \log\frac{p_i}{\eta_{j,i}},
% \]
% it yields 
% \[
% p_i \propto \eta_{j,i} \exp\left(\frac{u_i}{\gamma_j} - 1\right),
% \]
% thus, via normalization, we obtain
% \[\tau_i (\mu_j\opt; \bs u ) = {\eta_{j,i} \exp(u_i/ \gamma_j)}/{ \sum_{k=1}^N \eta_{j,k} \exp(u_k/\gamma_j)}.\]

\noindent\textbf{Detailed derivations of Example~\ref{ex:uniform}.}
Fix a player $j\in[M]$ and abbreviate $\boldsymbol\theta_j=(\theta_{j,1},\dots,\theta_{j,N})$.
Under the uniform-marginal specification in Example~\ref{ex:uniform},
\[
\xi_{j,i}\sim \mathrm{Unif}\left([\theta_{j,i}-\gamma_j/2,\ \theta_{j,i}+\gamma_j/2]\right)
\qquad (i\in[N]),
\]
so the (left) quantile function of $F_{j,i}$ is
\begin{equation}\label{eq:uniform-quantile}
F_{j,i}^{-1}(t)=\theta_{j,i}-\frac{\gamma_j}{2}+\gamma_j t,\qquad t\in[0,1].
\end{equation}
Therefore, for any $p_i\in[0,1]$,
\begin{align}
\int_{1-p_i}^1 F_{j,i}^{-1}(t)dt
&=\int_{1-p_i}^1\Big(\theta_{j,i}-\frac{\gamma_j}{2}+\gamma_j t\Big)dt \label{eq:int-start}\\
&=\Big(\theta_{j,i}-\frac{\gamma_j}{2}\Big)\underbrace{\int_{1-p_i}^1 dt}_{=p_i}
+\gamma_j\underbrace{\int_{1-p_i}^1 tdt}_{=(1-(1-p_i)^2)/2} \label{eq:int-split}\\
&=\Big(\theta_{j,i}-\frac{\gamma_j}{2}\Big)p_i+\frac{\gamma_j}{2}\Big(1-(1-2p_i+p_i^2)\Big) \label{eq:int-eval}\\
&=\theta_{j,i}p_i+\frac{\gamma_j}{2}\big(p_i-p_i^2\big). \label{eq:int-final}
\end{align}

Now apply Lemma~\ref{lem:distributional-regularization}: if $\mu_j\opt$
solves~\eqref{eq:opt-exp-payoff}, then any maximizer of
\[
\max_{\bs p\in\Delta_N}\Big\{\bs u^\top \bs p+\sum_{i=1}^N\int_{1-p_i}^1 F_{j,i}^{-1}(t)dt\Big\}
\]
belongs to $T(\mu_j\opt;\bs u)$. Substituting~\eqref{eq:int-final} yields the equivalent program
\begin{align}
\arg\max_{\bs p\in\Delta_N}
\Big\{\bs u^\top \bs p+\boldsymbol\theta_j^\top \bs p+\frac{\gamma_j}{2}\sum_{i=1}^N(p_i-p_i^2)\Big\}
&=\arg\max_{\bs p\in\Delta_N}
\Big\{(\bs u+\boldsymbol\theta_j)^\top \bs p-\frac{\gamma_j}{2}\|\bs p\|_2^2\Big\},
\label{eq:uniform-quad-prog}
\end{align}
because $\sum_{i=1}^N p_i=1$ on $\Delta_N$ makes $\frac{\gamma_j}{2}\sum_i p_i=\gamma_j/2$ a constant.

Next, complete the square in~\eqref{eq:uniform-quad-prog}. For any $\bs p\in\Delta_N$,
\begin{align}
(\bs u+\boldsymbol\theta_j)^\top \bs p-\frac{\gamma_j}{2}\|\bs p\|_2^2
&=-\frac{\gamma_j}{2}\Big\|\bs p-\frac{\bs u+\boldsymbol\theta_j}{\gamma_j}\Big\|_2^2
+\frac{1}{2\gamma_j}\|\bs u+\boldsymbol\theta_j\|_2^2. \label{eq:complete-square}
\end{align}
The final term in~\eqref{eq:complete-square} does not depend on $\bs p$, so the maximizers of
\eqref{eq:uniform-quad-prog} coincide with the Euclidean projection of
$(\bs u+\boldsymbol\theta_j)/\gamma_j$ onto the simplex:
\begin{equation}\label{eq:projection-form}
\arg\max_{\bs p\in\Delta_N}\Big\{(\bs u+\boldsymbol\theta_j)^\top \bs p-\frac{\gamma_j}{2}\|\bs p\|_2^2\Big\}
=
\arg\min_{\bs p\in\Delta_N}\Big\|\bs p-\frac{\bs u+\boldsymbol\theta_j}{\gamma_j}\Big\|_2^2 =\operatorname{sparsemax}\Big(\frac{\bs u+\boldsymbol\theta_j}{\gamma_j}\Big),
\end{equation}
where the equality follows by definition of the sparsemax operator
\citep[Proposition~1]{ref:martins2016softmax}. 
\medskip

% \noindent\textbf{Detailed derivations of Example \ref{ex:pareto}.} If the marginal distributions of \eqref{eq:marginal-belief-set} are set to Pareto distributions, then
% \[\int_{1-p_i}^1 F_{j,i}^{-1}(t) \diff t = -{\gamma_j}\eta_{j,i} \cdot \frac{(p_i/\eta_{j,i})^q - p_i/(\eta_{j,i}) }{q-1}.\]
% Then, by Lemma \ref{lem:distributional-regularization},
% \begin{equation*} \textrm{\rm argmax}_{\bs p\in \Delta_N} \bs p^\top \bs u-{\gamma_j}{(q-1)^{-1}} \sum_{i=1}^N\eta_{j,i}\left(\left({p_i}/{\eta_{j,i}}\right)^{q} - {p_i}/{\eta_{j,i}}\right) \subseteq T(\mu_j\opt; \bs u),\end{equation*}
% where $\mu_j\opt$ solves \eqref{eq:opt-exp-payoff}, is given by the expression stated in Example \ref{ex:pareto}.

\noindent\textbf{Detailed derivations of Example~\ref{ex:pareto}.}
If the marginal distributions in \eqref{eq:marginal-belief-set} are Pareto as in Example~\ref{ex:pareto},
then for each $i\in[N]$ and $p_i\in[0,1]$,
\[
\int_{1-p_i}^{1} F_{j,i}^{-1}(t) \diff t
= \frac{\gamma_j}{q-1}\left(p_i-\frac{p_i^q}{\eta_{j,i}^{q-1}}\right)
= -\frac{\gamma_j\eta_{j,i}}{q-1}\left(\left(\frac{p_i}{\eta_{j,i}}\right)^q-\frac{p_i}{\eta_{j,i}}\right).
\]
Then, by Lemma~\ref{lem:distributional-regularization},
\[
\arg\max_{\bs p\in\Delta_N}
\left\{
\bs p^\top \bs u
-\frac{\gamma_j}{q-1}\sum_{i=1}^N \eta_{j,i}\left(\left(\frac{p_i}{\eta_{j,i}}\right)^q-\frac{p_i}{\eta_{j,i}}\right)
\right\}
\subseteq T(\mu_j\opt;\bs u),
\]
where $\mu_j\opt$ solves \eqref{eq:opt-exp-payoff}. This is the expression stated in
Example~\ref{ex:pareto}.

{

% \subsection{Verification of Assumptions~\ref{ass:vs-hessian} and \ref{ass:G-fenchel} in examples}\label{sec:verify}
% In this section, we verify that Examples~\ref{ex:exp-marginal}, \ref{ex:pareto}, and \ref{ex:uniform} satisfy Assumptions~\ref{ass:vs-hessian} and \ref{ass:G-fenchel}.

\subsection{Verification of Assumptions~\ref{ass:vs-hessian} and \ref{ass:G-fenchel} in examples}\label{sec:verify}
In this section, we verify that Examples~\ref{ex:exp-marginal}, \ref{ex:pareto}, and \ref{ex:uniform}
satisfy Assumptions~\ref{ass:vs-hessian} and \ref{ass:G-fenchel}.

\noindent\textbf{Verification of Example~\ref{ex:exp-marginal}.}
Fix $j\in[M]$ and $i\in[N]$, and write $\gamma=\gamma_j$, $\eta=\eta_{j,i}$ and $F=F_{j,i}$.
Recall that
\[
F(s)=\max\Bigl\{0,1-\eta\exp\bigl(-s/\gamma-1\bigr)\Bigr\},\qquad s\in\mathbb R,
\]
with $\gamma>0$ and $\eta>0$.
Define
\[
s_0=\gamma(\log\eta-1),\qquad S=(s_0,\infty).
\]
Then $F(s)=0$ for $s\le s_0$, while for $s\in S$,
\[
F(s)=1-\eta\exp(-s/\gamma-1)\in(0,1).
\]
Since $s\mapsto 1-\eta\exp(-s/\gamma-1)$ is $C^\infty$ on $S$, $F$ is continuous on $S$ and, for
$s\in S$,
\[
F'(s)=\frac{\eta}{\gamma}\exp(-s/\gamma-1)>0,
\]
so $F$ is strictly increasing on $S=\{s\in\mathbb R: F(s)\in(0,1)\}$.
Moreover,
\[
\lim_{s\downarrow s_0}F(s)=1-\eta\exp(-s_0/\gamma-1)
=1-\eta\exp(-\log\eta)=0=F(s_0),
\]
and since $F(s)=0$ for $s\le s_0$, we also have $\lim_{s\uparrow s_0}F(s)=0$.
Therefore, $F$ is continuous on $\mathbb R$, verifying Assumption~\ref{ass:marginal-belief-sets-f-strict}.

Next, for $p\in(0,1)$, the left-quantile satisfies
\[
F^{-1}(1-p)=-\gamma\left(\log\left(\frac{p}{\eta}\right)+1\right),
\]
and hence
\[
F'\bigl(F^{-1}(1-p)\bigr)=\frac{p}{\gamma},
\qquad\text{so}\qquad
\Bigl(F'\bigl(F^{-1}(1-p)\bigr)\Bigr)^{-1}=\frac{\gamma}{p}.
\]
Consequently,
\[
\inf_{p\in(0,1)} \Bigl(F'\bigl(F^{-1}(1-p)\bigr)\Bigr)^{-1}=\gamma,
\]
and thus condition \eqref{eq:vs-hessian-cond-F_k} in Assumption~\ref{ass:vs-hessian}
reduces to $\gamma_j>C_j$ (equivalently, $\gamma>C_j$).

To verify the Lipschitz requirement in Assumption~\ref{ass:G-fenchel}, note that for $s\in S$,
\[
\exp(-s/\gamma-1)\le \exp(-s_0/\gamma-1)=\frac{1}{\eta},
\]
so $|F'(s)|\le 1/\gamma$ on $S$. Since $F'(s)=0$ for $s<s_0$, it follows that $F$ is globally
$(1/\gamma)$-Lipschitz on $\mathbb R$, proving the first part of Assumption~\ref{ass:G-fenchel}.

Finally, the corresponding (regularized) best response is the weighted softmax map
\[
p_k^\star(\bs u)=
\frac{\eta_{j,k}\exp(u_k/\gamma_j)}{\sum_{\ell=1}^N \eta_{j,\ell}\exp(u_\ell/\gamma_j)},
\qquad k\in[N],
\]
as stated in Example~\ref{ex:exp-marginal}. In particular, $p_k^\star(\bs u)\in(0,1)$ for all $\bs u$.
As $\bs u$ ranges over a compact set (e.g.\ $\bs u\in[0,1]^N$),
then continuity of $\bs u\mapsto \bs p^\star(\bs u)$ implies that there exists $\varepsilon>0$ such that
$p_k^\star(\bs u)\in[\varepsilon,1-\varepsilon]$ for all $k$ and all such $\bs u$.
Therefore, condition (i) of Lemma~\ref{lem:suff} holds, and the second part of
Assumption~\ref{ass:G-fenchel} follows from Lemma~\ref{lem:suff}.

% \noindent\textbf{Verification of Example~\ref{ex:exp-marginal}.}
% Define $s_0 = \gamma(\log(\eta) - 1)$ and $ S = (s_0, +\infty)$. When $s \in S$, $F(s)=1-\eta\exp(-s/\gamma-1)$ and satisfies $F \in (0,1)$. We note that the map $s\mapsto 1-\eta\exp({-s/\gamma-1})$ is continuous on $S$ as exponential function is $C^\infty$. Next, we compute \[F'(s)=({\eta}/{\gamma}) \exp({-s/\gamma-1}),\] 
% which is strictly positive and thus $F$ is strictly increasing on $S$. At the junction point $s=s_0$ we have
% \[\lim_{s\downarrow s_{0}}F(s)=0=F(s_{0}),\] so
% $F$ is continuous on $\mathbb R$, which verifies Assumption~\ref{ass:marginal-belief-sets-f-strict}.  

% Note that $F^{-1}(1-p)=-\gamma(\log(p/\eta)+1)$, hence
% $F'(F^{-1}(1-p))=p/\gamma$, and its infimum over $p\in (0,1)$ is
% $\gamma$. Hence, \eqref{eq:vs-hessian-cond-F_k} of Assumption~\ref{ass:vs-hessian} is equivalent to the inequality $\gamma>C_j$. As $\gamma > 0$, we have $\exp(-s/\gamma -1) \leq 1/\eta$, implying that $|F'(s)| \leq 1/\gamma$. Thus, $F$'s are Lipschitz and satisfy the first part of Assumption~\ref{ass:G-fenchel}. 

% Finally, note that $p\opt(u)=   \exp (u_i / \gamma) / \sum_{k=1}^N  \exp (u_k / \gamma)$ as presented in Example~\ref{ex:exp-marginal}, and thus the condition $(i)$ of Lemma~\ref{lem:suff} is also satisfied. Consequently, the second requirement of Assumption~\ref{ass:G-fenchel} is also met via Lemma~\ref{lem:suff}. 

\noindent\textbf{Verification of Example~\ref{ex:uniform}.}
Fix $j\in[M]$ and $i\in[N]$. Define the boundary points
\[
s_{\min}=\theta_{j,i}-\gamma_j/2,
\qquad
s_{\max}=\theta_{j,i}+\gamma_j/2,
\]
and the open interval $S=(s_{\min},s_{\max})$.
Define the affine map
\[
\varphi_{j,i}(s)=\frac{s-s_{\min}}{\gamma_j}.
\]
Because $\varphi_{j,i}$ is affine, $F_{j,i}$ has three regimes:
\begin{itemize}
    \item $F_{j,i}(s)=0$ when $s\le s_{\min}$;
    \item $F_{j,i}(s)=\varphi_{j,i}(s)$ when $s_{\min}<s<s_{\max}$;
    \item $F_{j,i}(s)=1$ when $s\ge s_{\max}$.
\end{itemize}
On $S$ we have $0<F_{j,i}(s)<1$ and
\[
F_{j,i}(s)=\varphi_{j,i}(s)
\quad\text{with}\quad
\varphi_{j,i}'(s)=\frac{1}{\gamma_j}>0.
\]
Hence $F_{j,i}$ is strictly increasing on $S$.
Moreover, $F_{j,i}$ is continuous on $\mathbb{R}$ because it is affine on $S$
and constant outside, matching continuously at the two boundary points
$s_{\min}$ and $s_{\max}$. Therefore, $F_{j,i}$ satisfies the monotonicity/continuity
requirement (i.e., it is strictly increasing whenever $F_{j,i}(s)\in(0,1)$).

Next, for $t\in[0,1]$ the (left) quantile function is
\[
F_{j,i}^{-1}(t)=s_{\min}+\gamma_j t
=\theta_{j,i}-\gamma_j/2+\gamma_j t.
\]
In particular, for any $p\in(0,1)$ we have
\[
F_{j,i}^{-1}(1-p)=\theta_{j,i}+\gamma_j/2-\gamma_j p \in S,
\]
so the derivative at the relevant point is well-defined and equals
\[
F'_{j,i}\left(F_{j,i}^{-1}(1-p)\right)=\frac{1}{\gamma_j},
\qquad \forall p\in(0,1).
\]
Since the width $\gamma_j$ is common across $i\in[N]$, this implies
\[
\min_{i\in[N]}F'_{j,i}\big(F_{j,i}^{-1}(1-p)\big)=\frac{1}{\gamma_j}
\quad\Longrightarrow\quad
\inf_{p\in(0,1)}\left(\min_{i\in[N]}F'_{j,i}\left(F_{j,i}^{-1}(1-p)\right)\right)^{-1}
=\gamma_j.
\]
Consequently, condition \eqref{eq:vs-hessian-cond-F_k} in Assumption~\ref{ass:vs-hessian} holds provided
\[
\gamma_j > C_j,
\]
where $C_j$ denotes the right-hand side constant appearing in \eqref{eq:vs-hessian-cond-F_k}.

Finally, since $|F'_{j,i}(s)|\le 1/\gamma_j$ for all $s\in\mathbb{R}$, each $F_{j,i}$ is
Lipschitz with Lipschitz constant $L=1/\gamma_j$, establishing the first requirement of
Assumption~\ref{ass:G-fenchel}.
Moreover, for all $t\in[0,1]$,
\[
|F_{j,i}^{-1}(t)|
\le \max\{|\theta_{j,i}-\gamma_j/2|,\ |\theta_{j,i}+\gamma_j/2|\}
\le |\theta_{j,i}|+\gamma_j/2,
\]
so condition (ii) of Lemma~\ref{lem:suff} is satisfied (take
$B=\max_{i\in[N]}(|\theta_{j,i}|+\gamma_j/2)<\infty$).
Therefore, the second requirement of Assumption~\ref{ass:G-fenchel} follows via
Lemma~\ref{lem:suff}.

\noindent\textbf{Verification of Example~\ref{ex:pareto}.}
Suppose that $1<q\le2$ and fix parameters $\gamma>0$ and $\eta>0$. Define
\[
h(s) \coloneqq -\frac{s(q-1)}{\gamma q}+\frac{1}{q}.
\]
Let
\[
s_{\min}\coloneqq \frac{\gamma q}{q-1}\Big(\frac{1}{q}-\eta^{1-q}\Big),
\qquad
s_{\max}\coloneqq \frac{\gamma}{q-1},
\qquad
S\coloneqq (s_{\min},s_{\max}).
\]
On $S$ we have $h(s)\in(0,\eta^{1-q})$, and the CDF in Example~\ref{ex:pareto} satisfies
\[
F(s)=1-\eta h(s)^{\frac{1}{q-1}}\in(0,1),
\qquad s\in S,
\]
while $F(s)=0$ for $s\le s_{\min}$ and $F(s)=1$ for $s\ge s_{\max}$.
Moreover, for $s\in S$,
\[
F'(s)=\frac{\eta}{\gamma q}h(s)^{-\frac{q-2}{q-1}}>0,
\]
so $F$ is strictly increasing whenever $F(s)\in(0,1)$ and is continuous at $s_{\min}$ and $s_{\max}$.
Hence, $F$ satisfies Assumption~\ref{ass:marginal-belief-sets-f-strict}.

The left-quantile function is
\[
F^{-1}(t)=\frac{\gamma}{q-1}\left(1-q\left(\frac{1-t}{\eta}\right)^{q-1}\right),
\qquad t\in[0,1].
\]
Therefore, for any $p\in(0,1)$,
\[
F'\left(F^{-1}(1-p)\right)
=\frac{\eta}{\gamma q}\left(\frac{p}{\eta}\right)^{2-q}
=\frac{\eta^{q-1}}{\gamma q}p^{2-q}.
\]
Assumption~\ref{ass:vs-hessian} involves the reciprocal of this quantity along the upper tail:
\[
\left(F'\left(F^{-1}(1-p)\right)\right)^{-1}
=\frac{\gamma q}{\eta^{q-1}}p^{q-2}.
\]
Since $1<q\le2$ implies $q-2\in(-1,0]$, the map $p\mapsto p^{q-2}$ is decreasing on $(0,1)$, and thus
\[
\inf_{p\in(0,1)}\left(F'\left(F^{-1}(1-p)\right)\right)^{-1}
=\frac{\gamma q}{\eta^{q-1}}.
\]
Note that with indices, this yields
$\inf_{p\in(0,1)}\big(F'_{j,i}(F^{-1}_{j,i}(1-p))\big)^{-1}
= \gamma_j q/\eta_{j,i}^{q-1}$, so
$\min_{i \in [N]} \inf_{p\in(0,1)}(\cdot)=\gamma_j q/\max_{i \in [N
]} \eta_{j,i}^{q-1}$.
Consequently, \eqref{eq:vs-hessian-cond-F_k} is satisfied whenever
\[
\frac{\gamma q}{\eta^{q-1}}>C_j,
\]
equivalently $\gamma>(C_j/q)\eta^{q-1}$ (and in particular, under the normalization $\eta=1$, this reduces to
$\gamma>C_j/q$).

Finally, note that on $S$ we have $h(s)\le \eta^{1-q}$, so
\[
|F'(s)|\le \frac{\eta}{\gamma q}\big(\eta^{1-q}\big)^{-\frac{q-2}{q-1}}
=\frac{\eta^{q-1}}{\gamma q},
\]
and $F'(s)=0$ outside $S$. Hence, $F$ is Lipschitz and satisfies the first part of
Assumption~\ref{ass:G-fenchel}. Moreover, for all $t\in[0,1]$,
\[
|F^{-1}(t)|\le \frac{\gamma}{q-1}\big(1+q\eta^{1-q}\big),
\]
so condition (ii) of Lemma~\ref{lem:suff} holds. Consequently, the second requirement of
Assumption~\ref{ass:G-fenchel} follows from Lemma~\ref{lem:suff}.

}

\section{Additional Technical Results}
\label{app:additional-tech-results}
\begin{lemma}
If the cumulative distribution functions $(F_{j,i})_{i \in [N]}$ are continuous, then the game $\mc G([M]$, $\Delta_N,$ $(\bar u_j)_{j \in [M]})$ is a concave game, that is, 
% $\bar u_j(u_j; (\bs p_j, \bs P_{-j}); (F_{j,i})_{i \in [N]})$ 
$\bar u_j(\bs p_j, \bs P_{-j})$
is concave in $\bs p_j$ for all $\bs P_{-j} \in (\Delta_N)^{M-1}$ and $j\in[M]$.
\label{lem:concave-game}
\end{lemma}
\begin{proof}[{Proof of Lemma~\ref{lem:concave-game}}]
   The derivatives of $\bar u_j$ with respect to $ p_{j,i}$ are in the form of $u_j(\bs e_i; \bs P_{-j}) + F_{j,i}^{-1}(1- p_{j,i})$, which is non-increasing as left-quantile functions are non-decreasing, and $1-p_{j,i}$ is strictly decreasing in~$p_{j,i}$. This observation completes the proof. 
\end{proof}
\begin{lemma}\label{lem:uniform-dists-smooth-ne}
Fix $\sigma\in(0,1)$ and let $\bs u\in[0,1]^N$.
Let $\bs p=\operatorname{sparsemax}(\bs u/\lambda)$. If $
\lambda \ge \frac{\sigma N}{1-\sigma}$,
then $p_i\in [0,(N\sigma)^{-1}]$ for all $i\in[N]$.
\end{lemma}
\begin{proof}
Set $\bs v=\bs u/\lambda$, so $v_i\in[0,1/\lambda]$ for all $i$.
By \cite[Proposition 1]{ref:martins2016softmax} $\tau\in\R$ such that
\[
p_i=\max\{v_i-\tau,0\}\quad\text{for all }i,\qquad\text{and}\qquad \sum_{i=1}^N p_i=1.
\]
We bound $\max_i p_i$ by considering two cases.

\emph{Case 1: $\tau\ge 0$.}
Then $p_i\le v_i\le 1/\lambda$ for all $i$, hence $\max_i p_i\le 1/\lambda$.

\emph{Case 2: $\tau<0$.}
Since $v_i\ge 0$ for all $i$, we have $v_i-\tau>0$ for all $i$, so all coordinates are active and
$
\bs p=\bs v-\tau\mathbf 1$. Summing gives
\[
1=\sum_{i=1}^N p_i=\sum_{i=1}^N v_i - N\tau,
\qquad\text{so}\qquad
-\tau=\frac{1-\sum_{i=1}^N v_i}{N}\le \frac{1}{N}.
\]
Therefore, for every $i$,
\[
p_i=v_i+(-\tau)\le \frac{1}{\lambda}+\frac{1}{N},
\]
and hence $\max_i p_i\le \frac{1}{\lambda}+\frac{1}{N}$.

Combining both cases, we obtain
\[
\max_{i\in[N]} p_i \le \frac{1}{\lambda}+\frac{1}{N}.
\]
If $\lambda \ge \frac{\sigma N}{1-\sigma}$, then
\[
\frac{1}{\lambda}+\frac{1}{N}
\le
\frac{1-\sigma}{\sigma N}+\frac{1}{N}
=
\frac{1}{\sigma N},
\]
so $p_i\le (N\sigma)^{-1}$ for all $i\in[N]$.
\end{proof}

The following lemma below adapts Theorem 1 of \citep{natarajan2009persistency}, originally stated in the context of discrete choice theory, to mixed extension of finite games.
\begin{lemma}
\label{lem:distributional-regularization}
If $\mathcal{B} \subseteq \mc P(\R^N)$ is a marginal belief set of the form~\eqref{eq:marginal-belief-set} and if the underlying cumulative distribution functions $F_i,i \in[N]$, are continuous, then we have

\begin{equation}
\label{eq:frechet-reg-utility}
   \tilde u(\bs u; \mc B)= \max _{\bs p \in \Delta_N} \bs p^\top \bs u + \sum_{i=1}^N \int_{1-p_i}^1 F_i^{-1}(t) \diff t.
\end{equation}

Additionally, if $F_i(s)$ is strictly increasing in $s$ whenever $F_i(s) \in (0, 1)$ for all $i \in [N]$, then the unique maximizer of the convex program on the right-hand-side of~\eqref{eq:frechet-reg-utility}, $\bs p\opt \in\Delta_N$, satisfy $\bs p\opt \in T(\mu\opt;\bs u)$, 
where $\mu\opt $ represents an optimizer of the problem~\eqref{eq:opt-exp-payoff}.
\end{lemma}
Given a finite-dimensional vector space~$\mc V$ with norm $\|\cdot \|$, and a given $x \in \mc V$, the tangent cone $\textrm{TC}(x)$ is defined as the closure of the set of all rays emanating from $x$ and intersecting $\mc V$ on at least one other point.

\begin{lemma}{\cite[Proposition 2.8]{mertikopoulos2019learning}}
\label{lem:vs-hessian}
Assume $\bs P^\star$ is a Nash equilibrium of the game~$\mc G([M], \Delta_N, (u_j)_{j\in [M]})$. If $\bs H(\bs P,(u_j)_{j \in [M]})\prec 0$ on $ {\rm TC}(\bs P)$, for all $j \in [M]$ and $\bs P \in \mathcal (\Delta_N)^M$, then $\bs P^\star$ is globally variationally stable, and is the unique equilibrium of the game.
\end{lemma}

\begin{lemma}{\cite[Proposition 4.8 (ii)]{tacskesen2023semi}}
If $G_i$, $i \in [N]$, is Lipschitz continuous with Lipschitz constant $L >0$, then $- g(\bs p; \bs u; (G_i)_{i \in [N]})$ is $1/L$-strongly convex in $\bs p$. 
\label{lem:strong-convex-reg}
\end{lemma}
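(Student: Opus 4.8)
The plan is to strip $g$ down to the only term that carries curvature and then reduce to a one-dimensional statement about quantile functions. Writing, from~\eqref{eq:regularized-map},
\[
  -g(\bs p; \bs u; (G_i)_{i \in [N]}) \;=\; -\bs p^\top \bs u \;+\; \sum_{i=1}^N \psi_i(p_i), \qquad \psi_i(p) := -\int_{1-p}^1 G_i^{-1}(s)\,\diff s,
\]
I first observe that adding the affine map $\bs p \mapsto -\bs p^\top\bs u$ leaves the modulus of strong convexity unchanged, and that the remaining term is separable: $\sum_i \psi_i(p_i) - \tfrac{1}{2L}\|\bs p\|_2^2 = \sum_i\big(\psi_i(p_i) - \tfrac{1}{2L}p_i^2\big)$. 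Hence it suffices to prove that each scalar function $\psi_i : [0,1]\to\R$ is $1/L$-strongly convex; the separable sum then inherits $1/L$-strong convexity with respect to the Euclidean norm on $\R^N$, which is the assertion of the lemma.

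For the scalar claim I would rewrite $\psi_i$ via the substitution $u = 1-s$ as $\psi_i(p) = -\int_0^p r_i(u)\,\diff u$ with $r_i(u) := G_i^{-1}(1-u)$, which is non-increasing on $[0,1]$ since it is the composition of the non-decreasing quantile function with $u \mapsto 1-u$. The key step is the following consequence of the Lipschitz hypothesis:
\begin{equation}
\label{eq:quantile-growth}
  G_i^{-1}(b) - G_i^{-1}(a) \;\ge\; \tfrac{1}{L}\,(b-a) \qquad \text{for all } 0 < a \le b < 1.
\end{equation}
To establish~\eqref{eq:quantile-growth}, note that an $L$-Lipschitz cumulative distribution function is continuous, so for every $t\in(0,1)$ the infimum defining $G_i^{-1}(t)$ is attained and $G_i(G_i^{-1}(t)) = t$; applying $|G_i(x)-G_i(y)|\le L|x-y|$ to $x = G_i^{-1}(a)$ and $y = G_i^{-1}(b)$ (with $x \le y$ by monotonicity of $G_i^{-1}$) gives $b - a = G_i(y) - G_i(x) \le L(y - x)$, which is~\eqref{eq:quantile-growth}. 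Re-expressing~\eqref{eq:quantile-growth} through $u = 1-s$ shows that $\tilde r_i(u) := r_i(u) + \tfrac{1}{L}u$ is non-increasing on $(0,1)$, so $\psi_i(p) - \tfrac{1}{2L}p^2 = -\int_0^p \tilde r_i(u)\,\diff u$ is the primitive of the non-decreasing function $-\tilde r_i$, hence convex; that is, $\psi_i$ is $1/L$-strongly convex. (Intuitively this is just the bound $\psi_i'' = (G_i^{-1})'(1-\cdot) \ge 1/L$ coming from $G_i' \le L$, but the integral formulation avoids assuming differentiability.)

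I expect the only genuinely delicate point to be bookkeeping around non-smoothness: because an $L$-Lipschitz CDF need not be strictly increasing, $G_i^{-1}$ may jump and $\psi_i$ need not be twice differentiable, so the monotonicity/primitive argument above is preferable to a Hessian computation; one should also record that $\psi_i$ is finite on $[0,1]$ exactly when $G_i^{-1}$ is integrable there — which holds for every marginal belief set discussed in Section~\ref{sec:marginal-belief-sets} — and that~\eqref{eq:quantile-growth}, proved on $(0,1)$, extends to $[0,1]$ by continuity of $\psi_i$ on its (finite) domain. Assembling the reductions of the first paragraph with the scalar estimate then yields the stated $1/L$-strong convexity of $-g(\,\cdot\,;\bs u;(G_i)_{i\in[N]})$.
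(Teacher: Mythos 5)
The paper offers no proof of this lemma — it is imported verbatim from \citet{tacskesen2023semi} — so there is no in-paper argument to compare against; your proof is correct, complete, and follows the standard route for this kind of result. The reduction to separable one-dimensional regularizers, the inverse-growth bound $G_i^{-1}(b)-G_i^{-1}(a)\ge (b-a)/L$ extracted from the Lipschitz property of the CDF (using continuity to get $G_i(G_i^{-1}(t))=t$), and the primitive-of-a-monotone-function formulation — which correctly avoids assuming differentiability of $G_i^{-1}$ — together yield the claimed $1/L$-strong convexity, and your side remarks on integrability of the quantile function near $1$ address the only genuine subtlety.
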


\begin{lemma}
    If $\bs P^* \in (\Delta_N)^M$ is Nash equilibrium of the game $\mc G([M], \Delta_N, (u_j)_{j\in [M]})$, then 
    \begin{equation}\langle \nabla_{\bs p} u_j(\bs p; \bs P^*_{-j})|_{\bs p = \bs p_j^*},  \bs z_j\rangle  \leq 0 \quad \forall \bs z_j \in \textrm{TC}_j(\bs p_j^*),~j\in [M],
    \label{eq:ne-first-order}
    \end{equation}
    where $\textrm{TC}_j(\bs p)$ denotes the tangent cone to $\Delta_N$ at $\bs p \in \Delta_N$. 
\end{lemma}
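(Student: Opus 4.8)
The plan is to recognize \eqref{eq:ne-first-order} as nothing more than the first-order (variational inequality) characterization of the maximizer $\bs p_j^*$ of a linear function over the convex set $\Delta^N$, and then to upgrade the resulting inequality over feasible directions to an inequality over the whole tangent cone by homogeneity and continuity.

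First I would fix $j\in[M]$ and recall that $u_j(\bs p;\bs P^*_{-j}) = \bs p^\top \bs v_j$ with $\bs v_j := (u_j(\bs e_i;\bs P^*_{-j}))_{i\in[N]}$, so $u_j(\cdot\,;\bs P^*_{-j})$ is affine in its first argument with constant gradient $\nabla_{\bs p} u_j(\bs p;\bs P^*_{-j})|_{\bs p=\bs p_j^*} = \bs v_j$; in particular the gradient is well defined. Since $\bs P^*$ satisfies \eqref{eq:NE}, for every $\bs p\in\Delta^N$ we have $u_j(\bs p;\bs P^*_{-j})\le u_j(\bs p_j^*;\bs P^*_{-j})$, which by affineness is exactly $\langle \bs v_j,\bs p-\bs p_j^*\rangle\le 0$. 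Equivalently, along any feasible segment $\bs p_j^* + t(\bs p-\bs p_j^*)$, $t\in[0,1]$, the one-sided derivative at $t=0$ is nonpositive.

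Next I would pass to the tangent cone. By the definition of $\textrm{TC}_j(\bs p_j^*)$ recalled in the Notation paragraph, it is the closure of $\bigcup_{\bs p\in\Delta^N}\{t(\bs p-\bs p_j^*):t\ge 0\}$. For a direction of the form $\bs z_j = t(\bs p-\bs p_j^*)$ with $t\ge 0$ and $\bs p\in\Delta^N$, homogeneity of the inner product gives $\langle \bs v_j,\bs z_j\rangle = t\,\langle \bs v_j,\bs p-\bs p_j^*\rangle\le 0$ by the previous step. Since $\bs z_j\mapsto\langle \bs v_j,\bs z_j\rangle$ is a continuous (linear) functional that is nonpositive on this generating family of rays, it is nonpositive on the closure of their union, i.e.\ on all of $\textrm{TC}_j(\bs p_j^*)$. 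This is precisely \eqref{eq:ne-first-order}, and repeating the argument for each $j\in[M]$ finishes the proof. I do not expect any genuine obstacle here; the only points needing a word of care are the identification of the gradient (immediate from affineness) and the closure step (handled by continuity of the linear functional $\langle\bs v_j,\cdot\rangle$).
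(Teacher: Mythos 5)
Your argument is correct: the paper states this lemma without proof (it is the standard first-order optimality condition for a maximizer over a convex set), and your route---linearity of $u_j(\cdot\,;\bs P^*_{-j})$, the variational inequality $\langle \bs v_j,\bs p-\bs p_j^*\rangle\le 0$ from \eqref{eq:NE}, then extension to $\textrm{TC}_j(\bs p_j^*)$ by positive homogeneity and continuity of the linear functional---is exactly the intended one. One small remark: the paper later invokes this lemma for the smooth game $\mc G([M],\Delta^N,(\bar u_j)_{j\in[M]})$, whose payoffs are concave but not affine in $\bs p_j$; your proof covers that case too with no extra work if you replace the affineness step by the observation you already make in passing, namely that maximality of $\bs p_j^*$ forces the one-sided derivative $\frac{d}{dt}\big|_{t=0^+}\,u_j(\bs p_j^*+t(\bs p-\bs p_j^*);\bs P^*_{-j})=\langle\nabla_{\bs p}u_j(\bs p;\bs P^*_{-j})|_{\bs p=\bs p_j^*},\bs p-\bs p_j^*\rangle$ to be nonpositive for any differentiable payoff, after which the homogeneity-and-closure step is unchanged.
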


\end{document}